\definecolor{darkgreen}{rgb}{0,0.5,0}
\definecolor{darkblue}{rgb}{0,0,0.8}
\newtheorem{theorem}{Theorem}[section]
\newtheorem{lemma}[theorem]{Lemma}
\newtheorem{corollary}[theorem]{Corollary}
\theoremstyle{definition}
\newtheorem{definition}{Definition}[section]
\theoremstyle{remark}
\newtheorem*{remark}{Remark}
\newcommand{\bigO}{\ensuremath{O}}
\newcommand{\pr}[2][]{\ifthenelse{\isempty{#1}}{\ensuremath{\Pr\left(#2\right)}}{\ensuremath{\Pr\left(#2 \mid #1\right)}}}
\newcommand{\NH}{\ensuremath{N}}
\newcommand{\LOCAL}{\ensuremath{\mathsf{LOCAL}}\xspace}
\newcommand{\SLOCAL}{\ensuremath{\mathsf{SLOCAL}}\xspace}
\newcommand{\PLOCAL}{\ensuremath{\mathsf{P}\text{-}\mathsf{LOCAL}}\xspace}
\newcommand{\PRLOCAL}{\ensuremath{\mathsf{P}\text{-}\mathsf{RLOCAL}}\xspace}
\newcommand{\PSLOCAL}{\ensuremath{\mathsf{P}\text{-}\mathsf{SLOCAL}}\xspace}
\DeclareMathOperator{\poly}{poly}
\DeclareMathOperator{\polylog}{polylog}
\DeclareMathOperator{\polyloglog}{polyloglog}
\newcommand{\eps}{\varepsilon}
\newcommand{\set}[1]{\left\{#1\right\}}
\newcommand{\minDegree}{\ensuremath{\delta}}
\newcommand{\rank}{\ensuremath{r}}
\newcommand{\hide}[1]{}
\begin{document}

\begin{flushleft}

\vspace*{0.8cm}
{\huge\bf On the Complexity of Distributed Splitting Problems}
\vspace{1.0cm}
\end{flushleft}

\newcommand{\auth}[3]{\textbf{#1}$\,\,\,\cdot\,\,\,$#2$\,\,\,\cdot\,\,\,$#3\par\medskip}

\auth{Philipp Bamberger}
{University of Freiburg}
{philipp.bamberger@cs.uni-freiburg.de}
\auth{Mohsen Ghaffari}
{ETH Z\"{u}rich}
{ghaffari@inf.ethz.ch}
\auth{Fabian Kuhn}
{University of Freiburg}
{kuhn@cs.uni-freiburg.de}
\auth{Yannic Maus}
{University of Freiburg}
{yannic.maus@cs.uni-freiburg.de}
\auth{Jara Uitto}
{University of Freiburg, ETH Z\"{u}rich}
{jara.uitto@inf.ethz.ch}

\vspace{1cm}

\begin{abstract}
One of the fundamental open problems in the area of distributed
graph algorithms is the question of whether randomization is needed
for efficient symmetry breaking. While there are fast,
$\poly\log n$-time randomized distributed algorithms for all of the
classic symmetry breaking problems, for many of them, the best
deterministic algorithms are almost exponentially slower. The following basic local splitting problem,
which is known as the \emph{weak splitting} problem takes a central
role in this context: Each node of a
graph $G=(V,E)$ has to be colored red or blue such that each node of
sufficiently large degree has at least one node of each color among
its neighbors. Ghaffari, Kuhn, and Maus [STOC '17] showed that this
seemingly simple problem is complete w.r.t.\ the above fundamental
open question in the following sense: If there is an efficient
$\poly\log n$-time determinstic distributed algorithm for weak
splitting, then there is such an algorithm for all locally checkable
graph problems for which an efficient randomized algorithm
exists. In this paper, we investigate the distributed complexity of
weak splitting and some closely related problems and we in
particular obtain the following results:
\begin{itemize}
	\item We obtain efficient algorithms for special cases of weak splitting, where the graph is nearly regular. In particular, we show that if $\delta$ and $\Delta$ are
	the minimum and maximum degrees of $G$ and if
	$\delta=\Omega(\log n)$, weak splitting can be solved
	deterministically in time
	$O\big(\frac{\Delta}{\delta}\cdot\poly(\log n)\big)$. Further, if
	$\delta = \Omega(\log\log n)$ and $\Delta\leq 2^{\eps\delta}$, there is a randomized algorithm with time
	complexity $O\big(\frac{\Delta}{\delta}\cdot\poly(\log\log n)\big)$.
	\item We prove that the following two related problems are also
	complete in the same sense: (I) Color the nodes of a graph with
	$C\leq \poly\log n$ colors such that each node with a sufficiently
	large polylogarithmic degree has at least $2\log n$ colors among
	its neighbors, and (II) Color the nodes with a large constant
	number of colors so that for each node of a sufficiently large at
	least logarithmic degree $d(v)$, the number of neighbors of each
	color is at most $(1-\eps)d(v)$ for some constant $\eps>0$.
\end{itemize}
\end{abstract}
\setcounter{page}{0}
\thispagestyle{empty}
\newpage

\section{Introduction}
In this paper, we investigate the distributed complexity of the \emph{splitting} problem and its variants in the \LOCAL model.\footnote{The \LOCAL model \cite{linial92,peleg00} is a standard synchronous message passing model on graphs, where in every round, each node can send an arbitrarily large message to each of its neighbors.} This problem is an important distributed symmetry breaking problem; to set the stage, let us start with an overview of the \emph{splitting} problem and its significance.

\subsection{The Splitting Problem and its Significance}
Splitting can be seen as a basic algorithmic tool to develop distributed divide-and-conquer algorithms for graph problems. Let us introduce it by using the well-studied vertex coloring problem as a toy example. Consider an $n$-node graph $G=(V, E)$ with maximum degree $\Delta$. Our objective is to color $V$ using as few colors as possible so that no two neighbors receive the same color. The best known deterministic distributed algorithm which is efficient---i.e., runs in $\poly\log n$ rounds--- computes a $\Delta \cdot 2^{O(\frac{\log \Delta}{\log\log \Delta})}$ coloring \cite{BE11}. Naturally, we would like to do much better; ideally $O(\Delta)$ or even just $\Delta+1$ colors, see, e.g., Open Problem 11.3 in the book by Barenboim and Elkin \cite{barenboimelkin_book}. 

Let us define the splitting problem to be dividing the nodes of the graph into two groups, say red and blue, such that the number of neighbors of each node in each group is at most $\frac{\Delta}{2}(1+\eps)$ for some small value $\eps$.\footnote{Something weaker would suffice for this special application; it would be enough if each node has at most $\frac{\Delta}{2}(1+\eps)$ neighbors in its own color. This is a form of defective coloring, and it is a weaker requirement than splitting. But let us use the convenient context of the coloring problem to motivate the stronger problem of splitting.}  If we had access to an efficient deterministic algorithm for splitting whenever $\Delta = \Omega(\log n/\eps^2)$, by repeated applications of it, we could partition the graph into $\frac{\Delta}{K}$ induced subgraphs, for $K \in \poly\log n$, each with maximum degree at most $K(1+\eps)^{\log \Delta}$. Thus, setting $\eps=o(1/\log \Delta)$, each subgraph would have degree $(1+o(1)) K$. Since we have efficient distributed algorithms for coloring graphs of maximum degree $d$ using $d+1$ colors in $\tilde{O}(\sqrt{d}) + O(\log^* n)$ rounds \cite{fraigniaud16}, we would immediately get a $\Delta(1+o(1))$-coloring for the whole graph in $\poly\log n$ rounds deterministically. This would be a breakthrough for the distributed coloring problem, and it would resolve a long-standing open problem. 

Of course, the catch is that we do not know an efficient deterministic method for constructing such a splitting. We emphasize that it is a matter of efficient construction and not a matter of existence. It is not hard to see that such a split always exists for $\Delta = \Omega(\log n/\eps^2)$, which is the regime where we need splitting, and in a randomized way it can be constructed (w.h.p.) by independently coloring each node red or blue uniformly at random. This nicely highlights the significance of splitting for distributed graph coloring: While there is a trivial randomized distributed algorithm for splitting that does not even require the nodes to communicate, an efficient deterministic algorithm would lead to major progress on the deterministic distributed coloring problem.

It is worth noting that the natural edge variant of the splitting problem is proved to be extremely instrumental for the variant of the coloring problem where we want to color the edges. Edge splitting (also known as degree splitting) can be defined as coloring all edges red or blue such that each node has at most $\frac{\Delta}{2}(1+\eps)$ edges in each color. Ghaffari and Su~\cite{GS17} provided a $\poly\log n$-round algorithm for edge-splitting, which led to the first efficient deterministic distributed $2\Delta(1+o(1))$-edge-coloring algorithm, thus partially resolving Open Problem 11.4 of \cite{barenboimelkin_book}. A significantly more efficient edge splitting algorithm was later provided in \cite{DISC17}. The most classic variant of the distributed edge coloring problem asks for a solution with $2\Delta-1$ colors as this is the number of colors obtained by a simple sequential greedy algorithm. The first efficient ($\poly\log n$ time) deterministic distributed algorithms for the $(2\Delta-1)$-edge coloring problem were obtained recently \cite{FischerGK17,GHK18}. These results were achieved by solving a generalization of the edge splitting problem in low-rank hypergraphs. Even more progress was achieved later and currently the best known efficient deterministic edge coloring algorithm---which is also based on solving edge splitting on the network graph and on some related low-rank hypergraphs---provides a $ (1+o(1))\Delta$-edge coloring~\cite{ghaffari2018edge-coloring}, whenever $\Delta=\Omega(\log n)$, thus almost matching the Vizing bound for the number of colors~\cite[Section 17.2]{bondy1976graph}.

Unfortunately, the splitting problem for vertices turned out to be much harder. Perhaps fortunately, it is also far more significant than just its relation to the coloring problem. It is tightly connected to the fundamental and long-standing open question of whether randomization is necessary for efficient distributed symmetry breaking. Currently, for many problems (such as $(\Delta+1)$-coloring or computing a maximal independent set (MIS)), there is an exponential gap between the best randomized algorithm and the best deterministic algorithms and whether $\poly\log n$-time deterministic algorithms for these problems exist is considered to be one of the main open problems in the area of local distributed graph algorithms~\cite{barenboimelkin_book}. Due to results of Ghaffari et al.~\cite{GKM17}, we now know the splitting problem is complete with respect to this question in the following sense. If one can find a $\poly\log n$-time deterministic distributed algorithm for splitting, then one can derandomize any $\poly\log n$-time randomized distributed algorithm for \emph{any} locally checkable problem into a $\poly\log n$-time deterministic distributed algorithm for that problem. The simple splitting problem (which has a trivial $0$-round randomized algorithm) therefore exactly captures the complete power of randomization for obtaining $\poly\log n$-time algorithms for local distributed graph problems.

In fact, Ghaffari et al.\cite{GKM17} showed that a much more relaxed version of the splitting problem is already complete: It is enough to ensure that each node with degree at least $\Omega(\log n)$ has at least one neighbor in each color. They call this the \emph{weak splitting} problem and they showed that if one can find a $\poly\log n$-time deterministic distributed algorithm for weak splitting, that also implies that one can derandomize any $\poly\log n$-time randomized distributed algorithm for \emph{any} locally checkable problem into a $\poly\log n$-time deterministic distributed algorithm for that problem. Notice that weak splitting would not be sufficient for the method described above for the coloring problem. The proof of Ghaffari et al.\cite{GKM17} for using weak splitting goes a very different route, it uses weak splitting to build a certain network decomposition, and \cite{GHK18} shows how to use such network decompositions to derandomize randomized algorithms for any locally checkable problem.

To summarize, weak splitting---which might even look deceivingly simple---is all that we need so that we can obtain deterministic $\poly\log n$-time algorithms for locally checkable graph problems and thus to answer many of the outstanding open questions regarding efficient deterministic local graph algorithms. In this paper, we show some partial progress on our understanding of the weak splitting problem, and we also show that even some very relaxed variants of it can be proven to be complete (in the same completeness sense as weak splitting).

\subsection{Our Contributions}
\label{sec:contributions}

\paragraph{Algorithmic Results/Deterministic:} Our algorithmic contribution is a new weak splitting algorithm that is efficient in nearly regular graphs. Let $\Delta$ and $\delta$ denote the maximum and minimum degrees of the given graph. If $\delta=\Omega(\log n)$, we give a deterministic algorithm that solves weak splitting in $O(\frac{\Delta}{\delta} \poly\log n)$ rounds. Hence, for all graphs that are somewhat regular and satisfy $\frac{\Delta}{\delta} \leq \poly\log n$, we obtain a deterministic $\poly\log n$-time weak splitting algorithm. To state the result more formally, and to open way for our other results, let us phrase the splitting problem in a more general format.

We first introduce some notation and terminology. Let us consider a bipartite graph $B=(U\cup V, E)$ where we view nodes of $U$ as \emph{constraint nodes} and nodes of $V$ as \emph{variable nodes}. Equivalently, we can think of $U$ as vertices of a hypergraph and $V$ as the hyperedges of it. Throughout the paper, when using a bipartite graph $B=(U\cup V, E)$, we refer to $U$ as the left side, $V$ as the right side, and we use $\delta_B$ and $\Delta_B$ to denote the minimum and maximum degree of nodes in $U$ and we use $\rank_B$ to denote the maximum degree of nodes in $V$, where $\rank_B$ stands for the rank of the corresponding hypergraph, i.e., the maximum number of vertices in a hyperedge. We omit the subscripts if the corresponding graph is clear from the context.  Weak splitting can then formally be defined as follows:

\begin{definition}[Weak Splitting]\label{def:weaksplit}
Let $B=(U\cup V,E)$ be a bipartite graph. A weak splitting of $B$ is a $2$-coloring of the nodes in $V$ such that every node in $U$ has at least one neighbor of each color.
\end{definition}

Notice that the splitting problem on general graphs $G=(V_G,E_G)$ discussed above can be phrased as such a bipartite/hypergraph problem: for each node $v\in V_G$, make two copies of it, one for $v_L\in U$ and one for $v_R\in V$. For each edge $\{u, v\}\in E_G$, we connect $v_L$ to $u_R$ and $v_R$ to $u_L$. Distinguishing these left $U$ and right $V$ sides allows us to distinguish between the constraints and the variables of the problem, and facilitates our discussions in several places. We prove the following.

\begin{theorem}\label{thm:detsplitting}
There is a deterministic distributed algorithm that in any $n$-node bipartite graph $B=(U\cup V, E)$ in which the minimum degree of the nodes in $U$ is $\delta \geq 2\log n$, solves the weak splitting problem in \smash{$O\big(\frac{r}{\delta}\cdot\poly\log n\big)$} rounds.
\end{theorem}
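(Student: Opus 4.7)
The plan is to derandomize the standard probabilistic existence argument. Under a uniformly random $2$-coloring of $V$, each $u\in U$ is monochromatic with probability at most $2\cdot 2^{-d(u)}\leq 2^{1-\delta}\leq 2/n^2$, where $d(u)$ denotes the degree of $u$ in $B$; by a union bound the expected number of violated $u$'s is $O(1/n)$, and so a valid weak splitting exists.

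To make this deterministic, I would first replace the uniform distribution by a $k$-wise independent $2$-coloring of $V$ with $k=\lceil 2\log n\rceil$, realized from a seed of length $s=O(\log^2 n)$ via a standard construction. Since the bad event at $u$ is already implied by any $k$ of its neighbors agreeing on a color, the same $\leq 2^{1-k}\leq 2/n^2$ bound still applies under $k$-wise independence, so a valid seed exists. The algorithm then fixes the $s$ seed bits one at a time via the method of conditional expectations, always choosing the bit value that keeps the conditional expected number of bad $u$'s strictly below $1$; after all bits are fixed, the induced coloring of $V$ is a valid weak splitting.

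The main distributed obstacle, and the source of the $\frac{r}{\delta}$ overhead, is evaluating and comparing the two conditional expectations at every seed-bit step in the absence of an \emph{a priori} deterministic network decomposition. Each $u$ can compute its own contribution locally from its $V$-neighborhood and the current seed prefix, but the algorithm must aggregate $|U|$ such contributions. I would handle this by decomposing the derandomization into $O(r/\delta)$ phases, each operating on a sub-instance of effective rank $O(\delta)=O(\log n)$: partition the edges so that each $u$ exposes a block of $\Theta(\delta)$ of its $V$-neighbors to one phase and each $v$ appears in only $O(1)$ blocks per phase. Within a rank-$O(\log n)$ sub-instance the aggregation becomes tractable---each $v$ needs to collect evaluations from only $O(\log n)$ neighboring constraints---so the full conditional-expectation update over the $O(\log^2 n)$ seed bits of that phase can be implemented in $\poly\log n$ rounds deterministically. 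Over $O(r/\delta)$ phases this yields the claimed $O\!\big(\frac{r}{\delta}\cdot\poly\log n\big)$ bound. The hardest step is building and certifying the phase decomposition: one must argue, using a pigeonhole/averaging argument that exploits $d(u)\geq\delta$ and $d(v)\leq r$, that the $|U|\cdot\delta$ exposed edges can be spread across $O(r/\delta)$ phases so that each $u$ is served by the phase in which its $\Theta(\delta)$-block is exposed, and that the conditional-expectation guarantee within each phase is then enough to produce a $2$-colored witness pair among that block for every $u$.
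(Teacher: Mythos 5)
There is a genuine gap in this proposal, and it sits exactly where you flag ``the hardest step,'' plus in a place you don't flag at all.

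\textbf{Gap 1: fixing shared seed bits requires global aggregation.} The method of conditional expectations over a shared $O(\log^2 n)$-bit seed forces, at each of the $s$ steps, a comparison $\sum_{u\in U}\Pr[\text{$u$ bad}\mid b_i=0,\text{prefix}]$ versus the same sum with $b_i=1$. You correctly observe that ``the algorithm must aggregate $|U|$ such contributions,'' but you then claim a rank-$O(\log n)$ sub-instance makes this tractable because each $v$ only needs to hear from $O(\log n)$ neighboring constraints. That is not where the difficulty lies: the $|U|$ contributions are scattered over the \emph{entire} graph, and deciding a single global bit requires $\Omega(\mathrm{diameter})$ rounds in \LOCAL regardless of the rank. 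Restricting the rank within a phase does nothing to shrink the diameter or let nodes agree on a global choice in $\poly\log n$ rounds. The way to avoid global aggregation entirely is to derandomize by fixing the \emph{output} bit of one $v\in V$ at a time (pessimistic estimators at a single vertex, which only touch its $2$-hop neighborhood), scheduled by a proper coloring of $B^2$---this is the $\SLOCAL$-to-$\LOCAL$ conversion from \cite{GHK18,GHKfull}, costing $O(\chi(B^2))=O(\Delta\rank)$ rounds. That is what the paper's Lemma~\ref{lem:Easyderand} does, and it sidesteps aggregation altogether. The seed-bit route is a fundamentally different and, in general graphs, infeasible plan.

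\textbf{Gap 2: the phase decomposition is only shown to exist, not to be constructible.} Your ``pigeonhole/averaging argument'' at best shows that a partition of exposed edges into $O(\rank/\delta)$ phases with rank $O(\delta)$ per phase exists; it gives no $\poly\log n$-round deterministic \LOCAL algorithm to construct it. Producing such a balanced edge partition so that each $u$'s $\Theta(\delta)$-block lands in one phase and each $v$'s edges are spread roughly evenly across phases is itself a (directed) degree-splitting problem, i.e., of the same flavor and difficulty as the problem you are trying to reduce to, and it is exactly where the paper invokes an existing nontrivial tool. Concretely, the paper runs $k\approx\log(\delta/\log n)$ iterations of directed degree splitting (Theorem~\ref{thm:DISC17}, \cite{DISC17}) on $B$, orienting edges and keeping only those directed into $V$; Lemma~\ref{lem:degShrink} shows this simultaneously drives $\delta$ down to $\Theta(\log n)$ and $\rank$ down to $O(\frac{\rank}{\delta}\log n)$, after which Lemma~\ref{lem:derand}/Lemma~\ref{lem:Easyderand} finish in $O(\bar\rank\cdot\log n)=O(\frac{\rank}{\delta}\log^2 n)$ rounds. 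Your outline neither names nor replaces this tool, so the construction step is unsupported.

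In short: replace the $k$-wise-independent seed derandomization with the $B^2$-coloring/$\SLOCAL$-to-$\LOCAL$ scheme so that the update is local, and replace the existential pigeonhole partition with iterated directed degree splitting (or an equivalent efficiently constructible splitter) so the rank actually shrinks in $\poly\log n$ rounds. With those two substitutions you recover essentially the paper's argument.
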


In addition, in Theorem \ref{thm:DeltaLargersixr}, we show that if $\delta\geq 6\rank$, the above problem can even be solved in $\poly\log n$ time without any additional requirement on $\delta$ (i.e., without requiring that $\delta=\Omega(\log n)$). However, this result can not be applied to general graphs as converting a graph to a bipartite graph as described above will always yield a bipartite graph with $\delta\leq\rank$.

Further, in Theorem \ref{thm:girth10det}, we prove that if the bipartite graph $B$ has girth at least $10$, the requirement on $\delta$ can be improved to $\delta=\Omega(\sqrt{\log n})$.

The above results provide only a partial progress on our understanding of the weak splitting problem and they certainly fall short of the ultimate goal of enabling us to derandomize any $\poly\log n$-round randomized algorithm for any locally checkable problem to a $\poly\log n$-round deterministic algorithm for it. If we could strengthen the above result in one of two directions, that would be a breakthrough: (A) If we could extend this weak splitting to all graphs, we would get the aforementioned desired derandomization algorithm, thus resolving many classic open problems of distributed graph algorithms, including the first three in the Open Problems section of the book of Barenboim and Elkin~\cite{barenboimelkin_book};  (B) Alternatively, if we could change this weak splitting algorithm for nearly regular graphs to a splitting algorithm for nearly regular graphs, then we would obtain a $\Delta(1+o(1))$-coloring algorithm in $\poly\log n$ rounds, hence resolving Open Problem 11.3 of \cite{barenboimelkin_book}. We think that the partial progress that this paper provides may still be a concrete step in approaching these ultimate goals. 

\vspace*{-3mm}
\paragraph{Algorithmic Results/Randomized:} In addition, we study randomized algorithms for the weak splitting problem. The randomized complexity of the problem might be interesting in the context of the recent interest in understanding the complexity landscape of randomized sublogarithmic-time distributed graph algorithms. In \cite{chang2017time}, Chang and Pettie show that for any locally checkable labeling (LCL) problem\footnote{LCL problems \cite{naor95} are graph problems, where the output of every node is a label from a bounded alphabet and the validity of a solution can be checked a deterministic constant-time distributed algorithm.} for which a randomized algorithm with running time $o(\log_{\Delta}n)$ exists, this algorithm can be sped up to run in the time for solving generic instances of problems where the existence of a solution follows from a polynomially relaxed version of the Lov\'{a}sz Local Lemma (LLL). The best known generic randomized algorithm for such LLL problems on bounded-degree graphs---and in fact also for graphs with $\poly\log\log n$ degrees---is $2^{o(\sqrt{\log^\eps\log n})}$~\cite{ghaffari-lll,GHK18}, for any constant $\eps>0$. Moreover, it is conjectured in \cite{chang2017time} that this complexity should be $\poly\log\log n$ or even $O(\log\log n)$. The weak splitting problem---and also the splitting problem more generally---is a particularly simple and seemingly well-behaved problem that falls into this class of LLL problems and it would therefore be interesting to understand whether at least weak splitting can be solved in time $\poly\log\log n$ in bounded-degree graphs or in graphs with degrees at most $\poly\log\log n$. We make some partial progress and prove that this is at least true for bipartite graphs $B=(U\cup V, E)$, where the minimum degree $\delta$ in $U$ is at least $\Omega(\log\log n)$.

\begin{theorem} \label{thm:mainweaksplit} 
Consider an arbitrary $n$-node bipartite graph $B=(U\cup V, E)$ where the minimum degree in $U$ is $\delta\geq c\log(\rank\log n)$ for a sufficiently large constant $c>1$. Then, there is a randomized distributed algorithm that in $O\big(\frac{\rank}{\delta}\cdot\poly(\log (\rank\log n))\big)$ rounds solves the weak splitting problem on $B$.
\end{theorem}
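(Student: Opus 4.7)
I will adapt the standard \emph{random sampling + shattering + deterministic cleanup} template used for LLL-type problems: one round of random coloring satisfies almost all of the weak-splitting constraints, and Theorem~\ref{thm:detsplitting} is invoked to clean up the small residual components.

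\emph{Random phase.} For each $u\in U$, fix an arbitrary subset $N'(u)\subseteq N(u)$ of size exactly $\delta$; any 2-coloring of $V$ placing both colors in every $N'(u)$ is already a valid weak splitting of $B$. Now color each $v\in V$ independently uniformly red/blue and call $u\in U$ \emph{bad} if $N'(u)$ is monochromatic, so
\[
\Pr[u\text{ bad}] \;\le\; 2\cdot 2^{-\delta} \;\le\; 2(r\log n)^{-c}.
\]
Two bad events are independent unless $N'(u)\cap N'(u')\neq\emptyset$, so in the resulting dependency graph on $U$ every vertex has degree at most $\delta(r-1)<\delta r$. Note that if $\delta\ge 2\log n$, a union bound already shows no $u$ is bad w.h.p. and Phase~1 alone suffices, so WLOG $\delta<2\log n$ in what follows.

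\emph{Shattering.} A standard 2-tree counting argument bounds the number of rooted connected $k$-subsets of pairwise dependent $U$-vertices by $(e\delta^2 r^2)^k$, while the probability that such a 2-tree is entirely bad is at most $(2\cdot 2^{-\delta})^k$ by independence within the 2-tree. For $c$ a sufficiently large constant we have $e\delta^2 r^2\cdot 2\cdot 2^{-\delta}\le 1/2$ (using $\delta\ge c\log(r\log n)$, hence $\delta$ dominates $\log r+\log\delta$), so by a union bound over roots every connected component of bad $U$-vertices in the dependency graph has size $|U_C|=O(\log n)$ with high probability.

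\emph{Deterministic cleanup and main obstacle.} For each bad component $C$ let $U_C$ be its bad vertices and $V_C=\bigcup_{u\in U_C}N'(u)$; consider $B_C=(U_C\cup V_C,E_C)$. Then $N:=|U_C\cup V_C|\le(1+\delta)|U_C|=O(\delta\log n)=O(\log^2 n)$, the minimum $U$-degree in $B_C$ is still $\delta$, the rank is at most $r$, and $\delta\ge c\log(r\log n)\ge 2\log N$ for $c$ large enough. Components in different $C$'s have disjoint $V_C$'s (by definition of the dependency graph), so they are far apart in $B$ and can be handled in parallel; invoking Theorem~\ref{thm:detsplitting} on every $B_C$ therefore takes
\[
O\!\left(\tfrac{r}{\delta}\,\poly\log N\right) \;=\; O\!\left(\tfrac{r}{\delta}\,\poly\log\log n\right) \;\le\; O\!\left(\tfrac{r}{\delta}\,\poly(\log(r\log n))\right),
\]
which matches the target. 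The genuine subtlety is that overwriting the Phase~1 color of some $v\in V_C$ could invalidate a previously satisfied $u\notin U_C$ that also has $v\in N'(u)$. I would handle this by strengthening the notion of ``good'' to \emph{strongly good}---requiring $u$ to have enough neighbors of each color in $N'(u)$ so that the slack exceeds the maximum number of $u$'s sampled neighbors that can lie in any $V_C$---and using a Chernoff plus union-bound argument (again for $c$ large) to show that every non-bad $u$ is strongly good w.h.p. Freezing every $v\in V_C$ that is sampled by some strongly-good $u$ then reduces the cleanup to another weak-splitting instance satisfying the hypotheses of Theorem~\ref{thm:detsplitting}, yielding the overall $O((r/\delta)\poly(\log(r\log n)))$-round complexity.
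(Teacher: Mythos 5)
The core issue you flag at the end --- that recoloring $v\in V_C$ in the cleanup phase could destroy the satisfaction of a previously-good $u\notin U_C$ with $v\in N'(u)$ --- is exactly right, and your sketched fix does not close it. The ``strongly good'' definition requires the slack of $u$ to exceed ``the maximum number of $u$'s sampled neighbors that can lie in any $V_C$,'' but this quantity has no a priori bound: it is a random variable that depends on which $u'$ become bad, and since the bad events $\{u' \text{ bad}\}$ are positively/negatively correlated with the coloring of $N'(u)$ through shared variable nodes, one cannot simply Chernoff-bound it with the independence you are relying on. Moreover, after freezing the $v$'s sampled by strongly-good nodes, you would have to show that each bad $u'\in U_C$ retains at least $2\log n_H$ \emph{unfrozen} neighbors in $V_C$ so that the hypotheses of \Cref{thm:detsplitting} still hold on the residual --- again an unargued step, and one that seems quite delicate given that a bad $u'$'s neighbors can be heavily shared with good nodes.

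The paper avoids this entire conflict by design rather than by repair. In the shattering phase each $v\in V$ is colored red with probability $1/4$, blue with probability $1/4$, and \emph{left uncolored} with probability $1/2$; a subsequent uncoloring step forces any $u\in U$ with more than $3/4$ colored neighbors to uncolor all of them. The net effect is that every $u\in U$ ends the phase with at least a $1/4$ fraction of uncolored neighbors, so the residual graph $H$ (unsatisfied $U$-nodes plus uncolored $V$-nodes) has $\delta_H\ge\delta/4$. The deterministic cleanup then colors \emph{only} the uncolored nodes, so already-assigned colors are never touched and satisfied constraints can never be invalidated --- the very failure mode you were trying to patch simply does not arise. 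The shattering bound that components of $H$ have size $\mathrm{poly}(\rank,\log n)$ comes from a generic shattering lemma (\Cref{thm:shattering}) applied to the hypergraph line graph, rather than from the $2$-tree/LLL-style counting you use, but that part is interchangeable. The genuine gap in your write-up is the missing partial-coloring-plus-uncoloring mechanism; without it the cleanup phase is not a well-posed sub-instance, and the ``strongly good'' band-aid as stated does not make it one.
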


Similar to the case of deterministic algorithms, we show that slightly stronger results hold for special cases, also for randomized algorithms. As long as $\delta\geq6\rank$, the problem can always be solved in time $\poly\log\log n$ (Theorem \ref{thm:DeltaLargersixr}) and if the bipartite graph has girth at least $10$, the problem can be solved in time $O\big(\frac{\rank}{\delta}\cdot\poly\log\log n\big)$ even if we only require that $\delta=\Omega(\sqrt{\log\log n})$ (Theorem \ref{thm:girth10rand}).

\vspace*{-3mm}
\paragraph{Hardness Results:} To strengthen our understanding of the splitting problem, we also investigate it from the (conditional) hardness side. Our goal here is to identify weaker and alternative forms of splitting, which are still complete in the above sense. Let us first briefly introduce the necessary formal background. Let \PLOCAL and \PRLOCAL be the classes of $\poly\log(n)$-\emph{locally checkable}\footnote{A graph problem is called $d$-locally checkable if given a solution, there is a deterministic $d$-round \LOCAL algorithm in which every node outputs ``yes'' if and only if the given solution if a valid solution for the given problem \cite{fraigniaud13}. LCL problems are a special important case of $O(1)$-locally checkable problems.}  graph problems that can be solved by $\poly\log(n)$-time deterministic and $\poly\log(n)$-time randomized \LOCAL algorithms, respectively. We say that a graph problem $\mathcal{P}$ is \emph{\PRLOCAL-complete} if it is in \PRLOCAL and if a $\poly\log(n)$-time deterministic \LOCAL algorithm for $\mathcal{P}$ would imply that $\PLOCAL=\PRLOCAL$. In \cite{GKM17}, Ghaffari, Kuhn, and Maus show that the weak splitting problem on bipartite graphs $B=(U\cup V,E)$ is \PRLOCAL-complete even if the minimum degree in $U$ is at least polylogarithmic in $n$.\footnote{To be precise, in \cite{GKM17}, completeness was not shown w.r.t.\ to \PRLOCAL, but w.r.t.\ a class \PSLOCAL of efficient deterministic \emph{sequential} local algorithms. However, as shown in \cite{GHK18}, for $\poly\log(n)$-locally checkable problems, these two classes coincide.} We define the following two multicolor variants of the splitting problem, which are much more relaxed, and we show that they are still complete. As splitting, both problems are defined on a bipartite graph $B=(U\cup V,E)$.

\begin{definition}[$(C,\lambda)$-Multicolor Splitting] 
Given a bipartite graph $B=(U\cup V,E)$ and parameters $C\geq 2$ and $\lambda\geq  2/C$, a \textit{$(C,\lambda)$-multicolor splitting} of $B$ is a coloring of the nodes in $V$ with $C$ colors such that each $u\in U$ of degree has at most $\lceil\lambda\cdot\deg(u)\rceil$ neighbors of each color.
\end{definition}

\begin{definition}[$C$-Weak Multicolor Splitting] 
Given a bipartite Graph $B=(U\cup V, E)$, a \textit{$C$-weak multicolor splitting} of $B$ is a coloring of the nodes in $V$ with $C\geq 2\log n$ colors such that each node $u\in U$ of degree $\deg(u)\geq 2(\log n+1)\ln n$ sees at least $2\log n$ different colors.
\end{definition}

We note that here (and throughout the rest of the paper), we use $\log x$ to refer to $\log_2 x$ and we use $\ln x$ to refer to the natural logarithm.  In \Cref{thm:weakmulticolorsplit}, we show that the $C$-weak multicolor splitting problem is \PRLOCAL-complete for any $C\leq\poly\log n$, even if the minimum degree of nodes in $U$ is $\log^c n$ for an arbitrary constant $c\geq 2$. Further, in \Cref{thm:multicolorsplit}, we prove that $(C,\lambda)$-multicolor splitting is \PRLOCAL-complete as long as the minimum degree of nodes in $U$ is at least $\alpha\ln^2 n$ for a sufficiently large constant and as long as $C\leq \poly\log n$ and $\lambda \leq C^{-\eps}$ for some constant $\eps>0$.

\vspace*{-3mm}
\paragraph{Parameter Preserving Hardness Results:} The hardness results mentioned above, and also those of \cite{GKM17}, provide reductions from graph or hypergraph problems where the degrees might become very large, up to $n$, even if the degree in the original graph was somewhat small. Our last contribution is to provide some degree preserving reductions from the problems of maximal independent set and $\Delta(1+o(1))$-coloring to the splitting problem. In \Cref{sec:parameter-preserving-reductions}, we show that if the latter can be solved in time $T_{n, \Delta}$ in graphs of degree $\Delta$ and $n$ nodes, then the former problems can also be solved in $\poly(\log n) \cdot T_{n, \Delta}$. Hence, for instance, an algorithm with complexity $\Delta^{1-\eps} \poly(\log n)$ for the splitting problem, for any constant $\eps>0$, would yield an MIS algorithm with complexity $\Delta^{1-\eps} \poly(\log n)$, which would be better than all known algorithms whenever $\Delta \in (\poly(\log n), 2^{O(\sqrt{\log n})})$.

\hide{

\ym{Need something that says what an instance is? LLL solvability?}

}

%%% Local Variables:
%%% mode: latex
%%% TeX-master: "bipartite_degree_splitting"
%%% End:

\vspace*{-2mm}
\section{Algorithms for Weak Splitting}
\label{sec:weaksplitting}
\vspace*{-2mm}

We next describe our deterministic and randomized algorithms for the weak splitting problem.

\vspace*{-2mm}
\subsection{Basic Deterministic Weak Splitting Algorithm}
\label{sec:derand}
\vspace*{-2mm}
If the minimum degree $\delta$ of the nodes on the left hand side is at least $2\log n$, a union bound shows that the following simple randomized algorithm solves the weak splitting problem w.h.p. 
\begin{center}\textit{Color each node on the right hand side red$/$blue with probability $1/2$ each.}\end{center}
Using the derandomization results from \cite{GHK18} this can be derandomized given a suitable coloring of the input graph. The formal result is as follows. 
\begin{lemma}\label{lem:Easyderand}
There is a deterministic algorithm to compute a weak splitting in time $O(\Delta\cdot \rank)$ if $\delta\geq2\log n$.
\end{lemma}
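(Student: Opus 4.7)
The plan is to derandomize the trivial randomized algorithm—each $v\in V$ is colored red or blue independently with probability $1/2$—by plugging it into the coloring-based derandomization framework of \cite{GHK18}.

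First I would verify that the random algorithm succeeds w.h.p. For any fixed $u\in U$, the bad event that all of $u$'s neighbors in $V$ receive the same color has probability exactly $2\cdot 2^{-\deg(u)}\leq 2^{1-2\log n}=2/n^2$, using $\deg(u)\geq\delta\geq 2\log n$. A union bound over the at most $n$ nodes in $U$ gives total failure probability at most $2/n$, so a uniformly random $2$-coloring of $V$ is a weak splitting with high probability.

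Next I would make the dependency structure explicit. The bad event $A_u$ attached to $u\in U$ depends only on the random bits assigned to $u$'s $V$-neighbors, so $A_u$ and $A_{u'}$ are dependent iff $u$ and $u'$ share a common neighbor in $V$, i.e.\ iff they are at distance $2$ in $B$. Each $u\in U$ has at most $\Delta$ neighbors in $V$, each of which has at most $\rank$ neighbors in $U$, so the dependency graph $H$ on the bad events has maximum degree at most $\Delta\cdot\rank$. A proper coloring of $H$ with $\Delta\cdot\rank+1$ colors is the ``suitable coloring of the input graph'' referred to in the statement; it can be computed by each node locally from its $O(1)$-hop neighborhood in $B$ (or assumed as part of the input).

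Now I would invoke \cite{GHK18}: replace the independent random bits with a sample from an $O(\log n)$-wise independent space of polynomial size, which preserves the Chernoff-style tail needed for the union bound above, and apply the method of conditional expectations color class by color class of $H$. Within a single color class of $H$ the bad events are pairwise variable-disjoint, so the conditional expected number of bad events can be computed and minimized in a constant number of \LOCAL{} rounds per seed bit by purely local aggregation, and after iterating through all color classes the seed—and therefore the final $2$-coloring of $V$—is completely determined. This yields the claimed round complexity of $O(\Delta\cdot\rank)$.

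The main obstacle is essentially bookkeeping: checking that the bad-event probabilities and their conditional analogues under the $O(\log n)$-wise independent distribution remain small enough for the union bound, and confirming that the GHK18 framework's preconditions (bounded dependency degree, locally-computable bad-event indicators, polynomial-size sample space) are all met. Once those are verified the derandomization is black-box, and the only round-complexity cost is the one round per color class of $H$.
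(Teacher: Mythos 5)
Your route is essentially the same as the paper's: run the trivial $0$-round randomized $2$-coloring of $V$, observe the per-$u$ failure probability is at most $2/n^2$, and derandomize via the \cite{GHK18} framework by first producing a proper coloring of the dependency graph (which is $B^2$ restricted as needed, with maximum degree $\Delta\cdot\rank$) and then fixing the seed color class by color class. The paper routes this formally through an \SLOCAL$(2)$ algorithm and \cite[Proposition 3.2]{GHKfull}, while you unpack the conditional-expectation mechanism more explicitly, but the structure and the derandomization black-box are the same.

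There is, however, one concrete error in your argument: you claim the proper coloring of the dependency graph ``can be computed by each node locally from its $O(1)$-hop neighborhood in $B$ (or assumed as part of the input).'' Neither option is valid. A proper vertex coloring cannot be computed deterministically in $O(1)$ \LOCAL rounds---it requires symmetry breaking and hence $\Omega(\log^* n)$ rounds even on a path---and the lemma statement is unconditional, so the coloring cannot be taken as given. The paper explicitly invokes a linear-in-degree coloring algorithm (e.g.\ \cite{BEK14}) to produce an $O(\Delta\rank)$-coloring of $B^2$ in $O(\Delta\rank+\log^* n)$ rounds, and then observes that the $\log^* n$ term is absorbed because $\Delta\ge\delta\ge 2\log n$ implies $\Delta\rank\ge\log^* n$. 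You need this step: once it is in place the total runtime is $O(\Delta\rank+\log^* n)=O(\Delta\rank)$ as claimed, and the rest of your argument goes through.
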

\begin{proof}
Let $B=(U\cup V, E)$ be an instance of the weak splitting problem with minimum degree $\delta\geq 2\log n$ on the left hand side. 
In the aforementioned randomized algorithm the probability that some $u\in U$ has a monochromatic neighborhood is
\[\pr{\text{all neighbors are red}}+\pr{\text{all neighbors are blue}}=\frac{2}{2^{\text{deg}(u)}}\leq\frac{2}{2^\delta}\leq\frac{2}{n^2}~.\]
With a union bound over all nodes in $U$ we obtain that the probability that there is a node with a monochromatic neighborhood is at most $2/n<1$ and a node can check whether  it has a monochromatic neighborhood by looking at its $1$-hop neighborhood. 
Hence, by \cite[Theorem III.1]{GHK18}, this randomized $0$-round algorithm with checking radius $1$ can be derandomized into an $\SLOCAL(2)$-algorithm. By \cite[Proposition 3.2]{GHKfull} this can be transformed into an $O(C)$ \LOCAL algorithm if a $C$-coloring of $B^2$, i.e., the graph that one obtains from $B$ by additionally connected any two nodes in distance at most two to each other, is  given. 
As the maximum degree of $B^2$ is $\Delta\rank$ we can compute the necessary coloring with $\bigO(\Delta\rank)$ colors and in $\bigO(\Delta\rank+\log^*n)$ rounds, e.g., with the algorithm from \cite{BEK14}. Thus the total runtime can be bounded as
$\bigO(\Delta\rank+\log^*n)=O(\Delta\rank)$ as $\Delta\geq \delta\geq 2\log n$.
\end{proof}

\subsection{Deterministic Degree-Rank Reduction}
\label{sec:degreeRankI}

\begin{lemma}\label{lem:derand}
There is a deterministic algorithm to compute a weak splitting in time $O(\rank\cdot\log n)$ if $\delta\geq2\log n$.
\end{lemma}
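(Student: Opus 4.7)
The plan is to reduce the lemma directly to Lemma~\ref{lem:Easyderand} by sparsifying each $U$-vertex's neighborhood down to exactly $2\log n$ neighbors. Since every $u\in U$ satisfies $\deg_B(u)\geq\delta\geq 2\log n$, each $u$ can deterministically pick, from its local view and in a single communication round, a set $S_u\subseteq N_B(u)$ with $|S_u|=2\log n$ (for instance, the $2\log n$ $V$-neighbors with smallest identifiers). Let $B'=(U\cup V, E')$ be the bipartite subgraph with $E' = \{\{u,v\}:u\in U,\,v\in S_u\}$.

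By construction every $u\in U$ has degree exactly $2\log n$ in $B'$, so $\delta(B')=\Delta(B')=2\log n$; and since $B'$ is obtained from $B$ by removing edges only, $\rank(B')\leq \rank(B)=\rank$. Thus $B'$ satisfies the hypothesis $\delta\geq 2\log n$ of Lemma~\ref{lem:Easyderand}, and invoking that lemma on $B'$ produces a weak splitting of $B'$ in $O(\Delta(B')\cdot\rank(B')) = O(\rank\cdot\log n)$ rounds. This algorithm is simulated on $B$ without overhead: in each round each $u\in U$ simply ignores messages arriving from $N_B(u)\setminus S_u$.

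Finally, any weak splitting of $B'$ is automatically a weak splitting of $B$, because the guarantee that each $u\in U$ has both colors among $S_u$ immediately implies it has both colors in $N_B(u)\supseteq S_u$. I do not expect a serious obstacle here; the only thing to verify is that the parameter $n$ in the $2\log n$ threshold remains the total number of vertices of the (unchanged) bipartite graph, so the precondition $\delta(B')\geq 2\log n$ is met with equality and Lemma~\ref{lem:Easyderand} is invoked in its natural regime. The entire plan therefore trades the $\Delta$-factor of Lemma~\ref{lem:Easyderand} for the much smaller factor $2\log n$, yielding the desired $O(\rank\cdot\log n)$ bound.
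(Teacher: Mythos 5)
Your proposal matches the paper's proof essentially line for line: both sparsify each $u\in U$ down to $\Theta(\log n)$ incident edges (the paper keeps an arbitrary $\lceil 2\log n\rceil$ of them), invoke \Cref{lem:Easyderand} on the resulting subgraph where the left-side maximum degree is now $O(\log n)$ and the rank can only have decreased, and observe that a weak splitting of the subgraph is a weak splitting of $B$ since the property is monotone under adding edges. The only cosmetic difference is that the paper writes $\lceil 2\log n\rceil$ rather than $2\log n$ for the retained degree (to keep it an integer); otherwise this is the same argument.
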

\begin{proof}
Let $B=(U\cup V, E)$ be an instance of the weak splitting problem with minimum degree $\delta\geq 2\log n$ on the left hand side. 
If $\delta>2\log n$, each node in $U$ deletes an arbitrary set of its incident edges such that at least $\delta'=\lceil2\log n\rceil$ remain. By \Cref{lem:Easyderand}, we can compute a weak splitting on the resulting graph $H$ in $\bigO(\delta'\cdot\rank)=\bigO(\rank\cdot\log n)$ rounds. The computed coloring of the right hand side of $H$ immediately induces a weak splitting of the original graph $B$ as the weak splitting property is conserved under adding edges to a graph.
\end{proof}

In the algorithm in \Cref{lem:derand}, we deleted edges of high degree nodes in $U$ arbitrarily. The idea of our main deterministic weak splitting algorithm is to do this deletion more thoughtfully such that we are guaranteed that also the rank shrinks to a sufficient extent.

\begin{definition}
Given an undirected (multi-)graph $G=(V,E)$, a \textit{directed degree splitting} of $G$ with discrepancy $\kappa:\mathbb{N}\to\mathbb{R}$ is an orientation of the edges of $G$ such that for every node $v\in V$, the absolute value of the difference between the number of its incident incoming and its incident outgoing edges is at most $\kappa(\deg_G(v))$.
\end{definition}

In \cite{DISC17} it was shown that directed degree splittings can be computed efficiently.
\begin{theorem}[Theorem 1 in \cite{DISC17}]\label{thm:DISC17}
For every $\varepsilon>0$, there is a deterministic $\bigO(\varepsilon^{-1}\cdot\log\varepsilon^{-1}\cdot(\log\log\varepsilon^{-1})^{1.71}\cdot\log n)$ round distributed algorithm for directed degree splitting such that the discrepancy at each node $v$ of degree $d(v)$ is at most $\varepsilon\cdot d(v)+2$.
The randomized runtime of the same result is $\bigO(\varepsilon^{-1}\cdot\log\varepsilon^{-1}\cdot(\log\log\varepsilon^{-1})^{1.71}\cdot\log\log n)$
\end{theorem}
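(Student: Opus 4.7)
The plan is to reduce directed degree splitting to a walk-decomposition problem. Suppose we can partition the edges of $G$ into edge-disjoint walks (paths and cycles) such that every node $v$ is an endpoint of at most $\eps\cdot\deg(v)+2$ of these walks. Orienting each walk consistently along its direction then yields an orientation with discrepancy at most $\eps\cdot\deg(v)+2$ at every $v$: each traversal of $v$ by an internal portion of a walk contributes one incoming and one outgoing edge, so only walk endpoints contribute to the imbalance. The existence of such a decomposition is an Euler-type combinatorial fact; the challenge is to construct one using only local computation.

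My main tool would be an iterative local pairing procedure. In one round, each node $v$ tries to pair up as many of its incident edges as possible, and paired edges at $v$ are merged into a single walk passing through $v$. A naive greedy pairing by $v$ alone causes conflicts, because an edge $\{u,v\}$ might be assigned to different partners at its two endpoints. To resolve this, I reduce each round to a (near-maximal) matching problem on an auxiliary graph that captures local edge-pair compatibility, and invoke a distributed matching subroutine; the best available matching complexity is what produces the $\eps^{-1}\cdot\log\eps^{-1}\cdot(\log\log\eps^{-1})^{1.71}$ factor in the statement. Iterating for $\bigO(\log\eps^{-1})$ levels, with each level roughly halving the fraction of unpaired edges at each node, drives the total number of walk endpoints at $v$ down to $\eps\cdot\deg(v)+2$. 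The level-wise procedure can then be implemented in $\bigO(\log n)$ time deterministically (or $\bigO(\log\log n)$ randomly) by the same derandomization framework used in \Cref{lem:Easyderand}: the level's success criterion is locally checkable and has small failure probability, so a coloring of a constant-radius power graph together with the $\SLOCAL$-to-$\LOCAL$ transformation yields the stated runtime.

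The main obstacle is controlling how unpaired edges at different levels accumulate into walk endpoints. In a loose analysis the residues at consecutive levels combine multiplicatively, which would blow up the final discrepancy to $\eps\cdot\deg(v)\cdot\polylog n$ rather than the desired $\eps\cdot\deg(v)+2$. The delicate combinatorial step is to stitch together unpaired residues so that an edge left unpaired at level $i$ can still participate as an internal edge of some walk formed at level $i+1$, leaving only an additive $+2$ from the unavoidable parity residue at a single node. Getting this additive-versus-multiplicative behavior right is the heart of the argument; the $(\log\log\eps^{-1})^{1.71}$ overhead then reflects the cost of invoking the matching subroutine with the tighter approximation needed at each level.
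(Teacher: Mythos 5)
This theorem is not something the paper proves; it is imported verbatim from Ghaffari et al.\ (DISC~'17), cited as Theorem~1 there. The only thing the present paper adds on top of the citation is the one-sentence remark that the randomized $O(\varepsilon^{-1}\log\varepsilon^{-1}(\log\log\varepsilon^{-1})^{1.71}\log\log n)$ bound follows by replacing every invocation of the deterministic $O(\log n)$-round sinkless-orientation subroutine inside the DISC~'17 construction with the randomized $O(\log\log n)$-round sinkless-orientation algorithm of Ghaffari--Su. So the paper's ``proof'' consists of a citation plus a modular subroutine swap, not a from-scratch argument.

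Your attempt tries to re-derive the theorem and, while the high-level idea of decomposing edges into walks and charging discrepancy to walk endpoints is indeed the right combinatorial lens (it is the Euler-partition perspective that this line of work grew out of), the specific mechanics you propose do not match where the stated complexity actually comes from, and would not yield it. First, the $O(\log n)$ (deterministic) and $O(\log\log n)$ (randomized) factors in the DISC~'17 bound come precisely from the sinkless-orientation subroutine that each recursion level invokes; they do \emph{not} come from a power-graph coloring followed by an $\SLOCAL$-to-$\LOCAL$ transformation as in \Cref{lem:Easyderand}. That transformation produces a round count proportional to the number of colors of a constant power of the graph, which depends on the maximum degree, not cleanly on $\log n$; using it here would not give the stated $\Delta$-independent runtime. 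Second, the $(\log\log\varepsilon^{-1})^{1.71}$ factor is an artifact of the recursive amortization in the DISC~'17 analysis, not the cost of a distributed matching subroutine, so attributing it to matching is a misdiagnosis that would also give you the wrong dependence. Third, and most importantly, you explicitly flag the central difficulty --- that na\"\i{}vely the per-level unpaired residues combine multiplicatively, blowing the $+2$ additive term up to $\poly\log$ --- and you do not resolve it; that resolution \emph{is} the content of the theorem, so the proposal as written has a genuine gap at its heart. For the purposes of this paper, none of this machinery needs to be reconstructed: the correct move is to take Theorem~1 of DISC~'17 as a black box and verify only that its sinkless-orientation calls can be replaced by the randomized variant, which is the paper's actual argument.
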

Note that the randomized runtime of the theorem is not stated in \cite{DISC17} but follows by substituting each deterministic $O(\log n)$-round sinkless orientation algorithm in their proofs with the randomized $O(\log\log n)$-round sinkless orientation algorithm from \cite{GS17}.
To ease presentation, we omit the $\log\log$ term and upper bound the runtime of the directed degree splitting algorithm by $\bigO(\varepsilon^{-1}\cdot(\log\varepsilon^{-1})^{1.1}\cdot\log n)$ whenever we apply \Cref{thm:DISC17}.
We now iteratively use the degree splitting algorithm from \Cref{thm:DISC17} to reduce the degrees on the left hand side and the rank on the right hand side at the same time. 

\smallskip

\textbf{Degree-Rank Reduction I:} Given a bipartite graph $B=(U\cup V,E)$ and parameters $\varepsilon$ and $k$, compute a directed degree splitting on $B$ with discrepancy at most $\varepsilon\cdot d(v)+2$ for each $v\in U\cup V$. Now that all edges are oriented, delete all edges from $B$ that are directed from a node in $V$ towards a node in $U$. Repeat this process on the residual graph. Stop after $k$ iterations.

We can lower bound the degree on the left hand side and upper bound the 'rank' on the right hand side after $k$ iterations of the algorithm as follows.
\begin{lemma}\label{lem:degShrink}
Let $B$ be a bipartite graph with minimum degree $\delta$ and rank $\rank$ and let $\delta_k$ ($\rank_k$) be the minimum degree (rank) of the graph obtained after $k$ iterations of the Degree-Rank Reduction Algorithm on $B$ with some $0<\varepsilon<1/3$. Then
\begin{align*}
 \delta_k& >\left(\frac{1-\varepsilon}{2}\right)^k\delta-2 & \text{~and~} &
& \rank_k & <\left(\frac{1+\varepsilon}{2}\right)^k\rank+3~.
\end{align*}
\end{lemma}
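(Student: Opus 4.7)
The plan is to prove both inequalities simultaneously by induction on $k$, with the base case $k=0$ being trivial ($\delta_0 = \delta > \delta - 2$ and $\rank_0 = \rank < \rank + 3$). The key step is to analyze what a single iteration of the reduction does to the minimum degree on the left and the rank on the right.

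Consider one iteration applied to a bipartite graph with min left-degree $\delta'$ and rank $\rank'$. After invoking \Cref{thm:DISC17} with parameter $\varepsilon$, every node $w$ of degree $d_w$ has $|\mathrm{in}(w)-\mathrm{out}(w)|\le\varepsilon d_w+2$, where $\mathrm{in}$ and $\mathrm{out}$ count incident incoming and outgoing edges. The algorithm then deletes exactly the $V\to U$ edges, so each $u\in U$ retains its outgoing edges and each $v\in V$ retains its incoming edges. Using $\mathrm{out}(u)+\mathrm{in}(u)=d_u$ together with the discrepancy bound gives
\[
\mathrm{out}(u)\ge\frac{d_u-(\varepsilon d_u+2)}{2}=\frac{1-\varepsilon}{2}d_u-1\ge\frac{1-\varepsilon}{2}\delta'-1,
\]
and symmetrically
\[
\mathrm{in}(v)\le\frac{d_v+(\varepsilon d_v+2)}{2}=\frac{1+\varepsilon}{2}d_v+1\le\frac{1+\varepsilon}{2}\rank'+1.
\]
Therefore one iteration yields the recurrences $\delta_{j}\ge\frac{1-\varepsilon}{2}\delta_{j-1}-1$ and $\rank_{j}\le\frac{1+\varepsilon}{2}\rank_{j-1}+1$.

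Unrolling these with $a:=\frac{1-\varepsilon}{2}$ and $b:=\frac{1+\varepsilon}{2}$ gives
\[
\delta_k\ge a^k\delta-\sum_{i=0}^{k-1}a^i>a^k\delta-\frac{1}{1-a},\qquad \rank_k\le b^k\rank+\sum_{i=0}^{k-1}b^i<b^k\rank+\frac{1}{1-b}.
\]
Now I would plug in the constants: $1-a=\frac{1+\varepsilon}{2}>\frac12$, so $\frac{1}{1-a}<2$, giving the claimed $-2$ in the lower bound (valid for every $\varepsilon>0$); and $1-b=\frac{1-\varepsilon}{2}>\frac13$ because of the standing assumption $\varepsilon<1/3$, hence $\frac{1}{1-b}<3$, giving the claimed $+3$ in the upper bound.

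The substantive content sits entirely in the one-step analysis; the rest is a routine geometric-series calculation. The only place I have to be genuinely careful is in the constant on the rank side, since the bound $\frac{1}{1-b}<3$ is exactly what forces the hypothesis $\varepsilon<1/3$ stated in the lemma — without it, the constant $3$ in $\rank_k<b^k\rank+3$ cannot be extracted from the geometric sum. I do not expect any other obstacle.
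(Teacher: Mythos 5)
Your proposal is correct and follows essentially the same route as the paper: a one-iteration analysis of the degree-splitting discrepancy (surviving degree of $u\in U$ is at least $\frac{1-\varepsilon}{2}d_u - 1$, surviving degree of $v\in V$ is at most $\frac{1+\varepsilon}{2}d_v + 1$), then unrolling the recurrences and bounding the geometric sums by $2$ and $3$ respectively, the latter using $\varepsilon < 1/3$. The paper's proof is a terser version of exactly this argument.
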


\begin{proof}
In each iteration only incoming edges to nodes in $V$ survive. If a node has $\delta_i$ edges before iteration $i$ it has at least $\frac{1-\eps}{2}\delta_i-1$ incoming edges in the directed splitting computed in iteration $i$. Thus a simple induction shows that after $k$ iterations the minimum degree of nodes in $V$ can be lower bounded by 
 \[\delta_k\geq\left(\frac{1-\varepsilon}{2}\right)^k\delta-\sum\limits_{i=0}^{k-1}\left(\frac{1-\varepsilon}{2}\right)^i~.\]
This implies the first claim as $\sum\limits_{i=0}^{k-1}\left(\frac{1-\varepsilon}{2}\right)^i\leq\sum\limits_{i=0}^{k-1}\left(\frac{1}{2}\right)^i<2$~.

Similar to the first claim one can show by induction that the maximum degree on the right hand side can be upper bounded by  \[\rank_k\leq\left(\frac{1+\varepsilon}{2}\right)^k\rank+\sum\limits_{i=0}^{k-1}\left(\frac{1+\varepsilon}{2}\right)^i~.\]
This implies the second claim as $\sum\limits_{i=0}^{k-1}\left(\frac{1+\varepsilon}{2}\right)^i<\sum\limits_{i=0}^{\infty}\left(\frac{1+\varepsilon}{2}\right)^i=\left(1-\frac{1+\varepsilon}{2}\right)^{-1}<3\text{\quad for }\varepsilon<1/3$~.
\end{proof}

We can now prove \Cref{thm:detsplitting}, our main deterministic splitting result. The following is a more precise version of  \Cref{thm:detsplitting}.

\begin{theorem}\label{thm:weakSplittingSpeedup}
There is a deterministic distributed algorithm that, given a weak splitting instance $B=(U\cup V, E)$ with $\minDegree\geq2\log n$, computes a weak splitting in time\[\bigO\left(\frac{\rank}{\delta}\cdot\log^2n+\log^3n\left(\log\log n\right)^{1.1}\right)~.\]
\end{theorem}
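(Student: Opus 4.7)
The plan is to use the Degree-Rank Reduction I as a preprocessing step to shrink $\rank$ down to roughly $\rank\log n/\delta$ while keeping the minimum left-degree above $2\log n$, and then invoke \Cref{lem:derand} on the residual bipartite graph. The key structural observation is that the weak splitting property is monotone under edge additions: any $2$-coloring of $V$ that forms a weak splitting of a spanning subgraph $H\subseteq B$ automatically forms a weak splitting of $B$. Hence it suffices to compute a weak splitting of the residual graph produced by the reduction.

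Concretely, I would set $k := \lfloor \log(\delta/(4\log n))\rfloor$ and $\eps := \min(1/6,\, c/k)$ for a suitably small constant $c>0$ (if $k\le 0$, skip the reduction and directly invoke \Cref{lem:derand}, whose bound $O(\rank\log n)$ is already absorbed by the target since $\delta=\Theta(\log n)$ in that regime). Running Degree-Rank Reduction I with parameter $\eps$ for $k$ iterations, \Cref{lem:degShrink} gives
\[
\delta_k \;\ge\; \bigl(\tfrac{1-\eps}{2}\bigr)^k\delta - 2,\qquad
\rank_k \;\le\; \bigl(\tfrac{1+\eps}{2}\bigr)^k\rank + 3.
\]
The choice $\eps = O(1/k)$ ensures $(1-\eps)^k=\Theta(1)$ and $(1+\eps)^k=\Theta(1)$, so both factors are $\Theta(1/2^k)$, yielding $\delta_k \ge 2\log n$ (by the choice of $k$) and $\rank_k = O(\rank/2^k) = O(\rank\log n/\delta)$.

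For the running time, each iteration of the directed degree splitting from \Cref{thm:DISC17} costs $O(\eps^{-1}(\log\eps^{-1})^{1.1}\log n) = O(k(\log k)^{1.1}\log n)$, so the $k$ iterations total $O(k^2(\log k)^{1.1}\log n)$. Since $k = O(\log n)$, this is bounded by $O(\log^3 n (\log\log n)^{1.1})$. Finishing on the residual graph with \Cref{lem:derand} costs $O(\rank_k\log n) = O(\rank\log^2 n/\delta)$ rounds. Summing the two contributions yields exactly the stated bound.

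The main obstacle is the joint tuning of $\eps$ and $k$: we need $k\eps$ bounded by a constant so that the cumulative factors $\bigl(\tfrac{1\pm\eps}{2}\bigr)^k$ stay within constant factors of $1/2^k$, which keeps both $\delta_k$ and $\rank_k$ tightly controlled; simultaneously, $\eps$ must not be too small, otherwise the per-iteration cost $\eps^{-1}$ blows up the reduction time beyond the $\log^3 n(\log\log n)^{1.1}$ target. A further subtlety is the additive $-2$ in the $\delta_k$ bound of \Cref{lem:degShrink}: choosing $k$ with the factor $4\log n$ (rather than $2\log n$) in the denominator gives enough slack to absorb the additive loss and still satisfy the hypothesis $\delta_k\ge 2\log n$ required by \Cref{lem:derand} on $B_k$.
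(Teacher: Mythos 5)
Your proof is correct and follows essentially the same route as the paper's: apply Degree-Rank Reduction~I for $k\approx\log(\delta/\log n)$ iterations with accuracy $\eps=\Theta(1/k)$, use \Cref{lem:degShrink} to certify $\delta_k\geq 2\log n$ and $\rank_k=O(\rank\log n/\delta)$, and finish with \Cref{lem:derand}; the only differences are cosmetic choices of constants ($4\log n$ vs.\ $12\log n$ in the denominator of $k$, and the cutoff for invoking \Cref{lem:derand} directly).
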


\begin{proof}
Let $B=(U\cup V, E)$ a weak splitting  instance. If $\delta\leq48\log n$, the algorithm from \Cref{lem:derand} gives an $\bigO(\rank\cdot\log n)=\bigO(\rank/\delta\cdot\log^2n)$ algorithm. Thus for the rest of the proof we assume that $\delta>48\log n$.
Let $\varepsilon=\min\{1/k,1/3\}$ and  $k:=\lfloor\log\left(\frac{\delta}{12\log n}\right)\rfloor$. Let $\bar{B}$ denote the bipartite graph that we obtain after $k$ iterations of the degree-rank reduction algorithm with accuracy $\eps$. 
Due to \Cref{lem:degShrink} the maximum rank of $\bar{B}$ can be upper bounded as 
\begin{align*}
\rank_{\bar{B}}<\left(\frac{1+1/k}{2}\right)^k\rank+3\leq\left(\frac{e^{1/k}}{2}\right)^k\rank+3= \frac{e}{2^k}\cdot \rank+3\leq 24e\cdot\frac{\rank}{\delta}\log n+3
\end{align*}
and the minimum degree of the nodes on the left hand side can be lower bounded by 
\begin{align*}
\delta_{\bar{B}}\stackrel{(*)}{>}\left(\frac{1-1/k}{2}\right)^k\cdot \delta-2\geq   \frac{1}{4}\cdot 2^{-k}\cdot \delta-2\geq 12\log n-2\stackrel{(n\geq 4)}{\geq}2\log n~.
\end{align*}
At $(*)$ we used that $\delta>48\log n$ implies that we have more than two iterations of the splitting algorithm, i.e., $k>2$ which implies $\left(1-1/k\right)^k\geq1/4$.
Now, we use \Cref{lem:derand} to compute a weak splitting on $\bar{B}$ that is also a weak splitting on the original graph $B$.
The runtime of computing a weak splitting on $\bar{B}$ is $\bigO(\rank_{\bar{B}}\cdot\log n)=\bigO(\rank/\delta\cdot\log^2n)$. 
The runtime of each of the $k=O(\log n)$ execution of the degree-rank reduction algorithm is $\bigO(\varepsilon^{-1}\cdot(\log\varepsilon^{-1})^{1.1}\cdot\log n)$ (cf. \Cref{thm:DISC17}). Due to $\varepsilon^{-1}=\lfloor\log\left(\frac{\delta}{12\log n}\right)\rfloor$ the time complexity of all iterations can be bounded by $\bigO\left(\log^2n(\log\log n)^{1.1}\right)$ and the total runtime of the algorithm is $\bigO(\rank/\delta\cdot\log^2n+\log^3n(\log\log n)^{1.1})$.
\end{proof}

\subsection{An Efficient Deterministic Algorithm when \boldmath$\delta\geq6\rank$}
The splitting algorithm that we use in the degree-rank reduction I in \Cref{sec:degreeRankI} has an inaccuracy on both sides of the bipartite graph; in particular, a node on the right hand side that has $2$ or less edges remaining might loose all of its incident edges in one iteration. To solve the weak splitting problem efficiently for $\delta\geq 6\rank$ we define a degree-rank reduction algorithm that always obtains discrepancy one or zero on the right hand side of the bipartite graph. 

\smallskip

\textbf{Degree-Rank Reduction II:} Given a bipartite graph $B=(U\cup V,E)$ and an \emph{accuracy parameter} $\varepsilon$ we define one iteration of the {degree-rank reduction II} as follows: Each $v\in V$ groups its $d:=\deg(v)$ neighbors  $u_1,\dots,u_d\in N_B(v)$ into pairs $(u_1,u_2),(u_3,u_4),\dots$ (if $d$ is odd, $u_d$ remains unpaired). Then we define a multigraph $G$ with vertex set $U$. $G$ contains an edge $e=\{u_i,u_{i+1}\}$ for any of these  pairs and we say that $v$ is the \emph{corresponding node} for edge $e$. Note that there can be multiple edges between two nodes in $G$ with distinct corresponding nodes.
 Now, we compute a directed degree splitting on $G$ with discrepancy at most $\varepsilon\cdot\deg_G(u)+2$ for each $u\in U$. 
We obtain a residual graph $B'\subseteq B$ through removing edges from $B$ as follows: For any edge $e=\{u,\bar{u}\}$ of $G$ let $v_e$ be the corresponding node for $e$. If $e$ is directed from $u$ to $\bar{u}$, delete the edge $\{\bar{u},v_e\}$ from $B$; if $e$ is directed from $\bar{u}$ to $u$, delete the edge $\{u,v_e\}$ from $B$. All other edges of $B$ remain and are edges of the residual graph $B'$. 
If we consider several iterations of the degree-rank reduction we always repeat the process on the residual bipartite graph.

\medskip

The crucial property of the above algorithm is that the rank of the bipartite graph can never go below one as any node on the right hand side keeps at least one out of two neighbors and if it has only one neighbor left it also keeps that one.

\begin{lemma}\label{lem:degShrink2}
Let $B$ be a bipartite graph with rank $\rank$ and let $\rank_k$ be the minimum degree (rank) of the graph obtained after $k$ iterations of the Degree-Rank Reduction II Algorithm on $B$ applied with an arbitrary accuracy parameter $\eps$. Then we have $\rank_{\lceil\log\rank\rceil}=1$.
\end{lemma}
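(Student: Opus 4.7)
The plan is to show that a single iteration of Degree-Rank Reduction II maps the degree of every $v\in V$ from $d$ to exactly $\lceil d/2 \rceil$, and then to iterate this bound.

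For the per-iteration analysis, fix a node $v\in V$ of degree $d$ and consider how edges incident to $v$ are deleted. By construction, $v$'s $d$ neighbors are grouped into $\lfloor d/2 \rfloor$ pairs (with one unpaired neighbor when $d$ is odd), and each pair contributes exactly one edge to the multigraph $G$ whose corresponding node is $v$. The directed splitting on $G$ orients each such edge, and the algorithm then removes exactly one of the two $B$-edges incident to $v$ that realize that pair. Since the unpaired neighbor (if any) contributes no edge in $G$, the edge from $v$ to that neighbor survives. Hence $v$ loses exactly $\lfloor d/2 \rfloor$ edges and keeps exactly $\lceil d/2 \rceil$ edges; taking the maximum over $v\in V$, the rank after one iteration is at most $\lceil \rank/2 \rceil$.

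Iterating this bound and using the standard identity $\lceil \lceil x/2 \rceil / 2 \rceil = \lceil x/4 \rceil$ (more generally $\lceil \lceil x/a \rceil / b \rceil = \lceil x/(ab) \rceil$ for positive integers $x,a,b$), I get $\rank_k \leq \lceil \rank/2^k \rceil$. For $k = \lceil \log \rank \rceil$, the right-hand side is at most $1$. On the other hand, the per-node formula $d \mapsto \lceil d/2 \rceil$ shows that a node of positive degree never loses all its edges, so as long as $V$ contains a node incident to an edge the rank stays at least $1$. Combining the two bounds yields $\rank_{\lceil \log \rank \rceil} = 1$.

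I do not expect a real obstacle: the whole argument reduces to correctly bookkeeping how many of $v$'s edges are removed per iteration. The only subtle point is to track ceilings carefully so that the halving bound $\lceil \rank/2 \rceil$ is tight enough to give $1$ (and not $2$) after $\lceil \log \rank \rceil$ iterations, which is exactly what the nested-ceiling identity handles.
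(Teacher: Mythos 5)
Your proof is correct and rests on the same key observation as the paper's: after one iteration of Degree-Rank Reduction II, each $v\in V$ of degree $d$ retains exactly $\lceil d/2\rceil$ edges, hence $\rank_{k+1}=\lceil\rank_k/2\rceil$. The only difference is how you solve this recurrence: the paper bounds $\lceil x/2\rceil\le x/2+1/2$ and sums the resulting geometric series to get $\rank_k<\rank/2^k+1$, whereas you unroll it exactly via the nested-ceiling identity $\lceil\lceil x/a\rceil/b\rceil=\lceil x/(ab)\rceil$ to obtain $\rank_k=\lceil\rank/2^k\rceil$. Your route is marginally tighter and arguably more transparent; both deliver $\rank_{\lceil\log\rank\rceil}\le1$, and both correctly note the degree can never drop to $0$ (an odd-degree node keeps its unpaired neighbor, a degree-$1$ node is unchanged), so the value is exactly $1$.
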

\begin{proof}
First, we perform an induction over the number of iterations and show that 
\begin{align}\label{eqn:rankexact}
\rank_k<\frac{1}{2^k}\rank+\sum\limits_{i=1}^k\frac{1}{2^i}~
\end{align} holds. 
For $k=0$ the hypothesis (\ref{eqn:rankexact}) simplifies to $\rank<\rank+1$ and is trivially satisfied. 

\noindent\textit{Induction Step:} Assume the that (\ref{eqn:rankexact}) holds for $k$. As a node on the right hand side never looses more than half of its edges in one iteration of rank-reduction II we have  $\rank_{k+1}=\lceil\frac{\rank_k}{2}\rceil$. 
As $\rank_k$ is an integer we obtain
 \[\rank_{k+1}\leq \frac{\rank_k}{2}+\frac{1}{2}\stackrel{\text{I.H.}}{<}\frac{1}{2^{k+1}}\rank+\sum\limits_{i=2}^{k+1}\frac{1}{2^i}+\frac{1}{2}=\frac{1}{2^{k+1}}\rank+\sum\limits_{i=1}^{k+1}\frac{1}{2^i}~.\]
With $\sum\limits_{i=1}^k\frac{1}{2^i}<1$ we obtain $\rank_k<\frac{1}{2^k}\rank+1$ for any $k$ and in particular $\rank_{\lceil\log\rank\rceil}<2$. Hence we obtain $\rank_{\lceil\log\rank\rceil}=1$ as $\rank_{\lceil\log\rank\rceil}$ is an integer and cannot be smaller than $1$.
\end{proof}

The following theorem is obtained by using the degree-rank reduction II for $\lceil\log r\rceil$ iterations until the rank $r$ of the remaining graph is $1$. With the condition $\delta\geq 6\rank$, one can then show that the minimum degree of the left-side nodes is still at least $2$ and we have thus reduced the problem to a trivial weak splitting instance.

\begin{theorem}\label{thm:DeltaLargersixr}
If $\delta\geq6\rank$, we can solve the weak splitting problem deterministically in $\polylog n$ rounds and randomized in $\polyloglog n$ rounds.
\end{theorem}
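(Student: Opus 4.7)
The plan is to iterate Degree-Rank Reduction II for $k := \lceil \log \rank \rceil$ rounds with a suitably small accuracy parameter $\eps$, reducing the rank to $1$ while keeping the left-side minimum degree at least $2$; after that a one-round local rule finishes the weak splitting.

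First, observe that Lemma \ref{lem:degShrink2} gives $\rank_k=1$ after $k$ iterations, so the main new step is to lower-bound the left-side degree $\delta_k$. For any $u\in U$ of current degree $d$, let $p\leq d$ denote the number of $V$-neighbors of $u$ that pair $u$ up in this iteration, so that $\deg_G(u)=p$. Theorem \ref{thm:DISC17} guarantees that $u$ has at least $(1-\eps)p/2-1$ outgoing edges in the directed degree splitting of $G$; each outgoing edge forces the corresponding edge of $u$ in $B$ to survive, and the $d-p$ edges of $u$ whose $V$-endpoint leaves $u$ unpaired are never deleted. Hence the new degree of $u$ is at least $(1-\eps)d/2-1$. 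Unrolling over $k$ iterations,
\[
\delta_k \;\geq\; \bigl(\tfrac{1-\eps}{2}\bigr)^k\delta \;-\; \sum_{i=0}^{k-1}\bigl(\tfrac{1-\eps}{2}\bigr)^i \;>\; \bigl(\tfrac{1-\eps}{2}\bigr)^k\delta - 2.
\]
Plugging in $\delta\geq 6\rank$ together with $2^{\lceil \log \rank\rceil}\leq 2\rank$, and taking $\eps$ small enough (in the limit $\eps\to 0$, which corresponds to the sinkless-orientation-based splitting with discrepancy at most $2$), this lower bound strictly exceeds $1$; by integrality we conclude $\delta_k\geq 2$.

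Once $\rank_k=1$ and $\delta_k\geq 2$, every $u\in U$ has at least two distinct $V$-neighbors, each of which is adjacent only to $u$. A weak splitting of the residual graph is then obtained in a single round by having each $u$ declare one of its $V$-neighbors red and the remaining ones blue, and this coloring is also a weak splitting of the original $B$ because weak splitting is preserved under adding edges on the $U$-side.

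For the running time, each iteration amounts to one directed degree splitting on the pairing multigraph $G$; any two $G$-adjacent left nodes share a common $v\in V$, so a round of the $G$-algorithm is simulated with $O(1)$ overhead on $B$. Instantiating the deterministic and randomized variants of Theorem \ref{thm:DISC17} and summing over the $k=O(\log \rank)$ iterations yields the claimed $\polylog n$ deterministic and $\polyloglog n$ randomized bounds. The subtlety I expect to require the most care is the quantitative choice of $\eps$: it must be small enough that $\bigl((1-\eps)/2\bigr)^k\cdot 6\rank>3$ (so that the degree bound above is not killed), yet large enough that the $\eps^{-1}$ factor appearing in the per-iteration cost is absorbed into the advertised running times. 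Balancing these two requirements is the main calculation of the proof.
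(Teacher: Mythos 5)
Your overall plan (run Degree-Rank Reduction II for $k=\lceil\log\rank\rceil$ iterations, invoke \Cref{lem:degShrink2} for the rank, lower-bound the left-side degree, finish with the trivial rule on a rank-$1$ graph) coincides with the paper's argument in the regime where it actually uses Degree-Rank Reduction II, but there is a genuine gap in the running-time analysis, and the degree calculation as written falls short.

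On the degree calculation: your bound $\delta_k > \bigl(\tfrac{1-\eps}{2}\bigr)^k\delta - 2$ with $\delta\geq 6\rank$ and $2^{\lceil\log\rank\rceil}\leq 2\rank$ gives $\delta_k > 3(1-\eps)^k - 2$, which for any $\eps>0$ is below $1$. The paper avoids this by integrality, not by taking a limit: choose $\eps$ so small that $\eps\cdot\deg_G(u) < 1$ for every $u$; then the discrepancy bound $\eps\cdot\deg_G(u)+2 < 3$ forces the (integer) discrepancy to be at most $2$, in fact at most $1$ when $\deg_G(u)$ is odd, so a node of degree $d$ retains at least $d/2-1$ edges. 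This is the clean bound $\delta_k > \delta\cdot 2^{-k}-2$ with no $\eps$-loss, and from there the arithmetic you sketch does work. You gesture at this with ``in the limit $\eps\to 0$'', but that phrasing obscures the crucial constraint: what is actually needed is $\eps < 1/\Delta$.

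That constraint is where the real gap lies. You flag ``the quantitative choice of $\eps$'' as the remaining calculation to balance discrepancy against the $\eps^{-1}$ factor in \Cref{thm:DISC17}, but no such balance exists in general: the effective-discrepancy-$2$ argument requires $\eps^{-1}>\Delta$, and $\Delta$ can be as large as $n$, so the per-iteration cost would be far beyond $\polylog n$. The paper resolves this with a case split that your proposal omits. Deterministically: if $\delta\geq 2\log n$ it does not use Degree-Rank Reduction II at all, invoking the derandomization-based \Cref{thm:weakSplittingSpeedup} instead; Degree-Rank Reduction II is only run when $\delta < 2\log n$, and then (after the regularization from \Cref{sec:rand} that lets one assume $\Delta\leq 2\delta$) one has $\Delta<4\log n$, so $\eps=1/(10\Delta)=\Omega(1/\log n)$ and the $\eps^{-1}$ factor stays polylogarithmic. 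The randomized bound uses a three-way split ($\delta\geq 2\log n$: trivial $0$-round algorithm; $c\log\log n\leq\delta<2\log n$: \Cref{thm:mainweaksplit}; $\delta<c\log\log n$: Degree-Rank Reduction II with $\eps^{-1}=\polyloglog n$). Without this case analysis the theorem's running-time claims cannot be established, so this is not a calculation left to fill in but a missing structural step.
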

\begin{proof}
If $\delta\geq 2\log n$ we can solve the problem deterministically with the algorithm from \Cref{thm:weakSplittingSpeedup}  in $\bigO(\log^2n+\log^3n(\log\log n)^{1.1})$ rounds.
Thus assume that  $\delta<2\log n$. Set $\eps=1/(10\Delta)=1/(20\delta)$ (see the comment at the beginning of \Cref{sec:rand} which states that it is sufficient to solve weak splitting with almost regular degrees on the left hand side to solve it for all degrees) and execute $k=\lceil \log r\rceil$ iterations of degree-rank reduction II.  We now want to lower bound the minimum degree after these $k$ iterations.
As $\eps\cdot d(u)<1$ for all nodes $u\in U$ and $\deg_G(u)\leq \deg_B(u)$ we obtain that the discrepancy of the computed splitting in degree-rank reduction is at most $1$ if the degree of $u$ is odd and $2$ if the degree of $u$ is even. Thus a node $u\in U$ with initial degree $\delta$ has degree at least $\delta/2-1$ after one iteration of the algorithm. If $r\leq 2$ we only need one iteration and obtain that the minimum degree after this iteration is at least $\delta/2-1\geq 2$. For $r>2$ an induction over the number of iterations shows that the minimum degree in iteration $\lceil \log r\rceil$ is strictly larger than $\delta\cdot 2^{-\lceil\log r\rceil}-2$. As we have
$2^{\lceil \log r \rceil}\leq 2r-2$ for $r>2$ we obtain that the minimum degree after $k=\lceil\log r\rceil$ iterations is strictly larger than 
\begin{align*}
\frac{\delta}{2^{-\lceil\log \rank\rceil}}-2\geq \frac{\delta}{2\rank-2}-2\stackrel{(\delta\geq 6\rank)}\geq \frac{\delta}{\frac{\delta}{3}-2}-2=\frac{3}{1-\frac{6}{\delta}}-2\stackrel{(\delta\geq 6r\geq 12)}=1~,
\end{align*}
 i.e., the mininmum degree is at least $2$. Thus in all cases the resulting graph has rank $1$ and minimum degree at least $2$. Thus, every node in $U$ can choose one of its neighbor to be colored red and one of its neighbors to be colored blue. The obtained coloring is a weak splitting of $B$.
The deterministic runtime of the algorithm with $\delta<2\log n$ is $O(k\cdot \varepsilon^{-1}\cdot(\log\varepsilon^{-1})^{1.1}\cdot\log n)=O(\log^3 n \log^{1.1}\log n)$ due to the deterministic runtime of \Cref{thm:DISC17}. 

The proof of the $\polylog\log n$ randomized algorithm is along similar lines. If $\delta\geq 2\log n$ the zero round randomized algorithm that we explain at the beginning of \Cref{sec:derand} solves the problem.  If $\delta,\rank\leq 2\log n$ but 
$\delta\geq c \log\log n$ for a sufficiently large constant $c$ we use \Cref{thm:mainweaksplit} to solve the problem in $\frac{\rank}{\delta}\polyloglog n=\polyloglog n$ rounds---note that $\delta\geq c\log\log n$ implies $\delta\geq c' \log (\rank \log n)$ for a slightly smaller constant $c$ if $\rank\leq\delta\leq 2\log n$. If $\delta<c\log\log n$ we use the degree-reduction II as in the deterministic case. The runtime bound follows with $\eps^{-1}=\polyloglog n$ and the randomized runtime in  \Cref{thm:DISC17}.
%If $\rank=2$, let $\varepsilon<1-5/\delta$ which is larger than $0$ as $\delta\geq6$ and get $\delta_{\lceil\log\rank\rceil}\geq2$ \ym{check this}
%If $\rank>2$, let $\varepsilon=1/(r\lceil\log\rank\rceil)$.
%We will show that there is an $\varepsilon$ with $\varepsilon^{-1}=\bigO((r\log r))=\bigO((\log n\log\log n))$ such that $\delta_{\lceil\log\rank\rceil}\geq2$, i.e., after $\lceil\log\rank\rceil$ iterations of the Degree-Rank Reduction Algorithm Ver2 on a bipartite graph $B=(U\cup V,E)$ with this $\varepsilon$,  By \Cref{thm:DISC17}, these $\lceil\log\rank\rceil$ iterations of the Degree-Rank Reduction Algorithm Ver2 take $\bigO(\log n\log\log n\cdot(\log(\log n\log\log n)^{-1})^{1.1}\cdot\log^2 n)$ rounds.
%.  We get
%\begin{align*}
%&\frac{\rank-1}{\rank}=1-\frac{1}{\rank}=1-\varepsilon\lceil\log\rank\rceil<(1-\varepsilon)^{\lceil\log\rank\rceil}\Longrightarrow\quad\frac{\rank}{2\rank-2}=\frac{1}{2}\cdot\frac{\rank}{\rank-1}>\frac{1}{2}(1-\varepsilon)^{-\lceil\log\rank\rceil}\\
%\Longrightarrow\quad&\frac{(1-\varepsilon)^{\lceil\log\rank\rceil}}{2\rank-\rank}\rank>\frac{1}{2}\quad\stackrel{(*)}{\Longrightarrow}\quad\left(\frac{1-\varepsilon}{2}\right)^{\lceil\log\rank\rceil}\rank>\frac{1}{2}\quad\Longrightarrow\quad\left(\frac{1-\varepsilon}{2}\right)^{\lceil\log\rank\rceil}\delta>3\\
%\Longrightarrow\quad&\delta_{\lceil\log\rank\rceil}>\left(\frac{1-\varepsilon}{2}\right)^{\lceil\log\rank\rceil}\delta-2>1~.
%\end{align*}
%$(*)$: $2\rank-\rank\geq2^{\lceil\log\rank\rceil}>0$ for $r>2$.
\end{proof}

\subsection{A Randomized Algorithm for Weak Splitting}
\label{sec:rand}

For proving \Cref{thm:mainweaksplit} we assume \emph{almost uniform} degrees on the left hand side, i.e., $\delta>\Delta/2$. This is sufficient because we can split each node $u\in U$ with $\deg(u)>2\delta$ into $\lfloor\deg(u)/\delta\rfloor$ virtual nodes with degree at least $\delta$ and less than $2\delta$ and compute a weak splitting on this graph, which directly induces a weak splitting on the original graph.
The algorithm of \Cref{thm:mainweaksplit} is based on the infamous graph shattering technique that consists of two parts: In the first part, we use a random process to fix the colors of some nodes in $V$ such that the probability that a node in $U$ is \emph{unsatisfied}, i.e., does not have a red and blue neighbor, is $1/\poly\Delta$. The residual graph consisting of the unsatisfied nodes on the left side and the uncolored nodes on the right side will have small connected components, say size $\poly(\log n)$, such that we can efficiently compute a weak splitting with the deterministic algorithm from \Cref{thm:weakSplittingSpeedup}. For more information on the shattering technique consult, e.g., \cite{BEPS16}.
 More formally we use the following theorem to bound the size of the unsolved components after the first part. 

\begin{theorem}[Theorem V.1 in \cite{GHK18}]\label{thm:shattering}
Suppose a random process in which given a graph $G$, each node $v$ survives to a residual graph $H$ with probability at most $(e\Delta)^{-4c}$ and this bound holds even for adversarial choices of the random bits outside the $c$-neighborhood of $v$. Then w.h.p. each connected component of $H$ has size $\bigO(\Delta^{2c}\log n)$.
\end{theorem}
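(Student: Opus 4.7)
The plan is the standard \emph{graph shattering} argument: I will show that a large connected component of $H$ forces a large ``well-separated'' substructure in $G$, and such substructures are both combinatorially few and very unlikely to survive simultaneously.

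First I would define the relevant substructure. Call a vertex set $T\subseteq V(G)$ a \emph{$2c$-tree} if distinct elements of $T$ have pairwise $G$-distance strictly greater than $2c$ and $T$ is connected in the graph power $G^{2c+1}$. The key combinatorial lemma is that every connected subgraph $C\subseteq V(G)$ of size at least $s\cdot \Delta^{2c}$ contains a $2c$-tree of size at least $s$. I would prove this by BFS on a spanning tree of $C$: greedily partition the BFS traversal into clusters each of $G$-diameter at most $2c$ (so each cluster has size at most $\Delta^{2c}$), and pick one representative per cluster in the order the clusters are created. The BFS order ensures each new representative lies within $G$-distance $2c+1$ of an earlier representative (connectivity in $G^{2c+1}$), while the cluster-diameter condition forces pairwise $G$-distance greater than $2c$.

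Second, I would count. Since $G^{2c+1}$ has maximum degree at most $\Delta^{2c+1}$, a standard subtree-counting bound shows that the number of connected vertex sets of size $s$ in $G^{2c+1}$ containing any fixed vertex is at most $(e\Delta^{2c+1})^{s}$. Hence $G$ admits at most $n\cdot (e\Delta^{2c+1})^s$ distinct $2c$-trees of size $s$. Third, I would invoke the hypothesis. For a $2c$-tree $T$, the $c$-neighborhoods of its members are pairwise vertex-disjoint in $G$, so by the adversarial independence assumption the events $\{v\in H\}_{v\in T}$ are mutually independent, each of probability at most $(e\Delta)^{-4c}$. A union bound over the counted structures yields
\[
\Pr\bigl[\exists\ 2c\text{-tree of size }s\text{ contained in }H\bigr]\;\leq\;n\cdot (e\Delta^{2c+1})^{s}\cdot (e\Delta)^{-4cs}\;\leq\;n\cdot e^{-(4c-1)s}\Delta^{-(2c-1)s}.
\]
For $c\geq 1$ and $s=\Theta(\log n)$ with a sufficiently large hidden constant, this probability is at most $1/\poly(n)$. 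By the contrapositive of the combinatorial lemma, w.h.p.\ no connected component of $H$ exceeds $s\cdot \Delta^{2c}=\bigO(\Delta^{2c}\log n)$ vertices.

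I expect the main technical obstacle to be the combinatorial lemma: the BFS/clustering construction must simultaneously secure pairwise $G$-distance strictly greater than $2c$ and $G^{2c+1}$-connectivity of the chosen representatives, which requires committing to a specific traversal order and carefully controlling cluster sizes even for ``thin'' components such as long induced paths. The counting and independence steps that follow are essentially routine.
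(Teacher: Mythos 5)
This theorem is cited from \cite{GHK18} without an in-paper proof, so the comparison is against the standard shattering argument, which is exactly the route you take: locate a $>2c$-separated, $G^{2c+1}$-connected witness set inside any large component of $H$, count the possible witnesses, and union-bound using the independence granted by disjoint $c$-neighborhoods. The counting, the independence step, and the final algebra $(e\Delta^{2c+1})^s(e\Delta)^{-4cs}=e^{-(4c-1)s}\Delta^{-(2c-1)s}$ are all correct for $c\ge 1$, which is the regime in which the paper invokes the theorem.

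The gap is exactly where you register unease, and it is genuine. Clusters of $G$-diameter at most $2c$ do \emph{not} force representatives of distinct clusters to lie at pairwise $G$-distance greater than $2c$: two tree-adjacent clusters can have representatives at $G$-distance $1$. Worse, a representative of a $2c$-diameter cluster is only guaranteed to lie within $G$-distance roughly $4c+1$ of a representative of a tree-neighboring cluster, and with $G^{4c+1}$ in place of $G^{2c+1}$ the $\Delta$-factor in your union bound becomes $\Delta^{+s}$ and the bound fails for large $\Delta$. The fix is to not pre-cluster at all, but to select greedily during a tree traversal: take a spanning tree of the component in $H$, walk its Euler tour, and add a visited node $v$ to $T$ iff $\operatorname{dist}_G(v,T)>2c$ at the time of the visit. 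Then $T$ is $>2c$-separated by construction; when $t_{i+1}$ is added, the node $w$ visited just before it was skipped because $\operatorname{dist}_G(w,t_j)\le 2c$ for some earlier $t_j$, and $w$ is $G$-adjacent to $t_{i+1}$ via a tree edge, giving $\operatorname{dist}_G(t_{i+1},t_j)\le 2c+1$ and hence $G^{2c+1}$-connectivity of $T$; and every node of the component is within $G$-distance $2c$ of $T$, so $|T|=\Omega\bigl(|C|/\Delta^{2c}\bigr)$ since a ball $B_G(\cdot,2c)$ has at most $2\Delta^{2c}$ vertices. With this construction in place of the cluster-and-representative step, the rest of your argument goes through unchanged.
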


Our shattering algorithm works as follows:

\medskip

\noindent\textbf{Shattering Algorithm:} \textit{Coloring phase:} Each node in $V$ colors itself red with probability $1/4$, blue with probability $1/4$ and remains uncolored otherwise.  
 \textit{Uncoloring phase:} Any $u\in U$ that has more than $3/4$ colored neighbors in $V$ uncolors all of its neighbors.

\medskip

After the execution of the shattering algorithm, a node $u\in U$ is \emph{satisfied} if it has at least one red and at least one blue neighbor, otherwise it is \emph{unsatisfied}.

\begin{lemma}\label{lem:shatteringProbability}
If $\Delta\geq c\log\rank$ for a sufficiently large constant $c$, the probability for a node $u\in U$ to be unsatisfied after the execution of the shattering algorithm is at most $e^{-\eta\Delta}$ for some $\eta>0$. In particular, the probability for $u$ to be unsatisfied is at most $(e\Delta\rank)^{-8}$.
\end{lemma}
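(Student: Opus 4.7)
The plan is to decompose the event that $u$ is unsatisfied into two simpler bad events whose probabilities can be bounded directly. Let $X_R$ and $X_B$ denote the number of red and blue neighbors of $u$ in $V$ after the coloring phase. The key structural observation is that a node $v \in V$ can only lose its color if some $u' \in N_B(v)$ self-uncolors, i.e., ends up with strictly more than $\tfrac34\deg(u')$ colored neighbors. In particular, if no $u' \in U$ lying in the $2$-hop neighborhood of $u$ (i.e., $u' \in N_B(v)$ for some $v \in N_B(u)$) self-uncolors, then every initially colored neighbor of $u$ retains its color, and $u$ is satisfied as soon as $X_R \geq 1$ and $X_B \geq 1$. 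Hence $u$ is unsatisfied only if either (A) $X_R = 0$ or $X_B = 0$, or (B) some $u'$ in the $2$-hop neighborhood of $u$ self-uncolors.

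Event (A) is immediate: each neighbor of $u$ is colored red independently with probability $1/4$, so $\pr{X_R=0} = (3/4)^{\deg(u)} \leq (3/4)^\Delta$, and symmetrically for $X_B$, giving $\pr{A}\leq 2(3/4)^\Delta = e^{-\Omega(\Delta)}$. For event (B), fix $u'\in U$ and let $Y_{u'}$ be the number of colored neighbors of $u'$. Since each of the $\deg(u')$ neighbors is colored independently with probability $1/2$, $\mathbb{E}[Y_{u'}] = \deg(u')/2$ and a standard multiplicative Chernoff bound (with deviation factor $1/2$) gives $\pr{Y_{u'} > \tfrac34 \deg(u')} \leq \exp(-\deg(u')/24)$. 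The almost-uniform-degree assumption $\deg(u') \geq \delta > \Delta/2$ noted at the beginning of the subsection pushes this to $\exp(-\Delta/48)$. Since the $2$-hop $U$-neighborhood of $u$ has size at most $\deg(u)\cdot\rank \leq \Delta\cdot\rank$, a union bound gives $\pr{B} \leq \Delta\cdot\rank\cdot\exp(-\Delta/48)$.

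Combining the two bounds, the probability that $u$ is unsatisfied is at most $2(3/4)^\Delta + \Delta\,\rank\,\exp(-\Delta/48)$. Under the hypothesis $\Delta \geq c\log\rank$ for a sufficiently large constant $c$, the $\log(\Delta\rank)$ factor is absorbed into the exponent, yielding $e^{-\eta\Delta}$ for some absolute $\eta>0$. Enlarging $c$ further ensures $\eta\Delta \geq 8\log(e\Delta\rank)$ (using $\log\Delta = o(\Delta)$ and $\log\rank \leq \Delta/c$), which gives the \emph{in particular} bound of $(e\Delta\rank)^{-8}$. The only real subtlety is the decomposition step: $u$ being unsatisfied is an intricate, correlated event over the joint uncolorings of all its colored neighbors, but the observation that a neighbor cannot lose its color unless some $2$-hop $U$-neighbor of $u$ self-uncolors cleanly reduces the analysis to a single Chernoff estimate combined with a plain union bound.
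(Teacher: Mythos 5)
Your proof is correct, and the decomposition you use is in fact a slight simplification of the paper's. The paper splits the bad event for $u$ into four events: $A_1$ (fewer than $1/4$ of $u$'s neighbors colored), $A_2$ (more than $3/4$ colored), $A_3$ (between $1/4$ and $3/4$ colored, all monochromatic), and $A_4$ (some $2$-hop $U$-neighbor has fewer than $1/4$ or more than $3/4$ colored); it then bounds $A_1,A_2$ by Chernoff, $A_3$ by conditioning on at least $d/4$ colored neighbors, and takes a union bound over the $2$-hop neighborhood for $A_4$. Your decomposition into (A) ``no red or no blue after the coloring phase'' versus (B) ``some $2$-hop $U$-neighbor self-uncolors'' proves exactly the same contrapositive but collapses the paper's $A_1\vee A_3$ into a single directly computable event, and trims the redundant ``fewer than $1/4$'' clause from the paper's $A_4$ (only the $>3/4$ case ever triggers an uncoloring). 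Both routes end at a bound of the form $e^{-\Omega(\Delta)} + \Delta\rank\, e^{-\Omega(\Delta)}$ and absorb the $\Delta\rank$ factor using $\Delta \ge c\log\rank$; the structural lemma ``$v$ loses its color only if some $u'\in N_B(v)$ self-uncolors'' that you make explicit is exactly what underlies the paper's $A_4$ as well.

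One minor slip: you write $\pr{X_R=0}=(3/4)^{\deg(u)}\le (3/4)^{\Delta}$, but since $3/4<1$ this inequality runs the wrong way when $\deg(u)<\Delta$. Using the near-regularity assumption $\deg(u)\ge\delta>\Delta/2$ (as you correctly do in bounding event (B)) gives $(3/4)^{\deg(u)}\le (3/4)^{\Delta/2}$, which is still $e^{-\Omega(\Delta)}$, so the conclusion is unaffected.
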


\begin{proof}
If a node $u\in U$ is unsatisfied, one of the following events occurs after the coloring phase:
\begin{enumerate}
	\item $A_1(u)=$ less than $1/4$ of its neighbors are colored
	\item $A_2(u)=$ more than $3/4$ of its neighbors are colored
	\item $A_3(u)=$ between $1/4$ and $3/4$ of its neighbors are colored and the colored neighbors are all red or all blue
	\item $A_4(u)=$ there is a node $\bar{u}\in U$ two hops from $u$ which has less than $1/4$ or more than $3/4$ of its neighbors colored (in which case $\bar{u}$ uncolors a common neighbor and thus possibly destroys $u$'s satisfaction)
\end{enumerate}
Next, we bound the probability of the events $A_1(u),A_2(u), A_3(u)$ and $A_4(u)$. Let $d:=\text{deg}(u)$ and $v_1,\dots,v_d \in V$ be $u$'s neighbors. For every $i\in\{1,\dots,d\}$ introduce a random variable $X_i$ with $X_i=1$ if $v_i$ is colored and $X_i=0$ otherwise. Let $X:=\sum\nolimits_{i=1}^d X_i$. With Chernoff bounds and $\delta\geq \Delta/2$ we obtain 

\begin{align*}
\pr{A_1(u)} & =\pr{X<\frac{1}{4}d}\leq e^{-\frac{d}{16}}\leq e^{-\frac{\Delta}{32}}\\
\pr{A_2(u)} & \leq \pr{X>\frac{3}{4}d} \leq e^{-\frac{d}{16}}\leq  e^{-\frac{\Delta}{32}} \\
%\pr{A_1(u)\vee A_2(u)}& \leq 2e^{-\frac{\Delta}{32}}
\intertext{With a union bound over $u$'s $2$-hop neighborhood $N^2(u)$ we get}
\pr{A_1(u)\vee A_2(u)\vee A_4(u)}& \leq\sum_{u'\in N^2(u)}\pr{A_1(u')\cup A_2(u')} \leq 2e^{-\frac{\Delta}{32}}\Delta \rank
\intertext{For $A_3$, given that at least $1/4$ of $u$'s neighbors are colored, we have $X\geq d/4$. Thus the probability that all colored neighbors are red (blue resp.) is at most $2^{-X}\leq 2^{-\frac{d}{4}}$. Hence, $\pr{A_3}\leq2\cdot2^{-\frac{d}{4}}\leq 2\cdot2^{-\frac{\Delta}{8}}$. 
	Thus we obtain }
\pr{u\text{ is unsatisfied}}& \leq\pr{A_1\vee A_2\vee A_3\vee A_4}\leq2e^{-\frac{\Delta}{32}}\Delta\rank+2\cdot2^{-\frac{\Delta}{8}}.
\end{align*}
With a sufficiently large constant $c$ in $\Delta\geq c\log\rank$ we obtain an $\eta>0$ such that $\pr{u\text{ is unsatisfied}}\leq e^{-\eta\Delta}\leq(e\Delta\rank)^{-8}$~.
\end{proof}

Based on \Cref{thm:shattering} and \Cref{lem:shatteringProbability}, we can now prove \Cref{thm:mainweaksplit}, our main randomized weak splitting result. \Cref{thm:shattering} and \Cref{lem:shatteringProbability} together can be used to show that after running the above shattering algorithm, the remaining problem is a weak splitting instance on components on $\poly\log n$ size. On these components, we can use our deterministic algorithm (\Cref{thm:detsplitting}) to solve the problem in $\poly\log\log n$ time.

\begin{proof}[\textbf{Proof of \Cref{thm:mainweaksplit}}]
Let $B=(U\cup V,E)$ be a bipartite graph. If $\delta>2\log n$, the trivial $0$-round algorithm (color each node in $V$ red/blue with probability $1/2$) computes a weak splitting w.h.p.: The probability of $u\in U$ to have a monochromatic neighborhood is at most $2\cdot2^{-\deg(u)}\leq2\cdot n^{-2}$. With a union bound over the nodes in $U$ we get that the failure probability of the algorithm is less than $2/n$. So in the following we assume $\delta\leq2\log n$.
Our algorithm first executes the shattering algorithm on $B$. The residual graph $H$ is the graph induced by the unsatisfied nodes in $U$ and the uncolored nodes in $V$ after the shattering. Then we use the deterministic weak splitting algorithm from  \Cref{thm:weakSplittingSpeedup} on the connected components of $H$. Let $n_H$ be the maximum size of a connected component of $H$. We need to show that $\delta_H\geq2\log n_H$ (to use the algorithm from \Cref{thm:weakSplittingSpeedup}) and $n_H=\text{poly}(\rank,\log n)$ (to achieve the stated runtime).

\smallskip

\textit{Upper Bounding $n_H$:} Let $G$ be the graph on the node set $U$ obtained by inserting an edge between two nodes in $U$ if they have a common neighbor in $B$. We have $\Delta_G\leq\Delta_B\cdot\rank_B$ (in the following let $\Delta=\Delta_B$ and $\rank=\rank_B$). 
To use \Cref{thm:shattering} on $G$ we define a randomized process on $G$ that is equivalent to the shattering algorithm: Each node $v\in V$ is assigned to its neighbor $u\in U$ with the smallest id. Then the behaviour of $v$, that is, picking a random color, informing neighbors about the color choice and potentially uncoloring itself is simulated by $u$. All other definitions remain the same, in particular, a node $u\in U$ is satisfied if there are $v',v''\in V$ that are simulated by nodes $u',u''\in U$ and $v'$ is colored red and $v''$ is colored blue. Thus, the process is a random process on $G$ in which a node $u\in V(G)$ is unsatisfied with probability at most $(e\Delta\rank)^{-8}$, even if random bits at nodes in distance larger than $2$ in $G$ are drawn adversarially.  With \Cref{thm:shattering} it follows that after the shattering, w.h.p., each connected component of the residual graph $H'\subseteq G$ induced by the unsatisfied nodes in $G$ has size $n_{H'}=\bigO(\Delta^4\rank^4\log n)$. As each node of $G$ has at most $\Delta=\bigO(\log n)$ neighbors we can bound the size of the connected components of unsatisfied nodes in $U$ and uncolored nodes in $V$ as $n_H\leq n_{H'}\cdot \Delta=O(\rank^4\log^6 n)$, w.h.p.

\smallskip

\textit{Lower Bounding $\delta_H$: } Due to the uncoloring phase of the shattering algorithm, each node has at least $1/4$ of its neighbors uncolored after the shattering. It follows that $\delta_H\geq\delta/4$. If we choose the constant $c$ in $\delta\geq c\log(\rank\log n)$ large enough, we get $\delta_H\geq2\log n_H$. 

\textit{Solving Small Connected Components/Runtime:} 
Due to $\delta_H\geq2\log n_H$ we can apply the deterministic algorithm from \Cref{thm:weakSplittingSpeedup} on the unsolved connected components of $H$.
The splitting algorithm including the uncoloring runs in $O(1)$ rounds. The application of \Cref{thm:weakSplittingSpeedup} on the small components has runtime  
\begin{align*}
&\bigO(\rank_H/\delta_H\cdot\log^2n_H+\log^3n_H(\log\log n_H)^{1.1})\\
&= \bigO\left(\frac{\rank}{\delta}\log^2(\rank\log n)+\log^3(\rank\log n)\left(\log\log (\rank\log n)\right)^{1.1}\right)~, 
\end{align*}
where we used  $n_H=\text{poly}(\rank,\log n)$, $R_H\leq R$ and $\delta_H\geq\delta/4$.
\end{proof}

\hide{
\begin{remark}
The exact runtime of \Cref{thm:mainweaksplit} is 
 \[\bigO\left(\frac{\rank}{\delta}\log^2(\rank\log n)+\log^3(\rank\log n)\left(\log\log (\rank\log n)\right)^{1.1}\right).\]
\end{remark}
}

\subsection{Lower Bound for Weak Splitting}

%Split every node $u\in V$ into nodes $u_l$ and $u_h$. For each edge $\{u,v\}\in E$, add the edge $\{u_l,v_h\}$ (if $\text{ID}(u)<\text{ID}(v)$) or $\{u_h,v_l\}$ (if $\text{ID}(u)>\text{ID}(v)$) to $E'$.

\begin{figure}[t]
  \begin{center}
	\begin{subfigure}{.24\linewidth}
	  \centering
	  \includegraphics[width=.85\linewidth]{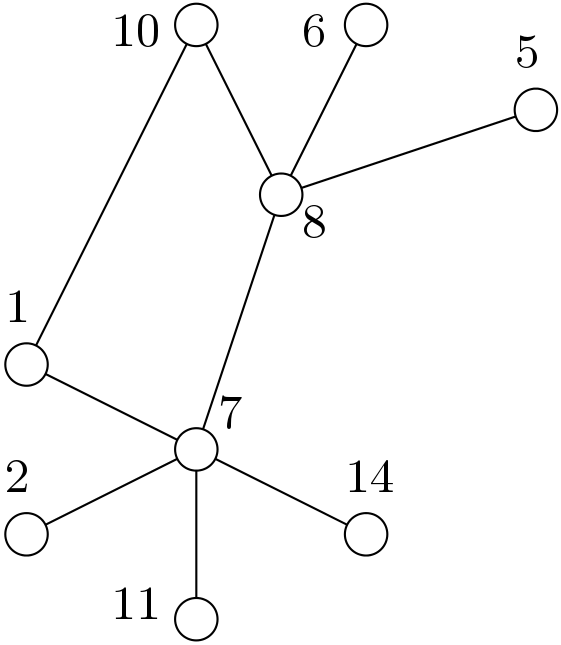} \caption{(A part of) the input graph}
	\end{subfigure}
	\begin{subfigure}{.24\linewidth}
	  \centering
	  \includegraphics[width=.85\linewidth]{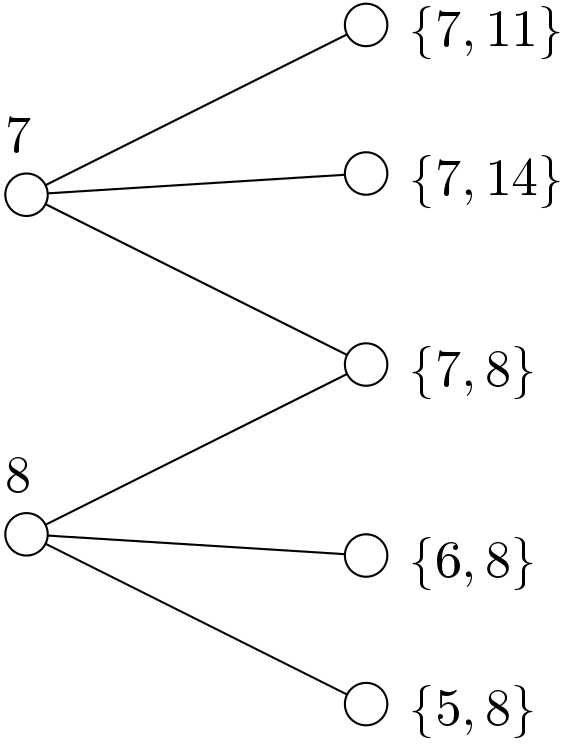} \caption{The bipartite graph $B$}
	\end{subfigure}
	\begin{subfigure}{.24\linewidth}
	  \centering
	  \includegraphics[width=.85\linewidth]{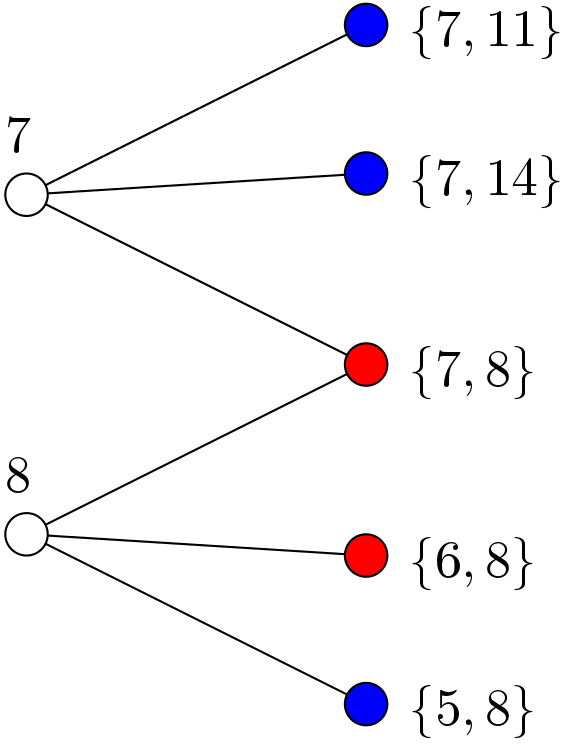} \caption{A weak splitting}
	\end{subfigure}
	\begin{subfigure}{.24\linewidth}
	  \centering
	  \includegraphics[width=.85\linewidth]{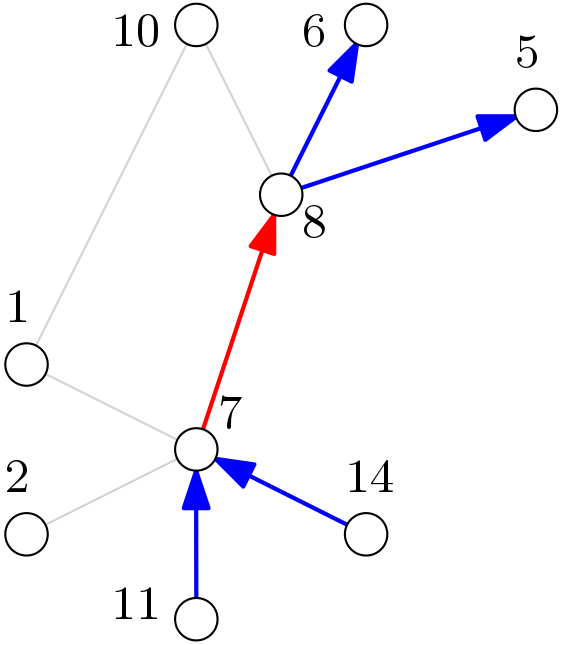} \caption{Resulting orientation}
	\end{subfigure}
	\caption{Figure (b) illustrates the neighborhoods of nodes $7$ and $8$ in the bipartite graph $B$. The graph $B$ has rank at most $2$, since the nodes on the right correspond to edges that have at most $2$ endpoints. Figure (c) shows a possible outcome of a weak splitting. Since the edge $\{ 7, 8\}$ is red, it gets directed from smaller to larger. Since $8$ has also a blue neighbor, at least one of its edges will get directed outwards.}
	\label{fig:weaksplit}
      \end{center}
\end{figure}

We conclude this section by giving a simple lower bound that shows that the weak splitting problem requires $\Omega(\log_{\Delta}\log n)$ time randomized and $\Omega(\log_\Delta n)$ time determinstically even on instances of rank $\rank=2$. The proof is based on a reduction from the sinkless orientation problem and the main steps of the reduction are depicted in \Cref{fig:weaksplit}. Given a graph $G$ on which we want to compute a sinkless orientation, we build a bipartite graph $B$ with right-hand side degree at most $2$ such that a weak splitting solution directly gives a sinkless orientation $G$. The lower bounds then follows from the sinkless orientation lower bounds in \cite{STOC16,CKP16}.

\begin{theorem}\label{lowerboundrand}
There is \emph{no} randomized distributed algorithm for solving the weak splitting problem in bipartite graphs with maximum degree $\Delta$, even if the rank is as small as $2$ in $o(\log_{\Delta}\log n)$ rounds.
\end{theorem}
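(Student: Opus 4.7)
The plan is to reduce sinkless orientation on $\Delta$-regular bipartite graphs to weak splitting on rank-$2$ bipartite graphs of maximum degree $\Delta$, and then invoke the known sinkless orientation lower bound of $\Omega(\log_\Delta\log n)$ from \cite{STOC16,CKP16}. Recall that sinkless orientation asks for an orientation of every edge of the input graph $G$ such that no vertex has out-degree zero.

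Given a $\Delta$-regular bipartite graph $G=(A\cup B, E_G)$, I would construct the weak splitting instance $B=(U\cup V,E)$ as in \Cref{fig:weaksplit}(b): set $U:=V_G=A\cup B$, introduce a node $x_e\in V$ for every $e\in E_G$, and connect $x_e$ to the two endpoints of $e$ in $G$. Each $x_e$ then has degree exactly $2$, so $\rank=2$, and each left-side node has the same degree in $B$ as in $G$, so the maximum degree of $B$ equals $\Delta$. Every node of $G$ can simulate its part of $B$ locally without any communication. Given any weak splitting of $B$ into colors red/blue, I would orient each edge $\{a,b\}\in E_G$ with $a\in A$ and $b\in B$ as $a\to b$ if $x_{\{a,b\}}$ is red, and as $b\to a$ if it is blue. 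By the weak splitting property, every $u\in U$ has at least one red and one blue incident $V$-neighbor; hence every $a\in A$ possesses a red (outgoing) incident edge, and every $b\in B$ possesses a blue (outgoing) incident edge, so the resulting orientation is sinkless.

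The reduction is local and multiplies the number of nodes by at most a factor $O(\Delta)$, which preserves $\log n$ up to constants (in the $\LOCAL$ model $\Delta\leq n$, so $\log(n\Delta)=\Theta(\log n)$) and does not alter $\Delta$. Therefore any hypothetical $o(\log_\Delta\log n)$-round randomized weak splitting algorithm on rank-$2$ bipartite graphs of maximum degree $\Delta$ would immediately yield a randomized sinkless orientation algorithm of the same asymptotic complexity on $\Delta$-regular bipartite graphs, contradicting \cite{STOC16,CKP16}. The main point requiring care is to check that the invoked sinkless orientation lower bound does apply in the precise setting we need---$\Delta$-regular bipartite graphs with a known bipartition, so that the red/blue $\leftrightarrow a\to b/b\to a$ rule is well defined---which is the standard setting in which the lower bound is established.
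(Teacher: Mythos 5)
Your proposal follows the same core reduction as the paper: turn the vertices of $G$ into $U$-nodes and the edges of $G$ into rank-$2$ $V$-nodes, solve weak splitting, and read off a sinkless orientation from the red/blue labels, then invoke the $\Omega(\log_\Delta\log n)$ randomized lower bound for sinkless orientation. The one place the two arguments genuinely diverge is the rule that turns a $2$-coloring of edges into a \emph{consistent} orientation. You rely on a given bipartition $A\cup B$ of $G$ (red $=A\to B$, blue $=B\to A$), and connect every $u\in U$ to \emph{all} of its incident edges. The paper instead reduces from sinkless orientation on graphs of minimum degree $\geq 5$ (not necessarily bipartite, no auxiliary coloring assumed): each vertex $u$ keeps only the edge-nodes corresponding to the majority side of its ID-comparison (all larger-ID or all smaller-ID neighbors), and the universal rule ``red $=$ smaller-to-larger ID, blue $=$ larger-to-smaller'' then guarantees that the red and blue neighbors $u$ is promised by weak splitting are one outgoing and one incoming edge.

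The concern you flag at the end — whether the sinkless orientation lower bound holds for $\Delta$-regular bipartite graphs \emph{with the bipartition given as input} — is precisely what the paper's ID-trick sidesteps: the paper only needs the lower bound on (not necessarily bipartite) regular graphs without any auxiliary coloring, which is the form it is stated in \cite{STOC16,CKP16}. Your version is cleaner, but it leans on a strengthening of the lower bound (a $2$-coloring given as input) that is not what the cited theorems literally state; you should either (a) argue that the round-elimination lower bound is indifferent to a given $2$-coloring, or (b) adopt the paper's ID-comparison device, which requires no such strengthening and also frees you from the assumption that $G$ is bipartite or regular (minimum degree $\geq 5$ suffices, and the resulting instance still has rank $2$ and degree $\geq 3$). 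Apart from this, your direction of the reduction and parameter accounting ($\log(n\Delta)=\Theta(\log n)$, $\rank=2$, $\Delta_B=\Delta$) are correct.
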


\begin{proof}
We show how to reduce the sinkless orientation problem for graphs with minimum degree at least $5$ to the weak splitting problem. The statement then follows as there is an $\Omega(\log_{\Delta}\log n)$ lower bound for sinkless orientation (in regular graphs) from \cite{STOC16}.

For an illustration of the following process, we refer to \Cref{fig:weaksplit}.
We are given a graph $G=(V,E)$ with minimum degree $\delta_G\geq 5$ on which we want to compute a sinkless orientation by reduction to a weak splitting instance. For this purpose, we define a bipartite graph $B$ (i.e., a weak splitting instance) between the nodes and edges of $G$ in the following way: There is a left-hand-side node in $B$ for every node $v\in V$ of $G$ and there is a right-hand side node in $B$ for every edge $e\in E$ of $G$. We connect each left-hand-side node $u\in V$ to the right-hand-side nodes corresponding to some of $u$'s edges in $G$: if at least half of $u$'s neighbors in $G$ have a larger ID than $u$, we connect $u$ to every $e=\{u,v\}\in E$ for which $\text{ID}(v)>\text{ID}(u)$. Otherwise, i.e., if more than half of $u$'s neighbors have a smaller IDs than $u$, we connect $u$ to every $e=\{u,v\}\in E$ for which $\text{ID}(v)<\text{ID}(u)$. The resulting bipartite graph has rank at most $2$ and degree at least $\lceil\delta_G/2\rceil\geq 3$, which is sufficient for the weak splitting problem on $B$ to be solvable. A weak splitting solution on $B$ is a red/blue coloring of the right-hand-side nodes of $B$ and thus of the edges $E$ of $G$. Either (if at least half of $u$'s neighbors have a larger ID) $u$ gets both a red and a blue edge to a larger ID neighbor, or (if at more than half of $u$'s neighbors have a smaller ID), $u$ gets both a red and a blue edge to a lower ID neighbor. We therefore directly obtain a sinkless orientation of $G$ by orienting the edges $e\in E$ as follows. If an edge $e\in E$ is colored red, it gets directed from the node with smaller ID towards the node with higher ID, and if $e$ is colored blue, it is directed the other way around.

For the number of nodes in $B$, denoted by $n_B$, we have $n_B=|V|+|E|$ and for the degrees we have $\Delta_B\leq\Delta_G$. Thus, if one could solve a weak splitting in $o(\log_{\Delta_B}\log n_B)$ rounds, one could also compute a sinkless orientation in time $o(\log_{\Delta_G}\log n_G)$.
\end{proof}

\begin{corollary}\label{cor:detlower}
There does not exist a deterministic distributed algorithm solving the weak splitting problem in bipartite graphs in $o(\log_{\Delta}n)$ rounds.
\end{corollary}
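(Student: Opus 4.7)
The plan is to reuse verbatim the reduction constructed in the proof of \Cref{lowerboundrand} and simply invoke the deterministic, rather than the randomized, lower bound for sinkless orientation. Recall that in \cite{CKP16} it is shown that every deterministic \LOCAL algorithm computing a sinkless orientation on graphs of maximum degree $\Delta$ (and minimum degree at least some constant such as $5$) requires $\Omega(\log_\Delta n)$ rounds. Our task is therefore just to verify that the reduction preserves the relevant parameters well enough so that a $o(\log_\Delta n)$-round weak splitting algorithm would translate into a $o(\log_{\Delta_G} n_G)$-round sinkless orientation algorithm.

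Concretely, I would start from an arbitrary graph $G=(V,E)$ with $\minDegree_G \geq 5$ and maximum degree $\Delta_G$, and construct the bipartite graph $B$ exactly as in \Cref{lowerboundrand}: one left node for each $v\in V$, one right node for each $e\in E$, and edges of $B$ determined by each left node's local ID comparison with its $G$-neighbors. As observed there, $B$ has rank at most $2$, left-degree at least $\lceil\minDegree_G/2\rceil\geq 3$, maximum degree $\Delta_B\leq \Delta_G$, and $n_B=|V|+|E|\leq |V|(1+\Delta_G/2)$. The construction itself is fully local: every node of $G$ can simulate its counterpart in $B$ (and the incident right nodes of $B$) with constant overhead per round, using knowledge of its own $1$-hop neighborhood in $G$. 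Any weak splitting of $B$ translates locally back into a sinkless orientation of $G$ by the same color-to-direction rule used in \Cref{lowerboundrand}.

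Now suppose, for contradiction, that there is a deterministic weak splitting algorithm $\mathcal{A}$ with running time $T(n,\Delta)=o(\log_\Delta n)$ on bipartite instances. Running $\mathcal{A}$ (via the local simulation) on the constructed $B$ and then applying the orientation rule yields a deterministic sinkless orientation algorithm on $G$ with round complexity $O(T(n_B,\Delta_B)) = o(\log_{\Delta_B} n_B)$. Since $\Delta_B\leq \Delta_G$ and $n_B\leq |V|\cdot(1+\Delta_G/2) \leq n_G^{O(1)}$ (for $\Delta_G\leq n_G$), we have $\log_{\Delta_B} n_B = O(\log_{\Delta_G} n_G)$, so the resulting sinkless orientation algorithm runs in $o(\log_{\Delta_G} n_G)$ rounds, contradicting the deterministic lower bound of \cite{CKP16}.

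The only part that requires a moment of care is the parameter comparison $\log_{\Delta_B} n_B = \Theta(\log_{\Delta_G} n_G)$, which I expect to be the main (and mild) obstacle: one must check that the polynomial blow-up $n_B\leq n_G\cdot(1+\Delta_G/2)$ does not change the asymptotic behavior of $\log_\Delta n$. This is immediate whenever $\Delta_G = \omega(1)$ (so that $\log_{\Delta_G}(1+\Delta_G) = O(1)$), and the regime $\Delta_G = O(1)$ is handled by noting that $\log_{\Delta_G} n_G = \Theta(\log n_G)$ and that the sinkless orientation lower bound from \cite{CKP16} gives $\Omega(\log n_G)$ in this case too. Hence the asserted $\Omega(\log_\Delta n)$ deterministic lower bound on weak splitting follows.
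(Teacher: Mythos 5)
Your proof is correct, but it takes a slightly different route than the paper's: you reuse the sinkless-orientation reduction from \Cref{lowerboundrand} and import the \emph{deterministic} $\Omega(\log_\Delta n)$ lower bound for sinkless orientation from \cite{CKP16}, while the paper applies the \cite{CKP16} automatic speedup directly to weak splitting itself (transforming a hypothetical $o(\log_\Delta n)$-round deterministic weak-splitting algorithm into an $O(\log^* n - \log^*\Delta + 1)$-round one) and then derives a contradiction with the \emph{randomized} lower bound it just proved in \Cref{lowerboundrand}. Both arguments ultimately rest on the same speedup machinery of \cite{CKP16} (the deterministic sinkless-orientation lower bound is itself established there by that very speedup combined with the randomized bound of \cite{STOC16}), so the two are mathematically close; the paper's packaging is a bit more self-contained in that it upgrades its own randomized lower bound without re-invoking the reduction, whereas yours is the more "textbook" approach of reusing a reduction against a stronger black-box lower bound on the source problem. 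Your parameter bookkeeping is sound: $\Delta_B = \Theta(\Delta_G)$ (each left node of degree $d$ in $G$ keeps at least $\lceil d/2\rceil$ edges in $B$, so $\Delta_B \ge \Delta_G/2$) and $n_B \le n_G(1+\Delta_G/2) \le n_G^2$, so $\log_{\Delta_B} n_B = O(\log_{\Delta_G} n_G)$ and the $o(\cdot)$ bound transfers as you claim.
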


\begin{proof}
If there was a $o(\log_{\Delta}n)$ deterministic algorithm for weak splitting, it could be transformed to run in $\bigO(\log^*n-\log^*\Delta+1)$ rounds (\cite{CKP16}), violating the lower bound from \Cref{lowerboundrand}.
\end{proof}

\section{Completeness of Weak Splitting Variants}
\label{sec:completeness}

In this section, we prove that the $(C,\lambda)$-multicolor splitting
and the $C$-weak multicolor splitting problems that were introduced in
\Cref{sec:contributions} are \PRLOCAL-complete. That is, if a
$\poly\log n$-time deterministic \LOCAL algorithm for one of the two
problems exists, then such an algorithm exists for all problems in
\PRLOCAL and thus in particular for problems such as
$(\Delta+1)$-coloring or MIS. In the two proofs, we use the \SLOCAL
model, which was introduced in \cite{GKM17}. The \SLOCAL model can be
seen as a sequential version of the \LOCAL model. In a
$\SLOCAL(t)$-algorithm, the nodes of a given graph $G$ are processed
in an arbitrary sequential order. Each node is assumed to have some
local memory, which initially just contains its unique ID and any
possible input to the problem we need to solve. When a node is
processed, it can read the current state of its $t$-hop neighborhood
and based on this information, it can store its output and possibly
additional information in its local memory. Before proving the
completeness of the two new problems, we give a slightly more precise
\PRLOCAL-completeness result than the one proved in \cite{GKM17}.

\begin{lemma}\label{lemma:weaksplittingcomplete}
  For bipartite graphs $B=(U\cup V, E)$ where all nodes $u\in U$ have degree $\deg(u)\geq 2\log n$, the weak splitting problem has a deterministic $\SLOCAL(2)$-algorithm and it is $\PRLOCAL$-complete.
\end{lemma}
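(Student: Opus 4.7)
The plan is to prove the lemma in two parts, with the $\SLOCAL(2)$ algorithm doing the heavy lifting and the $\PRLOCAL$-completeness then following from the machinery of \cite{GKM17, GHK18}. For the $\SLOCAL(2)$ algorithm, I would derandomize the trivial $0$-round randomized algorithm (color each $v\in V$ uniformly and independently red/blue) via the method of conditional expectations. Recall from \Cref{lem:Easyderand} that under this random process the probability that a given $u\in U$ becomes \emph{unhappy} (i.e., has a monochromatic neighborhood) is at most $2/2^{\deg(u)}\le 2/n^2$, so by linearity the expected number of unhappy nodes is at most $2|U|/n^2 < 1$. I would process the nodes of $V$ in any sequential order: when $v$ is processed, it inspects its $2$-hop neighborhood in $B$---which reveals all constraint neighbors $u\in U$ together with every $v'\in V$ that already shares a constraint with $v$---and computes, for each of the two possible color choices, the conditional expected number of unhappy nodes given the colors fixed so far. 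The node $v$ commits to the color whose conditional expectation does not exceed the current one. The potential stays below $1$ throughout, and once all of $V$ is decided the count is deterministic and must equal $0$, giving a valid weak splitting. Since the checking radius from each $u\in U$ is $1$, this fits the framework of \cite[Theorem III.1]{GHK18} and yields an $\SLOCAL(2)$-algorithm.

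For $\PRLOCAL$-completeness, membership is immediate: the same $0$-round randomized algorithm succeeds with probability $1-O(1/n)$, and weak splitting is $1$-locally checkable, so it lies in $\PRLOCAL$. For hardness, I would reuse the reduction of \cite{GKM17}: a hypothetical $\poly\log n$-time deterministic \LOCAL algorithm for weak splitting can be iteratively applied to build a network decomposition with polylogarithmic parameters deterministically in $\poly\log n$ rounds, and by \cite{GHK18} any such decomposition enables the derandomization of any algorithm in $\PRLOCAL$, giving $\PLOCAL = \PRLOCAL$. The refinement over \cite{GKM17} is that the degree threshold $\deg(u)\ge 2\log n$ is exactly the one required by the $\SLOCAL(2)$ algorithm above, rather than a stronger polylogarithmic condition.

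The main obstacle is parameter alignment across the chain of reductions. First, \cite{GKM17} originally proves completeness with respect to $\PSLOCAL$, so I would rely on the equivalence between $\PSLOCAL$ and $\PRLOCAL$ for $\poly\log(n)$-locally checkable problems established in \cite{GHK18} (weak splitting is $1$-locally checkable, so this applies). Second, each intermediate weak splitting instance spawned inside the reduction must itself satisfy the minimum degree condition $\delta\ge 2\log n$; since these instances are produced by an explicit construction whose parameters are under our control, a careful parameter audit of the \cite{GKM17} construction suffices to propagate the tighter bound. No new algorithmic ideas are needed beyond the $\SLOCAL(2)$ algorithm; the remainder is the existing machinery applied with refined parameters.
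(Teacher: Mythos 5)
Your proposal matches the paper's proof in substance: both derandomize the trivial $0$-round random coloring via the framework of \cite[Theorem III.1]{GHK18} (your method-of-conditional-expectations description is exactly what that theorem encapsulates) to obtain the $\SLOCAL(2)$ algorithm, both establish membership in \PRLOCAL from the same $0$-round randomized algorithm plus local checkability, and both read hardness off the reduction behind \cite[Theorem 1.8]{GKM17}, with the \PSLOCAL/\PRLOCAL equivalence from \cite{GHK18} bridging the class names. The only place you slightly over-engineer is the "parameter audit" of the hardness direction: since the \cite{GKM17} reduction already produces instances with degree $\ge \log^c n$ for $c>1$ and such instances are a subset of those with $\delta\ge 2\log n$, hardness for the relaxed threshold follows immediately with no further bookkeeping — the refinement in this lemma is entirely on the algorithm/membership side, not the hardness side.
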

\begin{proof}
  It follows directly from the proof of \cite[Theorem 1.8]{GKM17} that the given weak splitting problem is \PRLOCAL-hard. It remains to show that the problem is in \PRLOCAL. Consider the $0$-round randomized algorithm where each node in $V$ picks color red or blue uniformly at random. The probability that a node only sees one color is at most $2(1/2)^{2\log n}=2/n^2$. Since the validity of a weak splitting solution can be checked in a single round deterministically, it now follows directly from \cite[Theorem III.1]{GHK18} that the above weak splitting solution has a deterministic $\SLOCAL(2)$ algorithm. In \cite{GKM17}, it is shown that such an algorithm can be turned in to randomized \LOCAL model with $\poly\log n$ time complexity.
\end{proof}

\begin{theorem}\label{thm:weakmulticolorsplit}
  For bipartite graphs $B=(U \cup V, E)$ where all nodes in $u\in U$ have degree $\deg(u)\geq (2\log n+1)\ln^c n$ for some constant $c\geq 1$, the weak multicolor splitting problem is \PRLOCAL-complete for any $C\leq \polylog n$.
\end{theorem}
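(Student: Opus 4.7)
The plan is to prove the theorem in two parts: membership in \PRLOCAL{} and \PRLOCAL-hardness.

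\emph{Membership.} I would analyze the trivial $0$-round algorithm that colors each $v\in V$ uniformly and independently from $[C]$. Fixing $u\in U$ with $d=\deg(u)\geq(2\log n+1)\ln^c n$ and taking a union bound over all subsets of $[C]$ of size $2\log n-1$ that could contain $u$'s observed colors yields
\[
\Pr[u\text{ sees fewer than }2\log n\text{ colors}]\leq \binom{C}{2\log n-1}\cdot\left(\frac{2\log n-1}{C}\right)^{d}.
\]
With $C\leq\polylog n$ and $c$ sufficiently large in terms of the polylog exponent of $C$, this is at most $n^{-2}$, so a union bound over $U$ shows the random algorithm succeeds w.h.p. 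Since validity of a $C$-weak multicolor splitting is $1$-locally checkable, \cite[Theorem III.1]{GHK18} converts this into a deterministic $\SLOCAL(2)$ algorithm, and the $\SLOCAL$-to-randomized-\LOCAL{} argument from \cite{GKM17} used in the proof of \Cref{lemma:weaksplittingcomplete} then places the problem in \PRLOCAL.

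\emph{Hardness.} I would reduce from weak splitting on instances with left-side minimum degree at least $2\log n$, which is \PRLOCAL-complete by \Cref{lemma:weaksplittingcomplete}. Given such $B=(U\cup V,E)$, I would construct a padded bipartite graph $B'=(U\cup V',E')$ with $V'=V\times[K]$ for $K=\Theta(\ln^c n)$ and with $u$ adjacent to $(v,k)$ iff $uv\in E$, so that $\deg_{B'}(u)\geq(2\log n+1)\ln^c n$. Applying a putative deterministic $\polylog n$-time $C$-weak multicolor splitting algorithm to $B'$ yields a coloring $\chi\colon V'\to[C]$ in which every $u$ sees at least $2\log n$ distinct colors. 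To extract a weak splitting of $B$, I would use that for any two distinct colors $c_1,c_2\in[C]$ some bit position distinguishes them, so for each $u$ some bit-wise hash $h_b(c):=\text{bit-}b(c)$ makes $\chi(N_{B'}(u))$ bichromatic. I would iterate over the $O(\log C)$ bit positions, for each $b$ defining a candidate 2-coloring of $V$ by aggregating bit $b$ over each node's $K$ copies (e.g.\ by majority), keeping the $u$'s this coloring satisfies, and recursing on the residual unsatisfied $U$-nodes after re-padding their degrees. Since each iteration satisfies at least a $1/\log C$ fraction of the remaining constraints, $O(\log n\cdot\log\log n)$ iterations suffice and the total work stays polylogarithmic.

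\emph{Main obstacle.} The delicate step is the final extraction from $\chi$ to a genuine 2-coloring: an adversarial $\chi$ may concentrate the $2\log n$ distinct colors $u$ sees inside a biased subset of $[C]$, so no single fixed partition of $[C]$ works for every $u$ at once. Making the iterative peeling sound requires carefully maintaining the degree invariants needed to reapply the $C$-weak multicolor splitting oracle at each recursion level and verifying that at least a $1/\log C$ fraction of violated constraints is genuinely eliminated per round. A cleaner alternative I would also explore is to mimic the derandomization framework of \cite{GKM17} directly, arguing that a deterministic $C$-weak multicolor splitting algorithm can drive the one-bit-at-a-time fixing of random bits inside an $\SLOCAL(O(1))$ algorithm, which would yield \PRLOCAL-hardness without going through an explicit graph-theoretic reduction from weak splitting.
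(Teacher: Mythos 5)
Your membership sketch is in the right spirit but is less robust than the paper's: the paper has $V$-nodes sample only from the first $\lceil 2\log n\rceil$ colors, so the failure analysis becomes a clean coupon-collector bound that works uniformly in $C$. Your union bound over $\binom{C}{2\log n-1}$ subsets of $[C]$ is problematic precisely in the hardest regime $C$ close to $2\log n$: there $\binom{C}{2\log n-1}\cdot\bigl(\tfrac{2\log n-1}{C}\bigr)^d$ is only about $(\log n)\cdot e^{-\ln^c n}$, which for $c=1$ is $\Theta(\log n/n)$, not $n^{-2}$, so your claim that choosing ``$c$ sufficiently large in terms of the polylog exponent of $C$'' rescues it does not apply (here the polylog exponent of $C$ is $1$). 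Restricting the palette to $\lceil 2\log n\rceil$ colors is the fix.

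The hardness direction is where the proposal really diverges, and there is a genuine gap. The paper's key idea is not to \emph{extract} a two-coloring from the multicoloring at all. Instead one runs the assumed $C$-weak multicolor splitting algorithm directly on $B$, and for each $u\in U$ keeps a set $S(u)$ of $\lceil 2\log n\rceil$ neighbors with pairwise distinct colors; on the subgraph $B'$ retaining only the edges $u$--$S(u)$, any two $V$-nodes at distance two through a common $u$ have distinct colors, so the multicoloring is a \emph{proper partial $C$-coloring of $B'^2$}. That is exactly the resource needed to run the deterministic $\SLOCAL(2)$ weak-splitting algorithm (which exists by \Cref{lemma:weaksplittingcomplete}) in $O(C)=\polylog n$ \LOCAL rounds via \cite[Proposition III.2]{GHK18}. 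Your approach instead pads $V$ and tries to read off a red/blue coloring from bit positions of the multicolor labels; this fails on exactly the obstacle you name. Different $u$'s need different bits, but the two-coloring of $V$ must be a single consistent assignment, so you cannot ``keep the satisfied $u$'s and recurse'' while redefining colors of already-colored nodes in later rounds; and the majority aggregation over the $K$ copies of a node can exactly kill the minority copy that was witnessing bichromaticity for some $u$. Your ``cleaner alternative'' of mimicking \cite{GKM17} is closer in spirit to what actually works, but it still does not identify the operational fact that makes the reduction go through, namely that distinctness of observed colors around each constraint node is what upgrades the multicoloring into a distance-two coloring usable as an $\SLOCAL$-to-$\LOCAL$ scheduler.
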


\begin{proof}
  We first show that for the given parameters, the problem is in
  \PRLOCAL. Consider the randomized process where every node in $V$
  chooses one of the first $\lceil 2\log n\rceil$ colors independently
  and uniform at random. For each node $u\in U$ and each color $x$
  (among the $\lceil 2\log n\rceil$ colors), the probability that no
  neighbor of $u$ chooses color $x$ is
  \[
  \left(1- \frac{1}{\lceil 2\log n\rceil}\right)^{\deg(u)}\  \stackrel{(c>1)}{<}\ 
  \left(1 - \frac{1}{2\log n + 1}\right)^{(2\log n +1)\ln n}\ < \ \frac{1}{n}.
  \]
  As the number of nodes in $U$ is less than $n$, it follows the
  expected number of nodes in $U$ that see less than $2\log n$
  different colors is less than $1$. As the above random process can
  be implemented in the \LOCAL model in $0$ rounds (i.e., without
  communication) and since the correctness of a weak multicolor
  splitting solution can be locally checked in a single round, it
  follows by \cite[Theorem III.1]{GHK18} that the above random process
  can be derandomized into a deterministic $\SLOCAL(2)$ algorithm that
  solves the given weak multicolor splitting problem. It is shown in
  \cite{GKM17} that such an algorithm can be turned into a
  $\poly\log n$-time randomized \LOCAL algorithm and we can therefore
  conclude that the given weak multicolor splitting problem is in
  \PRLOCAL.

  To prove the hardness of the problem, we reduce the weak splitting
  problem to the weak multicolor splitting problem in $\polylog n$
  rounds. Assume that we are given a bipartite graph $B=(U\cup V, E)$
  where all nodes in $u\in U$ have degree
  $\deg(u)\geq (2\log n+1)\ln^c n$ for some constant $c\geq 1$ on
  which we want to solve weak splitting. By
  \Cref{lemma:weaksplittingcomplete}, we know that weak splitting on
  bipartite graphs with those parameters is \PRLOCAL-complete. In
  order to solve weak splitting on $B$, we first solve the weak
  multicolor splitting with $C$ colors on $B$. Each node $u\in U$ is
  then guaranteed to have at least $2\log n$ (and thus also at least
  $\lceil 2\log n\rceil$) different colors among its neighbors. For
  $u\in U$, let $S(u)\subseteq R$ be a set of $\lceil2\log n\rceil$
  neighbors of $u$ such that all nodes in $S(u)$ have different
  colors. We transform the graph $B$ into a graph $B'$ by only keeping
  the $\lceil 2\log n\rceil$ edges for each $u \in U$ that connect $u$
  to its neighbors in $S(u)$. A valid weak splitting solution on $B'$
  is also a valid weak splitting solution on $B$ and we can therefore
  solve weak splitting on $B'$ instead of $B$. Since for each node
  $u\in U$, all its neighbors in $B'$ have different colors, any two
  nodes $v,w\in V$ at distance $2$ in $B'$ have different colors. The
  given coloring is therefore a proper partial $C$-coloring of the
  graph $B'^2$ in which each node in $V$ has a color. By using the
  method on \cite[Proposition III.2]{GHK18}, this coloring can be used
  to run an $\SLOCAL(2)$-algorithm on $B'$ in $O(C)$ rounds in the
  \LOCAL model as long as this \SLOCAL algorithm only needs to assign
  output values to the nodes in $V$ (i.e., to the colored nodes). By
  \Cref{lemma:weaksplittingcomplete}, the weak splitting problem on
  $B'$ has such an $\SLOCAL(2)$-algorithm and given the $C$-coloring
  the nodes in $V$, we can therefore compute a weak splitting of $B'$
  in $O(C)=O(\poly\log n)$ rounds deterministically in the \LOCAL
  model.
\end{proof}

\begin{theorem}\label{thm:multicolorsplit}
  Let the number of colors $C\in \mathbb{N}$ be such that $C\geq 2$
  and $C\leq \poly\log n$ and assume that
  $\lambda\geq \min\set{0.95,\frac{3}{C-1}}$. Then, for
  bipartite graphs $H=(L \cup R, E)$ where all nodes in $u\in L$ have
  degree $\deg(u)\geq \frac{\alpha}{\lambda}\cdot\ln^c(n)$
  for a sufficiently large constant $\alpha>0$ and some constant $c\geq 1$, $(C,\lambda)$-multicolor splitting is in
  \PRLOCAL and it is \PRLOCAL-complete if $\lambda \leq C^{-\eps}$ for
  some constant $\eps>0$ and if each node in $L$ has degree at least
  $\beta\ln^2n$ for a sufficiently large constant $\beta>0$.
\end{theorem}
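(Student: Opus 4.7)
The proof of \Cref{thm:multicolorsplit} has two parts and I would attack each by imitating the pattern of \Cref{thm:weakmulticolorsplit}. For the membership claim, consider the zero-round randomized algorithm in which every $v\in R$ picks a color in $[C]$ independently and uniformly at random. For a fixed $u\in L$ and color $x\in[C]$, the number of $u$'s neighbors of color $x$ is a binomial with mean $\mu=\deg(u)/C$, and we need it to stay below $\lceil\lambda\deg(u)\rceil=(1+\delta)\mu$ with $\delta=\lambda C-1$. The hypothesis $\lambda\geq\min\{0.95,3/(C-1)\}$ gives $\delta\geq 0.9$ when $C=2$ and $\delta\geq 2$ whenever $C\geq 3$, so a multiplicative Chernoff bound yields failure probability $\exp(-\Omega(\lambda\deg(u)))$. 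Plugging in $\deg(u)\geq(\alpha/\lambda)\ln^c n$ for a sufficiently large $\alpha$ makes this $\leq 1/\mathrm{poly}(n)$, and a union bound over the at most $n$ left-nodes and $C\leq\polylog n$ colors gives total failure probability $o(1)$; hence $(C,\lambda)$-MCS lies in \PRLOCAL.

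For the hardness direction the plan is to reduce the weak splitting problem of \Cref{lemma:weaksplittingcomplete} to $(C,\lambda)$-MCS. Given a weak splitting instance $B=(U\cup V,E)$ with $\deg(u)\geq\beta\ln^2 n$, first observe that $\lambda\geq 1/C$ is forced by $\lambda\geq\min\{0.95,3/(C-1)\}$, and together with $C\leq\polylog n$ this lets us choose $\beta$ (and $\eps$) large enough that $\beta\ln^2 n\geq(\alpha/\lambda)\ln^c n$, so $B$ satisfies the precondition of the assumed deterministic $\polylog n$-time $(C,\lambda)$-MCS algorithm. Running that algorithm on $B$ produces a coloring $c:V\to[C]$; since each color class contributes at most $\lceil\lambda\deg(u)\rceil$ of $u$'s neighbors and these counts sum to $\deg(u)$, every $u\in U$ sees at least $\lceil 1/\lambda\rceil\geq C^{\eps}$ distinct colors among its neighbors.

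In the main regime, where $C^{\eps}\geq 2\log n$ (the case covered by the polylog range of $C$ and the hypothesized $\eps$), the coloring $c$ is in particular a $C$-weak multicolor splitting of $B$, and I would reuse the reduction inside \Cref{thm:weakmulticolorsplit} verbatim: for every $u\in U$ keep one neighbor per color to form a subgraph $B'\subseteq B$ with left-degree $\geq 2\log n$; any two right-nodes at $B'$-distance two carry different $c$-colors, so $c$ is a proper partial $C$-coloring of $(B')^2$ on $V$; via \cite[Proposition III.2]{GHK18} this lets us execute the $\SLOCAL(2)$ weak splitting algorithm of \Cref{lemma:weaksplittingcomplete} on $B'$ in $O(C)=\polylog n$ deterministic \LOCAL rounds, and a weak splitting of $B'$ is automatically a weak splitting of $B$. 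The tiny-$C$ corner (notably $C=2$) is handled even more directly, since any $(2,\lambda)$-MCS with $\lambda<1$ and $\deg(u)\gg 1/(1-\lambda)$ is itself a weak splitting. The main technical obstacle is purely quantitative: the four constants $\alpha$, $\beta$, $c$, $\eps$ must be chosen consistently so that $\beta\ln^2 n$ dominates $(\alpha/\lambda)\ln^c n$ \emph{and} $C^{\eps}$ dominates $2\log n$, which uses both $\lambda\geq 3/(C-1)$ and $C\leq\polylog n$ in an essential way; no conceptually new ingredient beyond the proofs of \Cref{thm:weakmulticolorsplit}, \Cref{lemma:weaksplittingcomplete}, and the \SLOCAL-to-\LOCAL machinery of \cite{GHK18} is needed.
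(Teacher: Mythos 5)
Your membership argument is essentially equivalent to the paper's; the paper samples from a reduced palette of $C'=\min\{C,\lceil 3/\lambda\rceil\}$ colors rather than all $C$, but your use of all $C$ colors together with the observation that $\lambda>3/C$ (so $\lceil\lambda\deg(u)\rceil$ is a constant factor above the mean $\deg(u)/C$) gives the same exponential tail via Chernoff, and the union bound over $\leq n$ nodes and $C\leq\polylog n$ colors goes through. This half is fine.

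The hardness direction has a genuine gap: you cover only the two extremes $C^\eps\geq 2\log n$ (a single run of $(C,\lambda)$-MCS already yields a weak multicolor splitting) and $C=2$ (a single run is itself a weak splitting), but the theorem must also hold for $3\leq C<(2\log n)^{1/\eps}$, where one invocation of $(C,\lambda)$-MCS makes each $u\in L$ see at most $C<2\log n$ colors, which is nowhere near enough to invoke the machinery of \Cref{thm:weakmulticolorsplit}. For instance, with $C=3$ and $\lambda\leq 3^{-\eps}$, a single splitting yields at most $3$ colors per constraint node. The missing idea, and the core of the paper's hardness proof, is an \emph{iterated refinement}: run $(C,\lambda)$-MCS, then for each color class of each $u\in L$ spawn a virtual constraint node connected to the neighbors of that color (discarding virtual nodes of degree below $(\alpha/\lambda)\ln n$), run $(C,\lambda)$-MCS again on this new instance, and combine the new colors with the old ones. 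Repeating for $\lceil\log_{1/\lambda}(2\log n)\rceil$ rounds drives the per-color fraction down to $\max\{\lambda^i,1/(2\log n)\}$ while the total color count stays at $C^{\lceil\log_{1/\lambda}(2\log n)\rceil}\leq(2\log n)^{O(1/\eps)}=\polylog n$, which is exactly where $\lambda\leq C^{-\eps}$ is used. This iteration (and the degree bookkeeping it requires, which is why the theorem demands $\deg(u)\geq\beta\ln^2 n$: the virtual nodes must retain degree $\geq(\alpha/\lambda)\ln n$ through all rounds) is a new ingredient beyond \Cref{thm:weakmulticolorsplit} and \Cref{lemma:weaksplittingcomplete}, contrary to your claim that nothing conceptually new is needed.
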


\begin{proof}
  For $C=2$, the the \PRLOCAL-hardness follows directly from
  \Cref{lemma:weaksplittingcomplete} and since in this case,
  $\lambda\geq 0.95$, it is also straightforward to see that the
  problem is in \PRLOCAL if the minimum degree in $L$ is at least
  $\alpha\ln n$ for a sufficiently large constant $alpha>0$. We can
  therefore assume, w.l.o.g., that $C\geq 3$.  We first determine a
  number of color $C'$ as follows. If $\lambda\geq 2/3$, we choose
  $C'=3$, otherwise, we choose $C' := \lceil 3/\lambda\rceil$. Note
  that in both cases, we have $C'\leq C$. In the second case, this
  follows because we then have $\lambda \geq 3/(C-1)$ and thus
  $C'\leq 3/\lambda + 1 \leq C$.  To show that
  $(C,\lambda)$-multicolor splitting is in \PRLOCAL with the given
  parameters, consider the random process, where each node in $R$
  chooses one of $C'\leq C$ colors independently and uniformly at
  random. Let $u\in L$ be a node of degree
  $d\geq\frac{\alpha}{\lambda}\cdot\ln(n)$ (where $\alpha$ can later
  be chosen as a suitably large constant). We concentrate on one color
  $x$ of the $C'$ colors. Let $X\sim\mathrm{Bin}(d, 1/C')$ be the
  number of neighbors of $u$ that choose color $x$. We have
  \begin{equation}\label{eq:numcolorxbound}
  \Pr\left(X \geq  \lceil\lambda d\rceil \right) \leq
  {d \choose \lceil \lambda d\rceil}\cdot \frac{1}{C'^{\lceil\lambda
      d\rceil}}
  < \left(\frac{ed}{\lceil\lambda d\rceil C'}\right)^{\lceil\lambda d\rceil}
  \leq \left(\frac{e}{\lambda C'}\right)^{\lambda d}
  \leq \left(\frac{e}{\lambda C'}\right)^{\alpha\ln n}.
  \end{equation}
  The second inequality follows because ${n\choose k}<(en/k)^k$ for
  $k\geq 1$, The third inequality follows because for $x\geq 1$, $x^{-x}$ is a
  monotonically decreasing function. In order to show that a
  randomized algorithm exists, it is sufficient to show that for
  sufficiently large $\alpha>0$, the probability bound in
  \eqref{eq:numcolorxbound} is of the form $n^{-\Theta(\alpha)}$ and
  we thus need to show that $\frac{e}{\lambda C'}$ is a constant
  smaller than $1$.  Let us first consider the case, where $\beta\geq 0.95$. In this case,
  we have $C'=3$ and the claim follows because $e/(0.95 \cdot 3)
  =e/2.85 < 1$. Otherwise, we have $C'=\lceil 3/\lambda\rceil$ and
  thus $(e/\lambda C') \leq e/3$.

  It remains to prove that $(C,\lambda)$-multicolor splitting is
  \PRLOCAL-hard if $\lambda \leq C^{-\eps}$ and $c\geq 2$. We reduce
  from weak multicolor splitting on a bipartite graph $H=(L\cup R, E)$ to
  $(C,\lambda)$-multicolor splitting as follows. First note that if
  $\lambda \leq 1/(2\log n)$, a $(C,\lambda)$-multicolor splitting
  solution directly also solves weak multicolor splitting. If
  $\lambda > 1/(2\log n)$, our goal is to compute a
  $(C'',1/(2\log n))$-multicolor splitting on $H$ by using
  $\poly\log n$ instances of $(C,\lambda)$-multicolor
  splitting. Assume that the minimum degree any node $u\in L$ is at
  least $\beta \ln^2 n$ for a sufficiently large constant
  $\beta$. By \Cref{thm:weakmulticolorsplit}, we know that weak
  multicolor splitting is \PRLOCAL-complete for such graphs. The
  reduction consists of $\lceil\log_{1/\lambda}(2\log n)\rceil$
  iterations. We inductively prove that at the beginning of iteration $i\in \set{1,\dots,
    \lceil\log_{1/\lambda}(2\log n)\rceil}$, we are given a $(C^{i-1},
  \max{\lambda^{i-1}, 1/(2\log n)})$-multicolor splitting of $H$. The statment is clearly
  true for $i=1$ by just coloring each node in $R$ with a single
  color. For the $i^{\mathit{th}}$ iteration, each node $u\in L$,
  creates $C^{i-1}$ virtual nodes, one for each of the at most
  $C^{i-1}$ colors. The virtual node $u_x$ of $u$ corresponding to
  some color $x$ is connected to each neighbor $v\in R$ of $u$ that is
  colored with color $x$. We obtain the bipartite graph $H_i$ for the
  $(C,\lambda)$-multicolor splitting instance of iteration $i$ by
  taking the graph induced by the nodes in $R$ and the virtual nodes
  of degree at least $\frac{\alpha}{\lambda}\cdot\ln(n)$. Note that
  here, $n$ has to refer to the number of nodes of $H_i$, however
  since we can choose $\alpha$ sufficiently large, this is no
  problem. After running $(C,\lambda)$-splitting on $H_i$, each node
  in $R$ chooses a new color by combining its old color with the color
  computed in the current multicolor splitting instance. This results
  in a coloring with at most $C^i$ colors of the nodes in $R$. Because
  virtual nodes are split until their degree becomes
  $\frac{\alpha}{\lambda}\cdot\ln(n)$, after $i$ iterations, each node $u\in L$ has at most
  $\max\set{\lambda^i\cdot\deg(u), \frac{\alpha}{\lambda}\cdot\ln(n)}$
  neighbors of each color. Because we assume that $\deg(u)\geq
  \beta\ln^2 n$ for a
  sufficiently large constant $\beta$, this implies that we get a  $(C^i,
  \max{\lambda^i, 1/(2\log n)})$-multicolor splitting. This concludes
  the induction and it remains to prove that the total number of
  colors $C''=C^{\lceil \log_{1/\lambda}(2\log n)\rceil}$ is at most
  $\poly\log n$. However, this follows directly because we assumed
  that $\lambda \leq C^{-\eps}$ for some constant $\eps>0$.
\end{proof}

\hide{
\begin{lemma}[$(C,\lambda)$-multicolor splitting containment]\label{multicolor_containment}
The $(C,\lambda)$-multicolor splitting problem for any $\lambda$ and $C$ with $\lambda\geq2/C$ and $\delta\geq C\ln n$ is in \PSLOCAL.
\end{lemma}

\begin{proof}
We consider the following simple $0$-round randomized algorithm:\begin{center}\textit{Each node in $V$ picks a color in $\{1,\dots,\min\{C,n\}\}$ uniformly at random.}\end{center}
Fix a node $u\in U$ and a color $c\in C$. Let $d:=\deg(u)$ and $v_1,\dots,v_d\in V$ be $u$'s neighbors. For every $i\in\{1,\dots,d\}$ define the random variable $X_i$ by $X_i=1$ if $u$ picked color $c$ and $X_i=0$ otherwise. Let $X:=\sum\nolimits_{i=1}^d X_i$. It follows E$[X]=d/C$. With $\lambda\geq2/C$, a Chernoff bound yields $\pr{X>\lambda d}\leq\pr{X>2d/C}\leq e^{-\frac{d}{3C}}\leq n^{-3}$ for $d\geq C\ln n$.
	
Let $p_u$ be the probability that $u$ has more than $\lambda d$ neighbors of one color. A union bound over all colors yields $p_u\leq n^{-2}$. It follows $\sum\nolimits_{u\in U}p_u<1$. Additionally, each node can check whether it has more than $\lambda d$ neighbors of one color by looking at its $1$-hop neighborhood. In \cite{GHK18} it is shown that such an algorithm can be transformed to an \SLOCAL\ algorithm with locality $2$.
\end{proof}

\begin{lemma}[$(C,\lambda)$-multicolor splitting hardness]\label{multicolor_hardness}
For bipartite Graphs $B=(U\cup V,E)$ where all nodes in $U$ have degree $\delta/2<d(u)\leq\delta$, for any $\delta$ such that $c\ln^2 n\leq\delta=\log^{\bigO(1)}n$ for a sufficiently large constant $c$, the $(C,\lambda)$-multicolor splitting problem is \PSLOCAL-hard for all $C,\lambda$ with $C=\polylog n$ and $\log(C)\leq\alpha\log(1/\lambda)$ for a constant $\alpha>0$.
\end{lemma}
\begin{proof}
Assume there is a $\polylog n$ \LOCAL\ algorithm for the $(C,\lambda)$-multicolor splitting problem with $C$ and $\lambda$ such that $C=\polylog n$ and $\log(C)\leq\alpha\log(1/\lambda)$ for a constant $\alpha>0$. We show that then there is a $\polylog n$ \LOCAL\ algorithm for the $C$-weak multicolor splitting problem with a $C=\polylog n$\pb{$C\geq?$}. The statement then follows with \Cref{weak_multicolor_hardness}.

Given a bipartite graph $B=(U\cup V,E)$, compute a $(C,\lambda)$-multicolor splitting on $B$. Now each $u\in U$ has at most $\lambda\deg(u)$ neighbors of the same color, i.e., $u$ sees at least $1/\lambda$ different colors. For a color $c$, let $G_c$ be the subgraph induced by all nodes in $V$ with color $c$ and their neighbors in $U$. Compute a $(C,\lambda)$-multicolor splitting (with new colors) on $G_c$ for each color $c$ in parallel\pb{$(C,\lambda)$-multicolor splitting needs to be solvable on all these Graphs}. Repeat this process. In each step of this iteration, the number of different colors a node in $U$ sees is increased at least by the factor $1/\lambda$. After $i:=\lceil\log_{1/\lambda}(2\log n)\rceil$ iterations, we have $1/\lambda^i\geq2\log n$, i.e., every node in $U$ sees at least $2\log n$ colors. The number of colors used at this point is at most\[C^i\leq C\cdot C^{\log_{1/\lambda}(2\log n)}=C\cdot2^{\frac{\log C\log(2\log n)}{\log 1/\lambda}}=C\cdot(2\log n)^{\frac{\log C}{\log 1/\lambda}}\leq C\cdot(2\log n)^{\alpha}=\polylog n~.\]
Therefore we have also $i=\polylog n$, i.e., we have $\polylog n$ iterations each taking $\polylog n$ rounds, as we assumed the $(C,\lambda)$-multicolor splitting algorithm runs in $\polylog n$ rounds. In summary, we created a $\polylog n$ \LOCAL\ algorithm for the $C$-weak multicolor splitting problem with a $C=\polylog n$.
\end{proof}

\begin{corollary}[$(C,\lambda)$-multicolor splitting completeness]
For bipartite Graphs $B=(U\cup V,E)$ where all nodes in $U$ have degree $\delta/2<d(u)\leq\delta$, for any $\delta$ such that $c\ln^2 n\leq\delta=\log^{\bigO(1)}n$ for a sufficiently large constant $c$, the $(C,\lambda)$-multicolor splitting problem for any $\lambda$ and $C$ with $C=\polylog n$, $\lambda\geq2/C$, $\delta\geq C\ln n$ and $\log(C)\leq\alpha\log(1/\lambda)$ for a constant $\alpha>0$ is \PSLOCAL-complete.
\end{corollary}

\begin{lemma}[Weak multicolor splitting containment]
There is a $C=\polylog n$ such that the $C$-multicolor splitting problem is in \PSLOCAL\pb{on which class of graphs?}.
\end{lemma}

\begin{proof}
Choose $C'$ and $\lambda$ such that $\lambda\geq2/C'$, $C'=\polylog n$ and $\log(C')\leq\alpha\log(1/\lambda)$ for a constant $\alpha>0$. By \Cref{multicolor_containment}, there is a $(C',\lambda)$-multicolor splitting \PSLOCAL\ algorithm for bipartite graphs with $\delta\geq C'\ln n$ and with the reduction in the proof of \Cref{multicolor_hardness}, this algorithm can be transferred to a \PSLOCAL\ algorithm for the $C$-multicolor splitting problem for a $C=\polylog n$.\pb{How does this transfer work? Are there problems with the iterations in 3.3? Is it better to prove containment directly without reduction like in 3.1?}
\end{proof}

\begin{corollary}[Weak multicolor splitting completeness]

\end{corollary}
}

%%% Local Variables:
%%% mode: latex
%%% TeX-master: "bipartite_degree_splitting"
%%% End:

\section{Faster Splittings Imply Faster Coloring and MIS Algorithms}
\label{sec:parameter-preserving-reductions}
In this section, we explain that we can reduce the coloring problem
and also the MIS problem to the splitting problem, on (a subgraph of)
the same network. Hence, this reduction for instance preserves the
maximum degree of the network (or formally, it does not increase
it).

\newtheorem{observation}{Observation}

\newcommand{\TSP}{\ensuremath{T_{\textrm{USP}}}}
\newcommand{\TNSP}{\ensuremath{T_{\textrm{SP}}}}
\subsection{Vertex Coloring}

% \ym{Say a few word that this is the uniform variant of the splitting in GKM17. To be more precise...less confusion ....$\Rightarrow$ new names }
\paragraph{The Uniform Splitting Problem.}
Let $G = (V, E)$ be a graph with maximum degree $\Delta$ and minimum degree $\delta \geq \Delta/2$ and let $\eps > 0$.
In the splitting problem, the task is to divide the nodes in $V$ into two disjoint sets $R, B \subseteq V$.
The goal is that for each node $v$, the degree of the graphs induced by $R$ and $B$ is at most $(1/2 + \eps) d_G(v)$ and at least $(1/2 - \eps) d_G(v)$.
In the \emph{uniform} splitting problem, the input graph has a minimum degree of $\delta \geq \Delta / 2$.
% The goal is that the maximum degree of the graphs induced by $R$ and $B$ is at most $(1/2 + \eps) \Delta$ and minimum degree at least $(1/2 - \eps) \delta$.

\begin{remark}
	Consider the following slight modification of the uniform splitting problem.
	Instead of demanding an almost $\Delta$-regular graph, we may focus on general graphs and impose no restrictions on nodes of degree less than $\delta = \Delta/2$.
	It is clear that the uniform splitting problem is not easier than the modification.
	For a reduction from the modification to the original problem, consider a graph $G$ and add the following virtual gadgets to every node $v$ with $\deg(v) < \delta$.
	Construct a $\delta$-clique and add (virtual) edges from $\delta - \deg(v)$ nodes to $v$.
	The degree of $v$ becomes $\delta$ and the degrees of the virtual nodes are at most $\delta$.
	Then we can run a uniform splitting algorithm on the virtual graph and obtain a solution to the modified problem.
	The na\"{i}ve approach yields a graph of size $O(n \cdot \Delta)$, but the size can easily be reduced to $O(n)$.
	\label{remark: strong}
\end{remark}

\begin{lemma}\label{lemma:coloringreduction}
	Let $\mathcal{A}$ be an algorithm for the uniform splitting problem and let $\TSP(n, \eps)$ be its runtime.
	Then there is a $(1 + o(1))\Delta$-vertex-coloring algorithm with runtime $O(\log n \cdot \TSP(n, \eps) + \poly(\log n))$.
	%\ju{can be improved to $\log \Delta$ but need to take care of small $\Delta$ case.}
\end{lemma}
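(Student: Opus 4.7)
The plan is to iterate the uniform splitting algorithm $\mathcal{A}$ about $\log(\Delta/K)$ times for $K=\poly(\log n)$, thereby partitioning $V$ into roughly $\Delta/K$ induced subgraphs each of nearly regular degree $\sim K$, and then color these subgraphs in parallel on disjoint palettes using a fast $(d+1)$-coloring routine. This realizes the scheme already outlined in the introduction.

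First I would fix parameters. Set $K=\log^c n$ for a sufficiently large constant $c$ so that the $(\Delta+1)$-coloring algorithm of \cite{fraigniaud16} runs on a degree-$K$ graph in $\tilde O(\sqrt K)+O(\log^* n)=\poly(\log n)$ rounds, and set the splitting slack to $\eps'=\Theta(\eps/\log\Delta)$ so that the multiplicative errors accumulating over $L:=\lceil\log(\Delta/K)\rceil$ levels contribute only a factor $(1+2\eps')^L=1+o(1)$. If $\Delta\le\poly(\log n)$ the splitting phase is unnecessary and one colors directly in $\poly(\log n)$ rounds, so I may assume $\Delta=\omega(\poly(\log n))$.

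Second I would recurse on the outputs of $\mathcal{A}$: run $\mathcal{A}$ on $G$ to obtain $V=R\cup B$; in parallel run $\mathcal{A}$ on $G[R]$ and $G[B]$; and so on for $L$ levels. This produces a partition of $V$ into $2^L\le 2\Delta/K$ induced subgraphs. The splitting guarantee implies that the maximum degree of every level-$L$ subgraph is at most $\Delta\cdot(1/2+\eps')^L\le K(1+o(1))$, and a telescoping argument across the branching tree shows that the sum of the per-subgraph max degrees is at most $\Delta\cdot(1+2\eps')^L=(1+o(1))\Delta$. The invariant that allows reapplication of $\mathcal{A}$ at every level is that a uniform split of a graph with min/max degree ratio $\ge 1/2$ yields subgraphs whose min/max ratio is still at least $(1/2-\eps')/(1/2+\eps')\ge 1/2$, so the precondition of $\mathcal{A}$ persists. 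Any node whose degree in the current subgraph threatens to drop below the required threshold is handled by the virtual-gadget reduction of the preceding Remark, which reduces the unrestricted variant to the strict uniform form.

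Finally I would color the subgraphs in parallel using \cite{fraigniaud16}, each on its own palette whose size equals its max degree plus one. Because the palettes are disjoint and the subgraphs partition $V$, the combined coloring is proper; its total size is the sum of the per-subgraph max degrees plus the number of subgraphs, i.e.\ at most $(1+o(1))\Delta+2^L=(1+o(1))\Delta$, using $2^L\le 2\Delta/K=o(\Delta)$ when $K=\omega(1)$. The running time is $L$ splitting invocations plus one coloring call, giving $O(\log n)\cdot\TSP(n,\eps)+\poly(\log n)$ as claimed. The step I expect to be trickiest is propagating the $\delta\ge\Delta/2$ precondition through all $O(\log n)$ recursion levels: this requires carefully invoking the Remark's gadget construction at each split and checking that the accumulated $\pm\eps' d(v)$ slack never drives the min/max ratio of any subgraph below $1/2$, which ultimately reduces to the elementary inequality $(1/2-\eps')/(1/2+\eps')\ge 1/2$ holding comfortably for our choice of $\eps'$.
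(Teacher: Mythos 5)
Your proof follows the paper's approach exactly: recursively apply $\mathcal{A}$ for $O(\log\Delta)$ levels with slack $\eps = 1/\poly\log n$, then color the resulting $\poly\log n$-degree pieces on disjoint palettes via \cite{fraigniaud16}, with $(1+\eps)^{O(\log\Delta)} = 1+o(1)$ controlling the total palette size. One remark: the ratio you exhibit, $(1/2-\eps')/(1/2+\eps')$, is not the one that actually governs whether the $\delta\geq\Delta/2$ precondition survives a split --- after one split the relevant bound is $(1/2-\eps')\delta/\bigl((1/2+\eps')\Delta\bigr)\geq(1/2-\eps')/\bigl(2(1/2+\eps')\bigr)$, which falls strictly below $1/2$ for any $\eps'>0$ --- but your appeal to the virtual-gadget construction of the preceding Remark is indeed the right way to handle this, and it is a detail the paper's own proof leaves implicit.
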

\begin{proof}
	Suppose that $\Delta = \omega(\log \log n)$.
	Otherwise, we can directly run a $(\Delta + 1)$ coloring algorithm in $O(\log \log n) = O(\log n \cdot \TSP(n, \eps)))$ time~\cite{fraigniaud16}.
	Set $\eps = 1 / \log^2 n$.
	We apply algorithm $\mathcal{A}$ recursively $r = \log \Delta - \log \log n$ times until the maximum degree drops to $\Delta^* = \poly \log n$.
	This takes $O(r \cdot \TSP(n, \eps)) = O(\log \Delta \cdot \TSP(n, \eps))$ time.

	We obtain $2^{r}$ subgraphs with maximum degree $\Delta^* = 2^{-r} \cdot (1 + \eps)^r \cdot \Delta$.
	Now, we color the subgraphs in with disjoint color palettes, in $\poly(\log n)$ time, using the algorithm of ~\cite{fraigniaud16}.
	Notice that $2^r = o(\Delta)$.
	In total, the number of colors we require is
\begin{align*}
	2^r \cdot (\Delta^* + 1) 	& = 2^{r} \cdot 2^{-r} \cdot (1 + \eps)^{r} \cdot \Delta + 2^r = (1 + \eps)^{r} \cdot \Delta  + o(\Delta)\\
								& \leq e^{\eps \cdot r} \cdot \Delta + o(\Delta) = e^{r / \log^2 n} \cdot \Delta + o(\Delta) = e^{1 / \Omega(\log n)} \cdot \Delta + o(1) \\
								& = \Delta + o(\Delta) + o(\Delta) = (1 + o(1))\Delta \qedhere
\end{align*}
\end{proof}

\subsection{Maximal Independent Set}
% \paragraph{The (non-uniform) Strong Splitting Problem.}
% Let $G = (V, E)$ be a graph with maximum degree $\Delta$ and minimum degree $\delta \geq \Delta/2$ and let $\eps > 0$.
% In the \emph{strong splitting} problem, the task is to divide the nodes in $V$ into two disjoint sets $R, B \subseteq V$.
% The goal is that for each node $v$\ju{$\geq log n$?}, the degree of the graphs induced by $R$ and $B$ is at most $(1/2 + \eps) d_G(v)$ and at least $(1/2 - \eps) d_G(v)$.

\begin{lemma}
	Let $\TNSP(n, \eps)$ be the runtime of a non-uniform strong splitting algorithm.
	Then there is an MIS algorithm with runtime $O(\log^4 n \cdot \log^2 \Delta \cdot \TNSP(n, \eps) + \poly(\log n))$.
	\label{lemma: misreduction}
\end{lemma}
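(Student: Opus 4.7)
The plan is to reduce MIS to non-uniform strong splitting by using splittings as the derandomization primitive inside a Luby-style MIS algorithm. Each call to strong splitting replaces one bit of $O(1)$-wise independent randomness in the marking phase while maintaining the global Luby potential, and because strong splitting is invoked on (a subgraph of) the network itself, the reduction preserves the maximum degree, which is the section's main thesis.

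First, I would recall Luby's algorithm: in each round every vertex $v$ is marked with probability $1/(2 d(v))$, marked vertices with no higher-priority marked neighbor join the MIS, and a constant fraction of the current edges is removed in expectation, so $O(\log n)$ rounds suffice. Second, I would replace the marking by $O(\log n)$ $O(1)$-wise independent bits per vertex and invoke the method of conditional expectations: processing the shared bits one at a time, at each step I would construct an auxiliary splitting instance on the current residual graph whose output assigns each vertex a $0$/$1$ value. Strong splitting guarantees that, at every sufficiently high-degree vertex, the two color classes balance its neighborhood up to a $(1\pm\eps)/2$ factor, so an averaging argument shows that committing the next bit to whichever side is ``not worse'' cannot increase the Luby potential; after $O(\log n)$ such commitments the round is fully derandomized and still removes a constant fraction of the edges.

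To reach the target bound $O(\log^4 n \cdot \log^2 \Delta \cdot \TNSP(n,\eps) + \poly(\log n))$, I would nest this derandomized round inside outer loops that (i) run $O(\log^2 \Delta)$ degree-rebalancing phases so the marking probabilities $1/(2 d(v))$ stay well-defined as the graph evolves, and (ii) absorb an additional $O(\log n)$ factor from re-maintaining local degrees, priorities, and potential contributions after each bit is fixed. Vertices whose residual degree has dropped below the threshold required by non-uniform strong splitting are augmented with the virtual-clique gadget from Remark~\ref{remark: strong}, so the splitting primitive remains applicable throughout the recursion. Once the residual graph has maximum degree $\poly(\log n)$, the algorithm terminates with any $\poly(\log n)$-round deterministic MIS routine for bounded-degree graphs, accounting for the additive $\poly(\log n)$ term.

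The main obstacle I expect is aligning Luby's potential function with the exact $(1\pm\eps)/2$ balance that strong splitting provides: the potential must decompose as a sum of terms, each depending on a constant-radius neighborhood and symmetric under the $0$/$1$ swap up to an $\eps$ error, so that $O(\log n)$ successive bit commitments introduce only a $(1+o(1))$ multiplicative loss overall. Choosing $\eps = 1/\poly(\log n)$, as in Lemma~\ref{lemma:coloringreduction}, suffices for this. A secondary obstacle is that Luby's marking probability $1/(2 d(v))$ degenerates as the graph thins out, which is why the outer rebalancing loop and the virtual gadgets of Remark~\ref{remark: strong} are needed to keep $d(v)$ within a predictable range while preserving the precondition of non-uniform strong splitting; this degree-maintenance overhead is precisely what contributes the $\log^2 \Delta$ factor to the final runtime.
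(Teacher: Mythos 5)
Your proposal departs substantially from the paper's argument, and the route you chose has gaps that I don't see how to close. The paper does not touch Luby's algorithm or conditional expectations at all. It instead runs a ``heavy node elimination'' scheme: call a node heavy if its degree is at least $\Delta/2$, repeatedly apply strong splitting ($O(\log\Delta)$ times, keeping only one color class each time, but also passivating nodes with too few surviving neighbors) until every active node has $O(\log n)$ active neighbors while every heavy node still has $\Omega(\log n)$ of them, compute an MIS on that thinned graph in $\poly\log n$ time, and delete the MIS and its neighbors from the original graph. A counting argument (\cref{lemma: MISsize} plus \cref{lemma: heaviesdie}) shows this kills a $1/\polylog n$ fraction of heavy nodes per pass, so $O(\log^4 n)$ passes clear all heavy nodes, and this entire block is wrapped in $O(\log\Delta)$ degree-halving stages. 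That accounting directly produces $O(\log^4 n \cdot \log^2\Delta \cdot \TNSP)$.

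The central difficulty with your plan is that strong splitting does not supply the object that the method of conditional expectations actually needs. Conditional expectations requires, for each bit of the shared random seed, a way to evaluate (or pessimistically estimate) the expected Luby potential under the two settings of that bit and pick the better one. A strong splitting merely outputs a red/blue partition of vertices whose neighborhoods are balanced up to $(1\pm\eps)/2$; it neither evaluates a potential nor corresponds to fixing one bit of a $k$-wise independent seed, since a global seed bit is a single shared value while a splitting assigns a different color to each vertex. The step where you say ``committing the next bit to whichever side is `not worse' cannot increase the Luby potential'' is an averaging argument that presumes you can compare the two sides, which splitting does not let you do. A further mismatch is that Luby's marking probabilities $1/(2d(v))$ are heterogeneous and change as the graph evolves, while splitting only ever produces $1/2$-biased partitions; this is not a bookkeeping issue that a ``degree-rebalancing'' loop fixes, it is a different distribution. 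Finally, even if these issues could be resolved, your own accounting (one derandomized Luby round $= O(\log n)$ commitments $\times\ \TNSP$, times $O(\log n)$ Luby rounds, times $O(\log^2\Delta)$ rebalancing, times another $O(\log n)$ for maintenance) does not reproduce the target $O(\log^4 n \cdot \log^2\Delta)$ factor in a way you have justified; the exponents are being asserted rather than derived. I would suggest abandoning the Luby derandomization framing and working out the degree-reduction-plus-counting argument the paper uses.
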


\paragraph{The MIS Algorithm.}
Our MIS algorithm is divided into $O(\log \Delta)$ steps, where in each step we reduce the maximum degree by a factor of $2$.
Each of these steps consist of $O(\poly \log n)$ iterations of eliminating high degree nodes, where in each iteration, we reduce the number of high degree nodes by a polylogarithmic factor. Once the maximum degree is polylogarithmic, we can execute an MIS algorithm with runtime linear in the maximum degree on the remaining graph~\cite{barenboim2014distributed}.

\paragraph{Heavy Node Elimination.}
Consider a graph with maximum degree $\Delta$.
We call a node $v$ \emph{heavy}, if $\deg(v) \geq \Delta / 2$.
Let $G'$ be the graph induced by the heavy nodes and their neighbors.

We create a variable node for every node in $G'$ and a constraint node for each active node.
We connect the constraint node of a node $v$ to the variable nodes that correspond to the neighbors of $v$ in $G'$.
Then we use the splitting algorithm with $\eps = 1/\log^2 n$ to color the variable nodes red and blue.
All nodes whose variable node is blue become passive.
In addition, every node with fewer than $\log n$ red neighbors becomes passive.
The splitting step is repeated $2 \log \Delta - \log \log n - 1$ times to obtain a graph $G^*$ where all nodes have at most $\Delta \cdot (1/2 + \eps)^{\log \Delta - \log \log n - 1} < 4 \log n$ active neighbors and similarly, each heavy node has more than $\log n$ active neighbors.

We compute an MIS in $G^*$ and remove all the MIS nodes and their neighbors from the (original) graph.
This process is iterated until the set of heavy nodes is empty.

\begin{lemma}\label{lemma: MISsize}
	Let $G = (V, E)$ be a graph with maximum degree $\Delta$.
	Let $I \subseteq V$ be a maximal independent set.
	Then $|I| \geq |V| / (\Delta + 1)$.
\end{lemma}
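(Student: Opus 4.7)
The plan is to use the defining property of maximality: every vertex of $V \setminus I$ must have at least one neighbor in $I$. Indeed, if some $v \in V \setminus I$ had no neighbor in $I$, then $I \cup \{v\}$ would still be independent, contradicting maximality of $I$. Hence $I$ is also a dominating set, and the set $N(I) := \bigcup_{u \in I} N(u)$ covers all of $V \setminus I$.

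From here the argument is a one-line counting bound. Since each $u \in I$ has $\deg(u) \leq \Delta$ neighbors, we have $|N(I) \setminus I| \leq \sum_{u \in I} \deg(u) \leq \Delta \cdot |I|$. Combining this with $V = I \cup (V \setminus I) \subseteq I \cup N(I)$ gives
\[
|V| \leq |I| + \Delta \cdot |I| = (\Delta + 1)\,|I|,
\]
which rearranges to the claimed bound $|I| \geq |V|/(\Delta+1)$. There is essentially no obstacle here: the entire content lies in invoking maximality to turn $I$ into a dominating set, after which the degree bound finishes the proof in one inequality.
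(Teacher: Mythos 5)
Your proof is correct and is essentially the same argument as the paper's, just phrased as a direct domination-plus-degree count rather than the paper's "each MIS node gives a dollar to itself and its neighbors" charging scheme. Both hinge on maximality implying $I$ is dominating, and both bound $|V|$ by $(\Delta+1)|I|$ in one step.
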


\begin{proof}
	Consider the following process:
	Every node in the MIS gives one dollar to itself and its neighbors.	
	In total, at most $\ell = (\Delta + 1) \cdot |I|$ dollars are distributed.
	
	By definition of an MIS, every node gets at least one dollar.
	Hence, $\ell \geq n$ and furthermore, 
	\begin{equation*}
		 (\Delta + 1) \cdot |I| = \ell \geq n\quad\iff\quad |I| \geq \frac{n}{\Delta + 1} 		\qedhere
	\end{equation*}
\end{proof}

\begin{lemma}\label{lemma: heaviesdie}
	Let $I$ be an MIS on $G^*$.
	Then at least $\Omega(|V_H| / \log^3 n)$ heavy nodes are covered by $I$.
\end{lemma}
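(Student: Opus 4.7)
The plan is to combine Lemma~\ref{lemma: MISsize} applied to $G^*$ with the two structural guarantees that the iterated splittings enforce, namely that every node in $V(G^*)$ has at most $4\log n$ active neighbors and that every heavy node has more than $\log n$ active neighbors. I will write $N^*(v) := N_G(v) \cap V(G^*)$ for the active neighborhood of a heavy node $v$.

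First I would dispose of the active heavy case. Because $I$ is maximal in $G^*$, it dominates $V(G^*)$, so every heavy node that is still active is either in $I$ itself or has a $G^*$-neighbor in $I$; in either case it lies in $N_G[I]$ and is covered. If $|V_H \cap V(G^*)| \geq |V_H|/\log^3 n$ we are immediately done, so I may assume otherwise and concentrate on the passive heavy set $P := V_H \setminus V(G^*)$, which then has size at least $|V_H|/2$ and for which coverage requires $N^*(v) \cap I \neq \emptyset$.

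Next I would lower-bound $|V(G^*)|$ by double-counting the pairs $(v,u)$ with $v \in V_H$ and $u \in N^*(v)$. The sum $\sum_{u \in V(G^*)} |N_G(u) \cap V_H|$ is at least $|V_H|\log n$ by the heavy-side active-degree guarantee, while the matching upper bound $|N^*(v)| = O(\log n)$, read off from the iteration count $\log\Delta-\log\log n-1$ together with the splitting accuracy $\eps = 1/\log^2 n$, implies that every $u \in V(G^*)$ appears in at most $O(\log n)$ of the sets $N^*(v)$. Hence $|V(G^*)| = \Omega(|V_H|)$, and applying Lemma~\ref{lemma: MISsize} to $G^*$ (whose maximum degree is at most $4\log n$) yields $|I| \geq |V(G^*)|/(4\log n+1) = \Omega(|V_H|/\log n)$.

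Finally I would convert this MIS size into a count of distinct covered heavy nodes. Since $V(G^*) \subseteq V(G') = N_G[V_H]$, every $u \in I$ has at least one heavy node in $N_G[u]$; charging each $u \in I$ to such a heavy node gives $|N_G[I] \cap V_H| \geq |I|/K$, where $K := \max_{v \in V_H}|I \cap N_G[v]|$ bounds the multiplicity of the charges. For any heavy $v$, the set $I \cap N_G[v]$ is independent in $G^*$ and contained in $N^*(v) \cup \{v\}$, which has size $O(\log n)$ by the same splitting bound, so $K = O(\log n)$. Combining the two estimates yields $|N_G[I] \cap V_H| = \Omega(|V_H|/\log^2 n)$, which is even stronger than the claimed $\Omega(|V_H|/\log^3 n)$.

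The main obstacle is justifying the upper bound $|N^*(v)| = O(\log n)$ uniformly for every heavy node, including passive ones, from the splitting construction: the lower bound of $\log n$ is what is stated in the description of $G^*$, but the matching $O(\log n)$ upper bound, which is what makes both the counting of $|V(G^*)|$ and the multiplicity bound $K$ work, has to be extracted carefully from the $(1/2+\eps)$-splitting guarantee applied through $\log\Delta - \log\log n - 1$ iterations.
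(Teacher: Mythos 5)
Your overall strategy matches the paper's: apply \Cref{lemma: MISsize} to $G^*$ (max degree $4\log n$), charge each MIS node to a heavy node in its closed $G$-neighborhood with multiplicity $O(\log n)$, and relate $|V(G^*)|$ to $|V_H|$ by exploiting that heavy nodes retain many active neighbors. The multiplicity argument you give for $K=\max_v |I\cap N_G[v]|$ is correct and is essentially the paper's.

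The difference, and the gap, is in the double-counting step. You claim $|V(G^*)| = \Omega(|V_H|)$, which would improve the paper's implicit $|V(G^*)|\geq |V_H|/(4\log n)$ by a $\log n$ factor and hence yields a final bound $\Omega(|V_H|/\log^2 n)$ rather than $\Omega(|V_H|/\log^3 n)$. To get there you invoke: ``the matching upper bound $|N^*(v)| = O(\log n)$ \ldots implies that every $u \in V(G^*)$ appears in at most $O(\log n)$ of the sets $N^*(v)$.'' This is a non sequitur. The quantity $|N^*(v)|$ is the active degree of a heavy node $v$; the number of sets $N^*(v)$ containing a fixed active $u$ is $|N_G(u)\cap V_H|$, the heavy degree of $u$. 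These are the degrees on the two sides of the bipartite incidence between $V_H$ and $V(G^*)$, and bounding one side says nothing about the other (consider many heavy nodes all sharing the same single active neighbor $u$: each $|N^*(v)|=1$ yet $u$ lies in all of them). The bound you actually need — that every active node has at most $O(\log n)$ heavy neighbors — follows from the splitting guarantee $|N_G(u)\cap V(G^*)|\leq 4\log n$ only together with the fact that heavy nodes stay active, i.e.\ $V_H\subseteq V(G^*)$ (so heavy neighbors of $u$ are active neighbors of $u$), not from $|N^*(v)|=O(\log n)$. The obstacle you flag at the end is also mislocated: the $O(\log n)$ upper bound on $|N^*(v)|$ for \emph{all} nodes is stated as part of the construction of $G^*$ and is not where the trouble lies; what needs care is the heavy degree of active nodes.

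The paper sidesteps this by using a coarser count — only that every heavy node has at least one active neighbor while every active node touches at most $4\log n$ heavy nodes — which gives $|V_H|\leq 4\log n\cdot|V(G^*)|$ and settles for the $\log^3 n$ loss in the statement. If you repair your double count by correctly justifying the heavy-degree bound on active nodes (via $V_H\subseteq V(G^*)$), your sharper $|V(G^*)|=\Omega(|V_H|)$ does go through and the $\log^2 n$ bound would indeed be a small strengthening of \Cref{lemma: heaviesdie}; as written, however, the cited implication is false and the step is a genuine gap.
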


\begin{proof}
	Let $A$ be the graph induced by the active nodes.
	By the design of our algorithm, every active node has degree at most $4\log n$.
	We make the following observations.
	\begin{enumerate}
		\item At least $|A| / (5 \log n)$ nodes are selected to $I$. This follows by \cref{lemma: MISsize}.
		\item Every active node is either heavy or has at least one heavy neighbor. This is true by definition of $G'$.
	\end{enumerate}
	Combining the observations with the maximum degree of $A$, we get that at least $|A| / (4\log n \cdot 5\log n) = |A| / (20\log^2 n)$ heavy nodes are neighbored by a node in $I$.
	Furthermore, combining the observations with the fact the every heavy node has at most $4\log n$ active neighbors, we get that $|V_H| \leq |A| \cdot 4\log n$.
	Putting all of the above together, we have that at least 
	\[
		\frac{|A|}{20\log^2 n} \geq \frac{|V_H|}{4\log n} \cdot \frac{1}{ 20\log^2 n} = \frac{|V_H|}{80\log^3 n}
	\]
	heavy nodes have a neighbor in $I$.
\end{proof}

\begin{proof}[Proof of \cref{lemma: misreduction}]
	In every iteration of the heavy node elimination method, we perform at most $O(\log \Delta)$ degree splittings.
	Furthermore, by \cref{lemma: heaviesdie} and by observing that always at least one heavy node is eliminated, after $O(\log^4 n)$ repetitions, all the heavy nodes are eliminated.
	Our algorithm consists of $O(\log \Delta)$ executions of the heavy node eliminating, plus some $\poly(\log n)$ time to handle the resulting graphs with $\poly(\log n)$ degrees, hence resulting in a total runtime of $O(\log^4 n \cdot \log^2 \Delta \cdot \TNSP(n, \eps) + \poly(\log n))$.
\end{proof}

\section{Weak Splitting in High Girth Graphs}

We recall the shattering algorithm from \Cref{sec:rand}.

\medskip

\noindent\textbf{Shattering Algorithm:}  \textit{Coloring phase:} Each node in $V$ colors itself red with probability $1/4$, blue with probability $1/4$ and remains uncolored otherwise.  
\textit{Uncoloring phase:} Any $u\in U$ that has more than $3/4$ colored neighbors in $V$ uncolors all of its neighbors.

\begin{lemma}\label{lem:gap}
For bipartite graphs $B$ of girth at least $10$ with $\delta\geq c\sqrt{\ln n}$ and $\Delta\geq c'\ln r$ for sufficiently large constants $c$ and $c'$, after running the shattering algorithm on $B$, the following holds: The graph $H$ induced by the unsatisfied nodes in $U$ and the uncolored nodes in $V$ after the shattering has $\delta_{H}\geq6\cdot\rank_{H}$, w.h.p.
\end{lemma}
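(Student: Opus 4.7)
The plan is to split the bound $\delta_H \geq 6\rank_H$ into a lower bound on the $U$-side minimum degree of $H$ and an upper bound on the $V$-side rank of $H$. The lower bound is immediate from the uncoloring phase: any $u \in U$ either kept some colored neighbors (in which case at most $3/4$ of its neighbors are colored, so at least $\deg(u)/4 \geq \delta/4$ are uncolored) or actively uncolored its entire neighborhood (in which case all $\deg(u) \geq \delta$ of its $V$-neighbors are uncolored). Since the $V$-nodes of $H$ are exactly the uncolored ones, $\deg_H(u) \geq \delta/4$ for every unsatisfied $u$. It therefore suffices to show $\rank_H \leq \delta/24$ w.h.p., which together with $\delta_H \geq \delta/4$ gives $\delta_H \geq 6\rank_H$.

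To bound $\rank_H$, fix an uncolored $v \in V$ with $U$-neighbors $u_1,\dots,u_{d'}$, $d' \leq \rank$, and let $X$ be the number of $u_i$ that are unsatisfied, so that $\deg_H(v) = X$. The key use of girth is an independence argument for the events $B_i := \{u_i \text{ is unsatisfied}\}$. Inspecting the proof of \Cref{lem:shatteringProbability}, each $B_i$ is contained in $A_1(u_i) \cup A_2(u_i) \cup A_3(u_i) \cup A_4(u_i)$, and this union is determined entirely by the colors of $V$-nodes within graph distance $3$ of $u_i$. For distinct $i,j$, the $U$-neighbors $u_i$ and $u_j$ of $v$ are at distance at least $8$ in $B - v$, since a shorter connecting path combined with the length-$2$ path $u_i\text{-}v\text{-}u_j$ would form a cycle of length at most $9$, contradicting the girth-$10$ assumption. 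Hence the $3$-balls around $u_i$ and $u_j$ in $B - v$ are pairwise disjoint.

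Conditioning on $\{v \text{ is uncolored}\}$, the events $B_i$ are therefore determined by disjoint collections of independent color variables, so they are conditionally mutually independent; moreover, the proof of \Cref{lem:shatteringProbability} carries through under this conditioning and still gives $\Pr[B_i \mid v \text{ uncolored}] \leq e^{-\eta \Delta}$ for some constant $\eta > 0$, provided $\Delta \geq c' \ln \rank$ with $c'$ large. A binomial tail bound then yields, for $t := \lceil \delta/24 \rceil$,
\[
\Pr[X \geq t \mid v \text{ uncolored}] \;\leq\; \binom{\rank}{t} e^{-\eta\Delta t} \;\leq\; \left(\frac{e\rank \cdot e^{-\eta\Delta}}{t}\right)^t.
\]
By choosing $c'$ large we can absorb $\ln \rank$ into $\eta \Delta$ so that the inner ratio is at most $e^{-\eta' \Delta}$ for some $\eta'>0$. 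Using the almost-uniform degree assumption $\delta \geq \Delta/2$ from \Cref{sec:rand}, we have $t = \Theta(\Delta)$, giving an overall bound of $e^{-\Omega(\Delta^2)}$. The hypothesis $\delta \geq c\sqrt{\ln n}$ then makes this at most $n^{-\Omega(c^2)}$, and a union bound over all uncolored $v$ finishes the proof.

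The main obstacle is setting up the conditional independence correctly: one must (i) check that the "unsatisfied" predicate depends only on the $3$-neighborhood and (ii) apply the girth condition in $B-v$ rather than $B$, since $v$ itself is a shared neighbor of the $u_i$. Once these ingredients are in place the calculation is routine, and the quadratic $\Delta^2$ in the exponent is exactly what explains why this high-girth argument only requires $\delta=\Omega(\sqrt{\ln n})$ instead of the $\delta=\Omega(\ln n)$ needed in the general case.
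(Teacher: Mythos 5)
Your proposal is correct and follows essentially the same route as the paper's proof: both bound $\delta_H\geq\delta/4$ from the uncoloring rule, both reduce the claim to $\rank_H\leq\delta/24$, both use the girth-$10$ hypothesis to make the ``$u_i$ unsatisfied'' events conditionally mutually independent given $v$'s random choice, and both finish with the same binomial tail computation (the paper writes the per-node failure bound as $e^{-\eta\delta}$ rather than $e^{-\eta\Delta}$, but with $\delta\geq\Delta/2$ these coincide up to the constant in $\eta$). A minor stylistic improvement in your write-up is that you explicitly phrase the girth argument as a distance bound in $B-v$ and note that $v$ itself must be excised because it is a shared neighbor of all the $u_i$; the paper's wording (``no common node $\bar v\neq v$ in both $3$-hop neighborhoods, else a cycle of length $\leq 8$'') is aiming at the same point but is a little less crisp.
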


\begin{proof}
Let $v\in V$. For a neighbor $u$ of $v\in V$, let $E_u$ be the event that $u$ is satisfied under the condition that $v$ remained uncolored. $E_u$ only depends on the random bits of nodes $\neq v$ within $u$'s $3$-hop neighborhood (the random bits of nodes that are $3$ hops away from $u$ can cause a node that is $2$ hops away from $u$ to uncolor a neighbor of $u$). Two neighbors $u$ and $\bar{u}$ of $v$ do not have a common node $\bar{v}\neq v$ that is within both $u$'s and $\bar{u}$'s $3$-hop neighborhood as otherwise we would have a cycle of size at most $8$ contradicting that the graph has girth at least $10$. It follows that the events $E_u$ and $E_{\bar{u}}$ are independent (when conditioning on the event that $v$ remains uncolored). For a neighbor $u$ of $v$ let $X_u$ be the random variable with $X_u=1$ if $u$ is unsatisfied and $X_u=0$ otherwise and let $X:=\sum\nolimits_{u\in\NH(v)}X_u$. By choosing $c'$ in $\Delta\geq c'\ln r$ sufficiently large we get $\pr{\text{$u$ is unsatisfied}}\leq e^{-\eta\delta}$ for some $\eta>0$ (\Cref{lem:shatteringProbability}) and $\eta\delta\geq2\ln r+2$. We choose the $c$ in $\delta\geq c\sqrt{\ln n}$ such that $c\geq\sqrt{96/\eta}$. We get \[\pr{X\geq k}\leq\binom{\rank}{k}e^{-\eta\delta k}\leq\left(\frac{e\rank}{k}\right)^ke^{-\eta\delta k}\leq e^{-k(\delta\eta-\ln r-1)}\leq e^{-\frac{k\eta\delta}{2}}\] where the last inequality holds because $\eta\delta\geq2\ln r+2$. For $k=\delta/24$ we have \[\pr{X\geq\delta/24}\leq e^{-\frac{\delta^2\eta}{48}}\leq e^{-\frac{c^2\eta\ln n}{48}}\leq e^{-\frac{96\ln n}{48}}=n^{-2}~.\] It follows that $\rank_{H}\leq\delta/24$, w.h.p. By the construction of the shattering algorithm we have $\delta_{H}\geq\delta/4$ and hence $\delta_{H}\geq6\cdot\rank_{H}$.
\end{proof}

\begin{theorem}\label{thm:girth10det}
For bipartite graphs $B$ of girth at least $10$ with $\delta\geq c\sqrt{\ln n}$ and $\Delta\geq c'\ln r$ for sufficiently large constants $c$ and $c'$, there is a deterministic algorithm that solves the weak splitting problem in $\bigO(\Delta^2\rank^2+\polylog n)$ rounds.
\end{theorem}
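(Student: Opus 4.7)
The plan is to derandomize the shattering algorithm from Section~\ref{sec:rand} and then invoke Theorem~\ref{thm:DeltaLargersixr} on the residual instance. Concretely, I view the shattering algorithm as a $0$-round randomized \LOCAL algorithm: each $v\in V$ independently draws a label from $\{\text{red},\text{blue},\text{uncoloured}\}$, and both the uncoloring phase and the formation of the residual graph $H$ are deterministic functions of these labels. The proof of Lemma~\ref{lem:gap} already gives $\delta_H\geq\delta/4$ unconditionally, so the only obstruction to having $\delta_H\geq 6\rank_H$ is the event $\rank_H>\delta/24$, i.e., the existence of some $v\in V$ with more than $\delta/24$ unsatisfied $U$-neighbours. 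This event is a function of the random labels in a constant-radius neighbourhood of $v$ in $B$, and by Lemma~\ref{lem:gap} together with the hypotheses $\delta\geq c\sqrt{\ln n}$ and $\Delta\geq c'\ln \rank$, its per-vertex probability is at most $1/n^2$.

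Next, I would derandomize with the same mechanism already used in Lemma~\ref{lem:Easyderand}. Since the bad event is locally checkable with constant radius and the total expected number of violations is $o(1)$, \cite[Theorem~III.1]{GHK18} converts the $0$-round randomized procedure into a deterministic \SLOCAL\ algorithm of constant locality. Using \cite[Proposition~3.2]{GHKfull}, this \SLOCAL\ algorithm can be executed in \LOCAL given a proper colouring of a constant power $B^{O(1)}$ of the bipartite graph, and since $B^{O(1)}$ has maximum degree $(\Delta\rank)^{O(1)}$, such a colouring can be computed with the algorithm of \cite{BEK14} in $O\bigl((\Delta\rank)^{O(1)}+\log^{*}n\bigr)$ rounds. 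The overall outcome is a deterministic labelling of $V$ for which the residual graph $H$ satisfies $\delta_H\geq 6\rank_H$.

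Having a deterministic residual instance $H$ with $\delta_H\geq 6\rank_H$, I would then directly apply Theorem~\ref{thm:DeltaLargersixr} to compute a weak splitting on $H$ in $\polylog n$ rounds. Extending the resulting colouring of the uncoloured nodes of $V$ by the colours already assigned during the (derandomized) shattering yields a weak splitting of the original graph $B$: every $u\in U$ that was satisfied after shattering already sees both colours, while every $u\in U$ that entered $H$ is handled by Theorem~\ref{thm:DeltaLargersixr}. Summing everything up gives a deterministic runtime of $O(\Delta^{2}\rank^{2}+\polylog n)$.

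\textbf{Main obstacle.} The delicate step is the bookkeeping in the derandomization: I need to pin down the precise checking radius of the bad event ``$v$ has more than $\delta/24$ unsatisfied neighbours'' and align the power of $B$ that must be coloured with the locality of the resulting \SLOCAL\ algorithm, so that the colouring can be computed in $O(\Delta^{2}\rank^{2}+\log^{*}n)$ rounds and the \SLOCAL\ simulation itself also fits within that budget. Once this accounting is fixed, the rest of the argument is a routine composition of Lemma~\ref{lem:gap}, the derandomization framework of \cite{GHK18,GHKfull}, and Theorem~\ref{thm:DeltaLargersixr}.
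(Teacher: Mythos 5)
Your proposal matches the paper's proof essentially verbatim: derandomize the shattering algorithm (using the residual bound from Lemma~\ref{lem:gap}) via \cite[Theorem~III.1]{GHK18} into an $\SLOCAL$ algorithm of constant locality, simulate it in \LOCAL via \cite[Proposition~3.2]{GHKfull} given a colouring of a constant power of $B$, and finish by applying Theorem~\ref{thm:DeltaLargersixr} on the residual graph $H$ with $\delta_H\geq 6\rank_H$. The bookkeeping you flag as the main obstacle is resolved in the paper simply by treating the shattering step as a $1$-round randomized algorithm with checking radius $1$, which gives $\SLOCAL(4)$ and hence requires a colouring of $B^4$, whose maximum degree is $\Delta^2\rank^2$ — exactly accounting for the stated runtime.
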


\begin{proof}
The shattering algorithm is a $1$-round randomized algorithm with checking radius one (degree and rank of a node are locally checkable with radius one). By \cite[Theorem III.1]{GHK18}, this algorithm can be derandomized into an $\SLOCAL(4)$-algorithm. By \cite[Proposition 3.2]{GHKfull}, this can be transformed into an $O(C)$-\LOCAL algorithm if a $C$-coloring of $B^4$ is given. As the maximum degree of $B^4$ is $\Delta^2\rank^2$ we can compute the necessary coloring with $\bigO(\Delta^2\rank^2)$ colors and in $\bigO(\Delta^2\rank^2+\log^*n)$ rounds. Thus the runtime for the derandomization can be bounded by $\bigO(\Delta^2\rank^2+\log^*n)=O(\Delta^2\rank^2)$ as $\Delta\geq \delta\geq c\sqrt{\ln n}$.

After we obtained a subgraph $H$ of $B$ with $\delta_{H}\geq6\cdot\rank_{H}$, we can solve a weak splitting on $H$ in $\polylog n$ rounds (\Cref{thm:DeltaLargersixr}).
\end{proof}

\begin{theorem}\label{thm:girth10rand}
For bipartite graphs $B$ of girth at least $10$ with $\delta\geq c\sqrt{\ln(\Delta\rank\ln n)}$ and $\Delta\geq c'\ln r$ for sufficiently large constants $c$ and $c'$, there is a randomized algorithm that solves the weak splitting problem in $\bigO(\Delta^2\rank^2+\polylog(\Delta\rank\log n))$ rounds, w.h.p.
\end{theorem}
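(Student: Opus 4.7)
The plan is to mirror the proof of Theorem \ref{thm:girth10det}, but to replace the derandomization of the shattering step by a single direct randomized execution, and then handle the resulting residual graph by invoking Theorem \ref{thm:girth10det} deterministically on each of its (small) connected components. I will first run one round of the shattering algorithm on $B$. By Lemma \ref{lem:shatteringProbability}, together with the hypothesis $\Delta\geq c'\ln r$ for a sufficiently large constant $c'$, any fixed $u\in U$ remains unsatisfied with probability at most $(e\Delta r)^{-K}$ for an arbitrarily large constant $K$, and this bound is valid even conditional on adversarial random bits outside the $2$-hop neighborhood of $u$ in $B$. Applying the shattering lemma (Theorem \ref{thm:shattering}) to the graph on $U$ in which two nodes are adjacent iff they share a common neighbor in $V$ (whose maximum degree is at most $\Delta r$), I will conclude that each connected component of the residual bipartite graph $H$, induced by the unsatisfied $u\in U$ and the uncolored $v\in V$, has size $N_H=\poly(\Delta r,\log n)$, w.h.p.

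Next I will verify that each component of $H$ satisfies the hypotheses of Theorem \ref{thm:girth10det} with $N_H$ in place of $n$. The girth-$10$ property is inherited from $B$; by the uncoloring rule of the shattering, each $u$-node of $H$ has $H$-degree at least $\delta/4$; and since $N_H=\poly(\Delta r,\log n)$ gives $\ln N_H=\Theta(\ln(\Delta r\ln n))$, the hypothesis $\delta\geq c\sqrt{\ln(\Delta r\ln n)}$ translates, for a suitable choice of $c$, into $\delta_H\geq c_0\sqrt{\ln N_H}$, while $\Delta\geq c'\ln r$ propagates to $\Delta_H\geq c'\ln r_H$ by appropriately enlarging the constants in the theorem statement.

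I will then run Theorem \ref{thm:girth10det} in parallel on each component of $H$. On a graph of size $N_H$ with maximum degrees at most $\Delta$ and $r$, this takes $O(\Delta^2 r^2+\polylog N_H)=O(\Delta^2 r^2+\polylog(\Delta r\log n))$ rounds. The resulting coloring of the uncolored $v$-nodes, together with the colors fixed by the shattering, yields a weak splitting of $B$: every $u\in U$ is either already satisfied by the shattering (it has a red and a blue colored neighbor), or becomes satisfied by the new coloring of its uncolored neighbors in its component of $H$. Adding the $O(1)$ cost of the shattering gives the claimed runtime.

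The main technical obstacle will be the bookkeeping in the second paragraph: ensuring that the degree/rank hypotheses of Theorem \ref{thm:girth10det} hold on each component despite starting from the weaker hypothesis $\delta\geq c\sqrt{\ln(\Delta r\ln n)}$. This requires carefully tracking how the minimum degree is reduced by the factor of $4$ during shattering and how the ``effective $n$'' parameter $N_H$ inside $\sqrt{\ln(\cdot)}$ shrinks from $n$ to the polynomial bound $\poly(\Delta r,\log n)$, which is exactly what makes the weaker hypothesis sufficient.
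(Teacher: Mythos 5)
Your proposal is correct and follows essentially the same approach as the paper: run the randomized shattering algorithm once, invoke Lemma \ref{lem:shatteringProbability} and Theorem \ref{thm:shattering} to conclude the residual components have size $\poly(\Delta,\rank,\log n)$, observe that the uncoloring step preserves $\delta_H\geq\delta/4$ (so the hypotheses of Theorem \ref{thm:girth10det} hold with $n_H$ in place of $n$ after adjusting constants), and finish by applying Theorem \ref{thm:girth10det} on each component. The paper's write-up is more terse (it simply cites the proof of Theorem \ref{thm:mainweaksplit} for the component-size bound), but the reasoning is identical.
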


\begin{proof}
We first run the splitting algorithm on $B$. The graph $H$ induced by the unsatisfied nodes in $U$ and the uncolored nodes in $V$ after the shattering has connected components of maximum size $n_H=\poly(\Delta,\rank,\log n)$, w.h.p. (cf. proof of \Cref{thm:mainweaksplit}). By the construction of the shattering algorithm we have $\delta_H\geq\delta/4\geq\frac{c}{4}\sqrt{\ln (\Delta\rank\ln n)}=\bar{c}\sqrt{\ln n_H}$ with $\bar{c}$ sufficiently large if $c$ was chosen sufficiently large. We also have $\Delta_H\geq\Delta/4\geq\frac{c'}{4}\ln r\geq c''\ln r_H$ with $c''$ sufficiently large if $c'$ was chosen sufficiently large. Therefore, we can apply the deterministic algorithm from \Cref{thm:girth10det} with runtime $\bigO(\Delta_H^2\rank_H^2+\polylog n_H)=\bigO(\Delta^2\rank^2+\polylog(\Delta\rank\log n))$ on these components.
\end{proof}

\clearpage
\bibliographystyle{alpha}
\bibliography{references}

\end{document}